\tikzstyle{Template_A}=[draw,circle,thick, minimum size = 0.5cm, color=black,inner sep=0pt]
\tikzstyle{Template_B}=[color=black,draw]
\tikzstyle{vertex}=[draw,circle,fill=black, color=black,inner sep=1.5pt]
\tikzstyle{vertex2}=[draw,circle,fill,inner sep=2pt]
\tikzstyle{vertex3}=[draw,circle,fill,inner sep=0.5pt]
\tikzstyle{S&L}=[draw,circle,minimum height=0.5cm]
\tikzstyle{labeled}=[draw,circle,inner sep=0.35pt]
\tikzstyle{boite}=[draw,rectangle,rounded corners=3pt]
\newtheorem{DE}{Definition}[section]
\newtheorem{theorem}[DE]{Theorem}
\newtheorem{lemma}[DE]{Lemma}
\newtheorem{remark}[DE]{Remark}
\newtheorem{property}[DE]{Property}
\theoremstyle{break}\theorembodyfont{\rmfamily}}
\newcounter{claim}[theorem]
\newenvironment{proof}[1][]%
{\noindent{\it Proof. }{#1}{}}{\qed\vspace{2ex}}
\newenvironment{claim}[1][]%
{\refstepcounter{claim}\vspace{1ex} {(\it\arabic{claim}) {#1}{}}\it}{\vspace{2ex}}
\newenvironment{proofclaim}[1][]%
{\noindent {}{#1}}{This proves~(\arabic{claim}).\vspace{2ex}}
\newcommand {\sm} {\setminus}
\newcommand{\qed}{\relax\ifmmode\hskip2em\Box\else\unskip\nobreak\hfill$\Box$\fi}
\newcommand {\PC} {{\mathcal P}}
\newcommand\encircle[1]{%
  \tikz[baseline=(X.base)] 
    \node (X) [draw, shape=circle,inner sep=-0.3,outer sep=0.9ex] {\strut #1};
    }
\newcommand {\WC} {\ \encircle{w} \ }
\newcommand {\RT}{$(\star)$}
\author{Cléophée Robin\thanks{Normandie Univ, UNICAEN, ENSICAEN, CNRS, GREYC, 14000 Caen, France\\ Current affiliation: Université Paris Cité, CNRS, IRIF, F-75013, Paris, France} \hspace{0.3ex} and Eileen Robinson\thanks{Université libre de Bruxelles, Belgium}\\
{\small {\footnotesize\faEnvelope[regular]}~cleophee.robin@irif.fr \hspace{2ex} {\footnotesize\faEnvelope[regular]}~eileen.robinson@ulb.be}}
\title{Coloring bridge-free antiprismatic graphs}
\date{V1 August 2024, V2 February 2025}
\begin{document}

\maketitle

\begin{abstract}
The coloring problem is a well-researched topic and its complexity is known for several classes of graphs. However, the question of its complexity remains open for the class of antiprismatic graphs, which are the complement of prismatic graphs and one of the four remaining cases highlighted by Lozin and Malishev. In this article we focus on the equivalent question of the complexity of the clique cover problem in prismatic graphs. 

A graph $G$ is \textit{prismatic} if for every triangle $T$ of $G$, every vertex of $G$ not in $T$ has a unique neighbor in $T$. 
A graph is \textit{co-bridge-free} if it has no $C_4+2K_1$ as induced subgraph.
We give a polynomial time algorithm that solves the clique cover problem in co-bridge-free prismatic graphs. It relies on the structural description given by Chudnovsky and Seymour, and on later work of Preissmann, Robin and Trotignon.

We show that co-bridge-free prismatic graphs have a bounded number of disjoint triangles and that implies that the algorithm presented by Preissmann et al. applies. 
\end{abstract}

\section{Introduction}\label{s:intro}
In this article, we study the coloring problem for undirected simple graphs. A graph $G$ is defined as the couple of vertices $V(G)$ and edges $E(G)\subseteq V(G)\times V(G)$. We just use $V$ and $E$ if there is no ambiguity about the graph.
A \emph{$k$-coloring of a graph $G$} is a function $c:V(G)\rightarrow \{1, \dots, k\}$ such that for all $uv\in E(G)$, $c(u)\neq c(v)$. The coloring problem is the problem whose input is a graph $G$ and whose output is an integer $k$ such that  $G$ admits a $k$-coloring and $k$ is minimum. This problem is NP-hard and it is the object of much attention. Its complexity is refined for several classes of graphs.  

A \textit{clique} is a set of vertices that are pairwise adjacent. A \textit{stable set} is a set of vertices that are pairwise non-adjacent. The graphs $P_k$, $C_k$, $K_k$ are respectively the path, the cycle and the complete graph on $k$ vertices. The graph $K_{k, l}$ is the complete bipartite graph with one side of the bipartition of size $k$ and the other side of size $l$.
For two graphs $G$ and $H$ and an integer $k$, $G+H$ denotes the disjoint union of $G$ and $H$ and  $kG$ denotes the disjoint union of $k$ copies of $G$. 

For a given graph $H$, a graph $G$ is said to be \emph{$H$-free} if it does not contain any induced subgraph isomorphic to $H$. When $\cal H$ is a set of graphs, $G$ is \emph{$\cal H$-free} if $G$ is $H$-free for all
$H\in \cal H$. 

Král' et al.~\cite{daniel_kral_complexity_2001} proved the following dichotomy: the coloring problem for $H$-free graphs is polynomial time solvable if $H$ is an induced subgraph of $P_4$ or an induced subgraph
of $K_1+P_3$, and NP-hard otherwise. This motivated the systematic study of the coloring problem restricted to $\{H_1, H_2\}$-free graphs for all possible pairs of graphs $H_1$, $H_2$, or even to $\cal H$-free graphs in general.  The complexity has been shown to be polynomial or NP-hard for all sets of graphs on at most four vertices (see Figure~\ref{f:g4}), apart for the following four cases that are still open (see~\cite{golovach_coloring_2012} and \cite{v.v._lozin_vertex_2015}).

\begin{itemize}
\item ${\cal H = } \{K_{1, 3}, 4K_1\}$
\item ${\cal H = }\{K_{1, 3}, 2K_1+K_2\}$
\item ${\cal H = }\{K_{1, 3}, 2K_1+K_2, 4K_1\}$
\item ${\cal H = }\{C_4, 4K_1\}$
\end{itemize}

\begin{figure}
\centering
	\begin{tikzpicture}[scale=0.8]
	\node[vertex] (a) at (-5,1.5) {};
\node[vertex](b) at (-6,1.5) {};
\node[vertex] (c) at (-7,1.5) {};
\node[vertex] (d) at (-8,1.5) {};
\draw[-] (a) -- (b); 
\draw[-] (b) -- (c);
\draw[-] (c) -- (d); 
\node[-] (name) at (-6.5,1) {$P_4$}; 

\node[vertex] (a) at (0,0) {};
\node[vertex](b) at (0,-1) {};
\node[vertex] (c) at (-1,-1) {};
\node[vertex] (d) at (-1,0) {};
\draw[-] (a) -- (d); 
\draw[-] (b) -- (d);
\draw[-] (c) -- (d); 
\node[-] (name) at (-0.5,-1.5) {claw  $ = K_{1,3}$};

\node[vertex] (a) at (-3,0) {};
\node[vertex](b) at (-4,0) {};
\node[vertex] (c) at (-3,-1) {};
\node[vertex] (d) at (-4,-1) {};
\draw[-] (b) -- (c); 
\draw[-] (a) -- (b);  
\draw[-] (b) -- (d);
\draw[-] (c) -- (a);   
\node[-] (name) at (-3.5,-1.54) {paw};

\node[vertex] (a) at (-6,0) {};
\node[vertex](b) at (-7,-1) {};
\node[vertex] (c) at (-6,-1) {};
\node[vertex] (d) at (-7,0) {};
\draw[-] (a) -- (d);
\draw[-] (a) -- (c); 
\draw[-] (d) -- (b); 
\draw[-] (c) -- (b); 
\node[-] (name) at (-6.5,-1.5) {$C_4=K_{2,2}$};

\node[vertex] (a) at (-9,0) {};
\node[vertex](b) at (-10,-1) {};
\node[vertex] (c) at (-9,-1) {};
\node[vertex] (d) at (-10,0) {};
\draw[-] (a) -- (d);
\draw[-] (a) -- (c); 
\draw[-] (d) -- (b); 
\draw[-] (c) -- (b); 
\draw[-] (a) -- (b);
\node[-] (name) at (-9.5,-1.5) {diamond}; 

\node[vertex] (a) at (-12,0) {};
\node[vertex](b) at (-13,-1) {};
\node[vertex] (c) at (-12,-1) {};
\node[vertex] (d) at (-13,0) {};
\draw[-] (d) -- (b); 
\draw[-] (d) -- (c); 
\draw[-] (a) -- (d); 
\draw[-] (a) -- (b);
\draw[-] (a) -- (c);
\draw[-] (b) -- (c);   
\node[-] (name) at (-12.5,-1.5) {$K_4$};

\node[vertex] (a) at (0,-2.5) {};
\node[vertex](b) at (0,-3.5) {};
\node[vertex] (c) at (-1,-2.5) {};
\node[vertex] (d) at (-1,-3.5) {}; 
\draw[-] (b) -- (a);
\draw[-] (d) -- (a); 
\draw[-] (d) -- (b); 
\node[-] (name) at (-0.5,-4) {$K_3+K_1$}; 

\node[vertex] (a) at (-3,-2.5) {};
\node[vertex](b) at (-4,-2.5) {};
\node[vertex] (c) at (-3,-3.5) {};
\node[vertex] (d) at (-4,-3.5) {};
\draw[-] (b) -- (d); 
\draw[-] (a) -- (b);    
\node[-] (name) at (-3.5,-4) {$P_3+K_1$};

\node[vertex] (a) at (-6,-2.5) {};
\node[vertex](b) at (-7,-3.5) {};
\node[vertex] (c) at (-6,-3.5) {};
\node[vertex] (d) at (-7,-2.5) {};
\draw[-] (a) -- (c); 
\draw[-] (b) -- (d); 
\node[-] (name) at (-6.5,-4) {$2K_2$};

\node[vertex] (a) at (-9,-2.5) {};
\node[vertex](b) at (-10,-3.5) {};
\node[vertex] (c) at (-9,-3.5) {};
\node[vertex] (d) at (-10,-2.5) {};
\draw[-] (d) -- (c);
\node[-] (name) at (-9.5,-4) {$2K_1+K_2$}; 

\node[vertex] (a) at (-12,-2.5) {};
\node[vertex](b) at (-13,-3.5) {};
\node[vertex] (c) at (-12,-3.5) {};
\node[vertex] (d) at (-13,-2.5) {};
\node[-] (name) at (-12.5,-4) {$4K_1$};

\end{tikzpicture}	
\caption{All graphs on $4$ vertices\label{f:g4}}
\end{figure}

Lozin and Malyshev~\cite{v.v._lozin_vertex_2015} noted that a $\{K_{1, 3}, 2K_1+K_2\}$-free graph is either $4K_1$-free or has no edges. Therefore, $\{K_{1, 3}, 2K_1+K_2, 4K_1\}$-free graphs are essentially equivalent to $\{K_{1, 3}, 2K_1+K_2\}$-free graphs, in the sense that the complexity of the coloring problem is the same for both classes.

The $\{K_{1, 3}, 2K_1+K_2, 4K_1\}$-free graphs were first introduced in the context of claw-free\footnote{The \emph{claw} is another name for $K_{1, 3}$.} graphs by Chudnovsky and Seymour. The two first articles of the series (see~\cite{ChudAndSey1} and \cite{ChudAndSey2}) describe the so-called \emph{prismatic graphs} as the complement of a subclass of claw-free graphs.

A \emph{triangle} in a graph is a set of three pairwise adjacent vertices. A graph $G$ is \textit{prismatic} if for every triangle $T$ of $G$, every vertex of $G$ which is not in $T$ has a unique neighbor in $T$. 
Observe that if $\{s_1, s_2, s_3\}$ and $\{t_1, t_2, t_3\}$ are two vertex-disjoint triangles in a prismatic graph $G$, then there is a perfect matching in $G$ between $\{s_1, s_2, s_3\}$ and $\{t_1, t_2, t_3\}$, so that $\{s_1, s_2, s_3, t_1, t_2, t_3\}$ induces the complement of a $C_6$, commonly called a \emph{prism} (see Figure~\ref{f:prism}). This is where the name prismatic comes from.

\begin{wrapfigure}{r}{0.33\textwidth}
\begin{center}
\begin{tikzpicture}
\begin{scope}[xshift=0cm,yshift=0cm,scale=0.9]
		\node[-] (a) at (0,0) {$s_1$};
		\node[-] (b) at (0,2) {$s_2$};
		\node[-] (c) at (1,1) {$s_3$};
		\node[-] (d) at (4,0) {$t_1$};
		\node[-] (e) at (4,2) {$t_2$};
		\node[-] (f) at (3,1) {$t_3$};
		\draw[-] (a) -- (b);
		\draw[-] (a) -- (c);
		\draw[-] (c) -- (b);
		\draw[-] (d) -- (e);
		\draw[-] (d) -- (f);
		\draw[-] (e) -- (f);
		\draw[-] (a) -- (d);
		\draw[-] (b) -- (e);
		\draw[-] (c) -- (f);
\end{scope}
\end{tikzpicture}
\end{center}
\caption{Prism\label{f:prism}}
\end{wrapfigure}

A graph is \emph{antiprismatic} if its complement is prismatic.
It is straightforward to check that antiprismatic graphs are precisely $\{K_{1, 3}, 2K_1+K_2, 4K_1\}$-free graphs. 

Chudnovsky and Seymour gave a full structural description of prismatic graphs, and therefore of their complement.  They showed that the class can be divided into two subclasses to be defined later: the orientable prismatic graphs~\cite{ChudAndSey1}, and the non-orientable prismatic graphs~\cite{ChudAndSey2}.
It is natural to ask whether this description yields a polynomial time algorithm to color antiprismatic graphs.

The \emph{clique cover problem} is the problem of finding, in an input graph $G$, a minimum number of cliques that partition $V(G)$. It is equivalent to the coloring problem for the complement of the graph. It is therefore NP-complete in the general case. Our work is about the coloring problem for antiprismatic graphs.  However, it is more convenient to view it as a study of the clique cover problem for prismatic graphs.

A \textit{hitting set of the triangles} of a graph $G$ is a set of vertices intersecting every triangle of $G$. In this paper, we only consider hitting set of the triangles, therefore, for simplicity, we just say hitting set. We denote by $\Lambda(G)$ the minimum size of a hitting set of $G$.

Preissmann et al. proved in~\cite{preissmann_complexity_2021} that all non-orientable prismatic graphs can be clique-covered in $O(n^{7.5})$. To do so, they proved that every non-orientable prismatic graph either is a subgraph of the \textit{complement of the Schl\"afli graph} (to be defined later) or admits a hitting set of size at most $5$ (i.\,e.\ $\Lambda(G)\leq 5$). 

Such a result can not be stated for the whole class of prismatic graphs as the number of disjoint triangles in orientable prismatic graphs is unbounded. Informally, the existence of an orientation allows any given number of triangles to be “stacked” in a clean way. For a proper example of such structure, the reader can refer to the path and cycles of triangles graph introduced in~\cite{ChudAndSey1} and described again in section~\ref{s:Cycle_Path} and notice that the length  of the initial path or cycle is unbounded.

\begin{wrapfigure}{L}{0.5\textwidth}

\begin{tikzpicture}
\begin{scope}[xshift=0cm,yshift=0cm,scale=0.7]
		\node[vertex] (a) at (0.25,0) {};
		\node[vertex] (b) at (1,0) {};
		\node[vertex] (c) at (3,0) {};
		\node[vertex] (d) at (3.75,0) {};
		\node[vertex] (e) at (2,1.25) {};
		\node[vertex] (f) at (2,-1.25) {};
		\node[-] (name) at (2,-2.2) {Bridge};
		\draw[-] (e) -- (a);
		\draw[-] (e) -- (b);
		\draw[-] (e) -- (c);
		\draw[-] (e) -- (d);
		\draw[-] (a) -- (f);
		\draw[-] (b) -- (f);
		\draw[-] (d) -- (f);
		\draw[-] (c) -- (f);
		\draw[-] (e) -- (f);
		\draw[-] (a) -- (b);
		\draw[-] (c) -- (d);
\end{scope}

\begin{scope}[xshift=3.9cm,yshift=0cm,scale=0.7]
		\node[vertex] (b) at (0,-0.75) {};
		\node[vertex] (a) at (1.5,-0.75) {};
		\node[vertex] (c) at (0,0.75) {};
		\node[vertex] (d) at (1.5,0.75) {};
		\node[vertex] (e) at (2.75,-0.5) {};
		\node[vertex] (f) at (2.75,0.5) {};
		\node[-] (name) at (1.3,-2.2) {Co-bridge};
		\draw[-] (b) -- (a);
		\draw[-] (c) -- (b);
		\draw[-] (d) -- (c);
		\draw[-] (a) -- (d);
\end{scope}
\end{tikzpicture}
\caption{Bridge and Co-bridge}\label{f:bridge_et_co-bridge}
\end{wrapfigure}

Beineke~\cite{BEINEKE1970129} showed that the $K_{1,3}$ is one of the $9$ forbidden induced subgraphs for the characterisation of line graphs\footnote{$L(G)$ is the line graph of a graph $G$ if $V(L(G))=E(G)$, and $uv\in E(L(G))$ if and only if $u$ and $v$ share a common vertex in $G$.}. The coloring problem when restricted to line graphs is NP-Hard~\cite{Holyer1981TheNO} and various combinations of the $9$ forbidden induced subgraphs for line-graphs have been studied. Another forbidden induced subgraph for line graphs is the \textit{bridge} (see Figure~\ref{f:bridge_et_co-bridge}).

In this paper, we focus on the complement class of bridge-free antiprismatic graphs: the co-bridge-free prismatic graphs. These graphs are exactly the graphs in $\{K_3+K_1, \mbox{diamond}, K_4, C_4+K_2\}$-free.

Our main result is that Preissmann et al.'s statement is true for every co-bridge-free prismatic graph.

\begin{restatable}{theorem}{Main}\label{t:la_total}
If $G$ is a co-bridge-free prismatic graph then $G$ admits a hitting set of cardinality at most $5$ or $G$ is a Schl\"afli-prismatic graph. 
\end{restatable}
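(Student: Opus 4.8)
The plan is to split along the Chudnovsky--Seymour dichotomy and to isolate the orientable case as the only real source of difficulty. If $G$ is non-orientable prismatic, then the theorem of Preissmann et al.~\cite{preissmann_complexity_2021} applies directly: $G$ is either a subgraph of the complement of the Schl\"afli graph, hence Schl\"afli-prismatic, or satisfies $\Lambda(G)\le 5$. Since the Schl\"afli-prismatic graphs arise inside the non-orientable class used by Preissmann et al., no orientable prismatic graph is Schl\"afli-prismatic; thus for orientable $G$ it suffices to prove the stronger statement $\Lambda(G)\le 5$. This reduction puts the entire burden on orientable prismatic graphs, and the co-bridge-free hypothesis will be used precisely to tame them.

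The heart of the argument is a new structural lemma: a co-bridge-free prismatic graph contains only a bounded number of pairwise disjoint triangles. I would prove this by combining two facts. First, in a prismatic graph any two vertex-disjoint triangles $T$ and $T'$ induce a prism $\overline{C_6}$, and every prism contains an induced $C_4$: for the prism on $\{s_1,s_2,s_3\}$, $\{t_1,t_2,t_3\}$ with matching $s_it_i$, the set $\{s_1,s_2,t_2,t_1\}$ induces a $C_4$, its only non-edges being the diagonals $s_1t_2$ and $s_2t_1$. Second, any further disjoint triangle $R$ has each of its vertices adjacent to exactly one vertex of $T$ and one of $T'$, so a vertex of $R$ fails to touch this $C_4$ exactly when its two matching-neighbours are the two ``unused'' vertices $s_3$ and $t_3$. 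Exploiting the freedom to choose both the prism and the pair of vertices forming the $C_4$, I would select, among triangles lying far from the prism, two vertices that are non-adjacent to each other and to all four vertices of the $C_4$; these form the required $2K_1$, yielding an induced $C_4+2K_1$ and contradicting the hypothesis once the number of disjoint triangles is large. To make these choices rigorous and to control all adjacencies at once, I would carry out the analysis inside the Chudnovsky--Seymour description of orientable prismatic graphs~\cite{ChudAndSey1}, and in particular over the path-of-triangles and cycle-of-triangles graphs recalled in \cref{s:Cycle_Path}, where the orientation makes the matchings between consecutive triangles coherent and forces triangles far apart along the path to interact only through their prism-matchings.

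With the packing number bounded, the passage to a small hitting set is short. If $\{T_1,\dots,T_\nu\}$ is a \emph{maximal} family of pairwise disjoint triangles, then every triangle of $G$ meets $\bigcup_i V(T_i)$, so this set of at most $3\nu$ vertices is already a hitting set and $\Lambda(G)\le 3\nu$. To reach the target value $5$ I would not use this crude bound but instead return to the structural description: the co-bridge-free restriction collapses the relevant path/cycle-of-triangles graphs to a short list of small configurations, and on each I would exhibit by hand a hitting set of size at most $5$, using that two triangles forming a prism can be hit economically together with the few vertices attached to them.

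The main obstacle I anticipate is the orientable structural case analysis underlying the packing bound, and above all the sharpening of the constant from the easy $3\nu$ to the exact value $5$. Two points will require care: the cycle-of-triangles graphs, where wrap-around adjacencies can spoil the naive choice of the non-adjacent pair and must be handled separately from the path case; and the verification, configuration by configuration in the reduced list, that no surviving graph forces a sixth vertex into its hitting set. Pinning down these constants exactly---rather than settling for a looser bound---is what makes the co-bridge-free hypothesis align precisely with the threshold $\Lambda(G)\le 5$ demanded by the algorithm of Preissmann et al.
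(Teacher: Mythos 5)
Your high-level architecture (dispatch the non-orientable case to Preissmann et al., then attack orientable graphs through the Chudnovsky--Seymour description, using co-bridge-freeness to extract an induced $C_4+2K_1$ from a prism together with two distant non-adjacent vertices) is the same as the paper's, and your observation that every prism contains an induced $C_4$ is indeed the engine behind several of the paper's key claims. However, your opening reduction contains a genuine error. You assert that no orientable prismatic graph is Schl\"afli-prismatic, and conclude that for orientable $G$ it suffices to prove the stronger statement $\Lambda(G)\le 5$. ``Schl\"afli-prismatic'' means ``induced subgraph of the complement of the Schl\"afli graph''; although that $27$-vertex graph is non-orientable, its induced subgraphs need not be --- already $L(K_{3,3})$ (the subgraph induced by $R$) is $3$-colorable, hence orientable, and Schl\"afli-prismatic. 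Worse, the stronger statement you reduce to is false: the worn $2$-chain of two copies of $L(K_{3,3})$ is a $3$-colorable (hence orientable) co-bridge-free prismatic graph in which every triangle lies in one of the two factors, each factor needing a hitting set of size $3$, so $\Lambda(G)=6>5$. This is exactly the exceptional graph identified in Lemma~\ref{l:LK33wLK33schlafli}, which shows it is isomorphic to the subgraph of the complement of the Schl\"afli graph induced by $R\cup S$; that is why the Schl\"afli exception must be carried through the orientable, $3$-colorable branch (Theorem~\ref{t:3_colorable}) rather than discharged at the outset. As written, your plan commits you to proving a false statement in the orientable case.

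The second, softer gap is that the passage from ``boundedly many disjoint triangles'' to the exact threshold $5$ is where essentially all of the work lies, and your sketch does not engage with it. The relevant structures are not a finite list of small configurations: the classes $X_i$ in a path or cycle of triangles graph, and the sets $V_i$, $V^i$ in a mantled $L(K_{3,3})$ or ring of five, can be arbitrarily large even when the triangle packing number is small, and the $3$-colorable case additionally requires controlling worn chain decompositions of unbounded length (showing that at most two terms of the chain can contain triangles, via the bicolored $C_4$ plus $2K_1$ argument). The paper's proof of the threshold $5$ goes through the conditions $\mathcal{P}1$--$\mathcal{P}7$ in Lemma~\ref{l:path_cycle_hitting} and a separate analysis of each outcome of Chudnovsky and Seymour's classification of non-$3$-colorable orientable graphs; a maximal packing argument giving $\Lambda(G)\le 3\nu$ followed by ad hoc inspection will not reach $5$ without reconstructing that case analysis.
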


Using the same method as Preissmann et al.~\cite{preissmann_complexity_2021} we prove that co-bridge-free prismatic graphs can be clique-covered in $O(n^{7.5})$.

\subsection*{Outline}
In Section~\ref{s:Struct}, we give some definitions and pin a few key results to expose an overview of the structure of prismatic graphs introduced by Chudnovsky and Seymour.

The next three sections are devoted to show that subclasses of co-bridge-free orientable prismatic graphs admit a hitting set of size at most $5$ except in some very special cases. Section~\ref{s:Cycle_Path} focuses on path and cycle of triangles graphs, Section~\ref{s:3colo} on the 3-colorable graphs, and Section~\ref{s:non3colo} on the  not 3-colorable graphs.

Section~\ref{s:Lafin} is devoted to the proof of Theorem~\ref{t:la_total} and to a ${\cal O}(n^{7.5})$ algorithm for the clique covering problem of co-bridge-free prismatic graphs.

Section~\ref{s:conclusion} contains some concluding remarks.

\section{Definitions and Structure of Prismatic Graphs}\label{s:Struct}

In this section, we expose the structure of orientable prismatic graphs given by Chudnovsky and Seymour in~\cite{ChudAndSey1}. Most definitions and notations for prismatic graphs are extracted from their work and we restate them for convenience. Some of the structures require more details and are therefore covered in dedicated sections. For other terms and notations not defined here, we rely on~\cite{BondyMurty}.

\subsection*{Graphs}
If $Y \subseteq V (G)$ and $x \in V (G) \setminus Y$, we say that $x$ is \textit{complete} (resp. \textit{anticomplete}) to $Y$ if $x$ is adjacent (resp. non-adjacent) to every member of $Y$. If $X, Y \subseteq V (G)$ are disjoint, we say that $X$ is \textit{complete} (resp. \textit{anticomplete}) to $Y$ if every vertex of $X$ is adjacent (resp. non-adjacent) to every vertex of $Y$. 

If $X, Y \subseteq V (G)$, we say that $X$ and $Y$ are \textit{matched} if $X \cap Y = \emptyset$, $|X| = |Y |$, and every vertex in $X$ has a unique neighbor in $Y$ and vice versa.

\subsection*{Prismatic graphs}

A graph $G$ is called \textit{prismatic} if for any triangle $T$, every vertex not in $T$ has exactly one neighbor in $T$. 

The structure of the class of prismatic graphs is easier to manipulate and provides an understanding of their complement, the class ${\cal H = }\{K_{1, 3}, 2K_1+K_2, 4K_1\}$ (see Section~\ref{s:intro}). 

Observing that the size of a hitting set is at least the number of vertex-disjoint triangles, the following property arises directly from the definition.

\begin{property}\label{p:HS1}
A prismatic graph $G$ contains two vertex-disjoint triangles if and only if it has no hitting set of size $1$, i.\,e.\ $\Lambda (G)\geq 2$. 
\end{property}


A prismatic graph $G$ is $k$\textit{-substantial} if for every $S \subseteq V (G)$ with $|S| < k$ there is a triangle $T$ with $S \cap T = \emptyset$. Hence if $G$ is not $k$-substantial then $\Lambda (G)\leq k$.

Chudnovsky and Seymour gave a structural description of prismatic graphs in~\cite{ChudAndSey1} and \cite{ChudAndSey2}. We rely on those results to provide the diagram in Figure~\ref{f:super_schema} in order to help the reader to understand the structure of this class of graphs.

\begin{figure}
\begin{tikzpicture}
\begin{scope}[xshift=0cm,yshift=0cm]
		\node[boite] (prismatic) at (0,0.5) {\textbf{Prismatic}};
		
		\node[boite] (nonorient) at (3.6,1) {Non-orientable};
		\node[text width=4cm](papiernous) at (9.5,1){$\Lambda(G)\leq 5$ or Schl\"afli-prismatic graph};
		\draw[->] (nonorient.east)  -> (papiernous.west)         node[midway,fill=white]{\cite{preissmann_complexity_2021}} ;

		\node[boite] (orient) at (3.3,-0.5) {Orientable};
			\node[boite] (non3colorable) at (7.5,-0.5) {Non 3 colorable};
			\draw[->,>=latex] (orient) to[out=0,in=180] (non3colorable);
				\node[fill=white] (lemme) at (7.5,-1.5){Lemma 11.2~\cite{ChudAndSey1}};
				\draw[-] (non3colorable.south) -- (lemme.north);
				
				\node[boite] (Sub) at (10.4,-2.5) {Not $3$-substantial};
				\draw[->,>=latex] (lemme) to[out=-90,in=180] (Sub);
				\node[boite] (R5) at (9.65,-3.5) {Ring of 5};
				\draw[->,>=latex] (lemme) to[out=-90,in=180] (R5);
				\node[boite] (MLK) at (10.25,-4.5) {Mantled $L(K_{3,3})$};
				\draw[->,>=latex] (lemme) to[out=-90,in=180] (MLK);
				\node[boite] (CTG) at (11,-5.5) {Cycle of triangles graph};
				\draw[->,>=latex] (lemme) to[out=-90,in=180] (CTG);
				
				\draw[-] (12.3,-2.5) to[out=0,in=180]  (12.7,-3.5) to[out=180,in=0] (12.3,-4.5);
				\node[text width=2cm](papiernous) at (13.8,-3.5){$\Lambda(G)\leq 3$};
				
			\end{scope}

			\begin{scope}[xshift=-1.5cm,yshift=0.5cm]
			
			\node[boite] (3colorable) at (4.74,-4.25) {3 colorable};
			\draw[->,>=latex] (orient) to[out=-90,in=110] (3colorable);
			
			\node[boite] (nprime) at (2.75,-5.75) {Not Prime};
			\draw[->,>=latex] (3colorable) to[out=-90,in=90] (nprime);
			\node[fill=white,text width=2.25cm] (decomp) at (1,-4.5) {\footnotesize{worn chain decomposition}};
			\draw[-] (nprime) to[out=180,in=-90] (decomp);
			\draw[->,>=latex] (decomp) to[out=75,in=180] (3colorable);

			\node[boite] (prime) at (6.75,-5.75) {Prime};
			\draw[->,>=latex] (3colorable) to[out=-90,in=110]  (prime);
			\node[fill=white] (lemme12) at (7.3,-6.9){Lemma 12.3~\cite{ChudAndSey1}};
			\draw[-] (prime) to[out=-90,in=110]  (lemme12);
			
			\node[boite] (Q0) at (10.65,-8) {$\mathcal{Q}_0$: Triangle-free graphs};
			\draw[->,>=latex] (lemme12) to[out=-90,in=180] (Q0);
			\node[boite] (Q1) at (9.6,-9) {$\mathcal{Q}_1$: $L(K_{3,3})$};
			\draw[->,>=latex] (lemme12) to[out=-90,in=180] (Q1);
			\node[boite] (Q2) at (11,-10) {$\mathcal{Q}_2$: Path of triangles graph};
			\draw[->,>=latex] (lemme12) to[out=-90,in=180] (Q2);
			
			\draw[-] (13.3,-7.8) to[out=0,in=180]  (13.6,-8.5) to[out=180,in=0] (13.3,-9.2);
			\node[text width=2cm](papiernous) at (14.7,-8.5){$\Lambda(G)\leq 3$};

		\draw[->,>=latex] (prismatic) to[out=-10,in=180] (orient);
		\draw[->,>=latex] (prismatic) to[out=10,in=180] (nonorient);
\end{scope}
\end{tikzpicture}
\caption{Overview diagram of the structure of prismatic graphs\label{f:super_schema}}
\end{figure}

\subsection*{Orientable prismatic graphs}

The notion of orientation is the first distinction to divide prismatic graphs into two subclasses: the orientable prismatic graphs and the non-orientables prismatic graphs.

An \textit{orientation} $\cal O$ of a graph $G$ is a choice of a cyclic permutation ${\cal O}(T)$ for every triangle $T$ of $G$, such that if $S = \{s_1,s_2,s_3\}$ and $T = \{t_1,t_2,t_3\}$ are two disjoint triangles and $s_i t_i$ is an edge for $1\leq i\leq 3$, then ${\cal O}(S)$ is $s_1\rightarrow s_2 \rightarrow s_3 \rightarrow s_1$ if and only if ${\cal O}(T)$ is $t_1\rightarrow t_2 \rightarrow t_3 \rightarrow t_1$. We say that $G$ is \textit{orientable} if it admits an orientation.

Preissmann et al.~\cite{preissmann_complexity_2021} proved that every non-orientable prismatic graph either admits a hitting set of size at most $5$ or is a subgraph of some specific graph, the \textit{Schläfli
-prismatic graph}. Hence, we focus on orientable prismatic graphs. 

This class is considered in depth in~\cite{ChudAndSey1}, and it is divided into two subclasses 
according to the existence of a 3-coloring. 

\subsection*{3-colorable orientable prismatic graphs}
A 3-coloring of a graph $G$ is a partition of $V(G)$ into $3$ stable sets $A$, $B$, $C$. We call the quadruple $(G, A, B, C)$ a $3$-colored graph. A graph is called 3-colorable if it admits a 3-coloring and it is said not 3-colorable otherwise.

Chudnovsky and Seymour provided a structural theorem of 3-colored prismatic graphs with three basic classes, $\mathcal{Q}_0$, $\mathcal{Q}_1$, $\mathcal{Q}_2$, called \textit{prime graphs} which are described hereunder, and an operation called the \textit{worn chain decomposition} which is detailed in Section~\ref{s:3colo}.

\begin{theorem}[\cite{ChudAndSey1} 11.1]\label{tcs:prime}
    Every 3-colored prismatic graph admits a worn chain decomposition with all terms in $\mathcal{Q}_0\cup\mathcal{Q}_1\cup\mathcal{Q}_2$.
\end{theorem}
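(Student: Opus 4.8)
The plan is to recognise that this is a deep structural decomposition theorem in the spirit of Chudnovsky and Seymour's programme, and to prove it by induction on $|V(G)|$, with the worn chain decomposition playing the role of the reductive operation. The backbone of the argument is to make precise what it means for a $3$-colored prismatic graph to be \emph{prime}, i.e.\ to admit no nontrivial worn chain decomposition, and then to show that the only primes are the graphs in $\mathcal{Q}_0\cup\mathcal{Q}_1\cup\mathcal{Q}_2$. Granting that, the theorem follows formally: if $(G,A,B,C)$ is already prime it is one of the basic graphs and the trivial chain works; otherwise it splits into strictly smaller $3$-colored prismatic terms, to each of which the induction hypothesis applies, and concatenating the resulting chains yields a worn chain decomposition of $G$ with all terms prime.

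The heart of the matter is therefore the classification of the primes, and here the prismatic hypothesis together with the orientation must be exploited locally. I would fix a triangle $T$ (if none exists then $G$ is triangle-free and lies in $\mathcal{Q}_0$) and analyse, for every vertex $v\in V(G)\setminus T$, its \emph{unique} neighbour in $T$; this partitions $V(G)\setminus T$ according to which vertex of $T$ each attaches to, and the $3$-coloring constrains these parts in a controlled way. The orientation then guarantees that disjoint triangles can be aligned coherently, which is precisely what rules out the ``twisting'' that would otherwise create a richer family of primes. The expected outcome of this local analysis is that a prime with at least one triangle either collapses onto $L(K_{3,3})$, landing in $\mathcal{Q}_1$, or is forced into the stacked, path-like pattern defining a path of triangles graph, landing in $\mathcal{Q}_2$.

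The main obstacle I anticipate is exactly this case analysis showing that primality leaves no room beyond the three basic classes: one must, by repeated use of the ``unique neighbour in every triangle'' constraint, rule out every hybrid configuration, and verify at each step that a configuration failing to be one of the primes actually exposes a structure along which a worn chain decomposition can be carried out. A secondary but genuine difficulty is bookkeeping: checking that the worn chain decomposition preserves both prismaticity and $3$-colorability, so that the induction hypothesis legitimately applies to the smaller terms, and that the decompositions of the parts glue into a single decomposition of $G$. Since the full argument is long and technical and the statement is verbatim Theorem~11.1 of~\cite{ChudAndSey1}, in the present paper it is invoked as a cited result rather than reproved; the sketch above records the route one would follow to establish it.
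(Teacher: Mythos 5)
Your proposal correctly identifies that this statement is Theorem~11.1 of Chudnovsky and Seymour and is invoked in the paper purely as a cited result, with no proof given here; your sketched route (reduce to classifying the prime graphs, i.e.\ those admitting no worn 2-chain, and show the primes are exactly $\mathcal{Q}_0\cup\mathcal{Q}_1\cup\mathcal{Q}_2$) matches how the cited work actually organizes the argument, which the paper itself points to via Lemma~14.1 of that reference. So your treatment is consistent with the paper's; the only caveat is that your outline is a sketch rather than a proof, in particular the classification of primes and the composability of worn chains are asserted rather than established, but that is appropriate for a result the paper deliberately imports.
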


The class $\mathcal{Q}_0$ consists of all 3-colored graphs $(G, A, B, C)$ such that $G$ has no triangle.

The class $\mathcal{Q}_1$ corresponds to all 3-colored graphs $(G, A, B, C)$ where $G$ is isomorphic to $L(K_{3,3})$. This is the line graph of $K_{3,3}$ and it is presented in Figure~\ref{f:LK33}. Observe that every edge of $L(K_{3,3})$ belongs to a triangle.

The class $\mathcal{Q}_2$ contains all canonically-colored path of triangles graphs. Section~\ref{s:Cycle_Path} is dedicated to detail the more general structure of path of triangles graphs.

\begin{figure}
\begin{minipage}[c]{0.4\linewidth}\begin{center}
    \begin{tikzpicture}
\begin{scope}[xshift=0cm,yshift=0cm]
		\node[labeled] (b2) at (0,0) {$a^2_1$};
		\node[labeled] (c2) at (1,0) {$a^1_3$};
		\node[labeled] (a2) at (0.5,1) {$a^3_2$};
		
		\node[labeled] (b3) at (3,0) {$a^1_2$};
		\node[labeled] (c3) at (4,0) {$a^3_1$};
		\node[labeled] (a3) at (3.5,1) {$a^2_3$};
		
		\node[labeled] (b1) at (1.5,2) {$a^3_3$};
		\node[labeled] (c1) at (2.5,2) {$a^2_2$};
		\node[labeled] (a1) at (2,3) {$a^1_1$};

		\draw[labeled] (a1) -- (b1);
		\draw[labeled] (b1) -- (c1);
		\draw[labeled] (c1) -- (a1);
		
		\draw[-] (a2) -- (b2);
		\draw[-] (b2) -- (c2);
		\draw[-] (c2) -- (a2);
		
		\draw[-] (a3) -- (b3);
		\draw[-] (b3) -- (c3);
		\draw[-] (c3) -- (a3);
		
		\draw[opacity=0.6] (a1) to [out=180,in=135] (a2);
		\draw[opacity=0.6] (a1) to [out=0,in=45] (a3);
		\draw[opacity=0.6] (a2) -- (a3);
		
		\draw[opacity=0.6] (b1) to [out=180,in=135] (b2);
		\draw[opacity=0.6] (b1) -- (b3);
		\draw[opacity=0.6] (b2) to [out=-45,in=-135] (b3);
		
		\draw[opacity=0.6] (c1) to [out=0,in=45] (c3);
		\draw[opacity=0.6] (c1) -- (c2);
		\draw[opacity=0.6] (c2) to [out=-45,in=-135] (c3);	
\end{scope}
\end{tikzpicture}
    \end{center}
\caption{$L(K_{3,3})$ }\label{f:LK33}
\end{minipage}
\hfill
\begin{minipage}[c]{0.4\linewidth}\begin{center}
    \begin{tikzpicture}
\begin{scope}[xshift=0cm,yshift=0cm]
		\node[labeled] (a1) at (-1.15,2.3) {$a_1$};
		\node[labeled] (a2) at (1.15,2.3) {$a_2$};
		\node[labeled] (a3) at (1.85,0.25) {$a_3$};
		\node[labeled] (a4) at (0,-0.75) {$a_4$};
		\node[labeled] (a5) at (-1.85,0.25) {$a_5$};
		
		\node[labeled] (b1) at (0.8,0.2) {$b_1$};
		\node[labeled] (b2) at (-0.8,0.2) {$b_2$};
		\node[labeled] (b3) at (-1,1.2) {$b_3$};
		\node[labeled] (b4) at (0,1.7) {$b_4$};
		\node[labeled] (b5) at (1,1.2) {$b_5$};
		\node[fill=white,draw=white,rounded corners](b12) at (0, -1.5) {};

		\draw[labeled] (a1) -- (a2);
		\draw[labeled] (a2) -- (a3);
		\draw[labeled] (a3) -- (a4);
		\draw[labeled] (a4) -- (a5);
		\draw[labeled] (a5) -- (a1);
		
		\draw[labeled] (a1) -- (b1);
		\draw[labeled] (a1) -- (b3);
		\draw[labeled] (a1) -- (b4);

        \draw[labeled] (a2) -- (b2);
		\draw[labeled] (a2) -- (b4);
		\draw[labeled] (a2) -- (b5);
		
		\draw[labeled] (a3) -- (b3);
		\draw[labeled] (a3) -- (b5);
		\draw[labeled] (a3) -- (b1);
		
		\draw[labeled] (a4) -- (b4);
		\draw[labeled] (a4) -- (b1);
		\draw[labeled] (a4) -- (b2);
		
		\draw[labeled] (a5) -- (b5);
		\draw[labeled] (a5) -- (b2);
		\draw[labeled] (a5) -- (b3);
		
\end{scope}
\end{tikzpicture}
    \end{center}
\caption{Core ring of five}\label{f:corering5}
\end{minipage}
\end{figure}

\subsection*{Non $3$-colorable orientable prismatic graphs}
If an orientable prismatic graph does not admit a 3-coloring, the following lemma of Chudnovsky and Seymour indicates precisely what this graph can be.

\begin{lemma}[\cite{ChudAndSey1} 11.2]\label{l:CetS_non3col}
    Every orientable prismatic graph that is not 3-colorable is either not 3-substantial, or a cycle of triangles graph, or a ring of five graph, or a mantled $L(K_{3,3})$.
\end{lemma}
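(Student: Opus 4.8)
The statement is a classification, so the plan is to assume the graph avoids the degenerate outcome and show it must be one of the three named structures. Concretely, I would first dispose of the case where $G$ is not $3$-substantial: by definition this means some set $S$ with $|S|<3$ meets every triangle, so $\Lambda(G)\le 2$ and the first alternative holds. Hence assume $G$ is $3$-substantial, and fix an orientation $\mathcal{O}$, which exists by hypothesis. The governing idea is that orientability makes any attempted $3$-coloring propagate rigidly across the prisms of $G$, so that the failure of $3$-colorability becomes a single global $\mathbb{Z}/3\mathbb{Z}$-valued obstruction that I can localize and classify.

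The first technical step is a local analysis of one prism. Identifying the three colors with $\mathbb{Z}/3\mathbb{Z}$, a proper coloring of a triangle is a bijection onto $\{0,1,2\}$, and relative to the cyclic order chosen by $\mathcal{O}$ each such bijection is either \emph{increasing} or \emph{decreasing}. If $S$ and $T$ are matched triangles, a direct computation (the three matched color-differences are each nonzero yet sum to $0$ in $\mathbb{Z}/3\mathbb{Z}$, so they coincide) shows that a coloring of $S$ extends to exactly two colorings of $T$, both of the same sign as $S$ and each obtained from the coloring of $S$ by a global shift. Because the orientation is consistent on every prism, the sign is therefore a global invariant of any $3$-coloring, and the only remaining freedom, hence the only possible source of failure, is the family of shifts attached to the prisms.

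Next I would assemble the triangles of $G$ and the prisms joining them into an incidence structure and record the shifts as a $\mathbb{Z}/3\mathbb{Z}$-valued datum on its closed walks; $G$ is $3$-colorable precisely when this datum can be trivialized, and $3$-substantiality guarantees that this structure is rich enough to be meaningful. Non-$3$-colorability then pins down a minimal closed obstruction, and the final step is to argue that the prismatic axiom together with $3$-substantiality rigidifies the vertices around such an obstruction so tightly that only three shapes survive: a long twisted chain of matched triangles produces a \emph{cycle of triangles graph}; the shortest genuinely odd obstruction collapses onto the five-fold symmetric configuration of a \emph{ring of five}; and the case where the triangles meet in the incidence pattern of $L(K_{3,3})$ forces $G$ to be a \emph{mantled} $L(K_{3,3})$, the mantle being exactly the extra vertices whose single neighbor in each triangle is dictated by the prismatic condition.

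The main obstacle is this last rigidity step. The prismatic condition constrains vertices three at a time along triangles that overlap rather than being disjoint, so pinning down the ring and mantle structures requires simultaneously controlling many overlapping triangles and ruling out every sporadic alternative vertex by vertex. This exhaustive neighborhood analysis is exactly where the bulk of the work in~\cite{ChudAndSey1} lies, and reproducing it in full would require importing the detailed structural lemmas developed there; the monodromy viewpoint above only organizes the argument and identifies which three outcomes can occur.
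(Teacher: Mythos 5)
This statement is not proved in the paper at all: it is quoted verbatim as Lemma~11.2 of Chudnovsky and Seymour's paper on orientable prismatic graphs and used as a black box, so there is no in-paper argument to compare yours against. Judged as a standalone proof attempt, your text has a genuine gap. The only steps you actually carry out are the trivial dispatch of the not-$3$-substantial case and the local computation on a single prism (which is correct: for matched triangles $S,T$ the three colour differences $c(t_i)-c(s_i)$ are nonzero and sum to $0$ in $\mathbb{Z}/3\mathbb{Z}$, hence equal, so the colouring of $T$ is a shift of that of $S$ and the ``sign'' relative to the orientation is preserved). Everything after that is a programme, not a proof: you assert without argument that $3$-colourability is equivalent to trivialising a shift datum on closed walks of an incidence structure of prisms, and then you assert that a non-trivialisable datum forces exactly the three named shapes. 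Neither assertion is established. The equivalence already fails as stated, because your cocycle only constrains vertices lying in triangles and only along disjoint matched pairs; triangles in a prismatic graph can share a vertex, vertices outside all triangles still carry edges that obstruct colourings, and nothing in your setup shows these are irrelevant. More importantly, the passage from ``there is a global obstruction'' to ``$G$ is a cycle of triangles graph, a ring of five, or a mantled $L(K_{3,3})$'' is the entire content of the lemma, and you explicitly concede that this rigidity step is not done and would have to be imported wholesale from Chudnovsky and Seymour. An outline that names the three outcomes and defers the classification that produces them does not constitute a proof of a classification theorem.

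If your intent was only to organise how such a proof might go, the monodromy viewpoint is a reasonable mnemonic for why orientability makes $3$-colourings propagate rigidly, but it does not substitute for the case analysis in~\cite{ChudAndSey1}, which proceeds through a long sequence of structural lemmas about orientable prismatic graphs rather than through any cohomological criterion. In the context of the present paper the correct move is simply to cite the result, as the authors do.
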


Cycle of triangles graphs are structures related to path of triangles graphs and they are detailed together in Section~\ref{s:Cycle_Path}.
 
A graph $G$ is a \textit{mantled $L(K_{3,3})$} if $V(G)= A\cup V^1\cup V^2\cup V^3\cup V_1\cup V_2\cup V_3$ with $A=\{a^i_j : 1\leq i,j\leq 3\}$ and adjacencies as follows:
  \begin{itemize}[label=--]
    \item For $1\leq i,j,i',j' \leq3$, $a_j^i$ and $a_{j'}^{i'}$ are adjacent if and only if $i \neq i'$ and $ j \neq j'$. Observe that $G[A]$ is a $L(K_{3,3})$;
    \item For $1\leq i\leq 3$, $V_i$ and $V^i$ are stable sets;
    \item For $1\leq i\leq 3$, $V^i$ is complete to $\{a^i_1,a^i_2,a^i_3 \}$, and anticomplete to  $A\setminus \{a^i_1,a^i_2,a^i_3 \}$;
    \item For $1\leq i\leq 3$, $V_i$ is complete to $\{a_i^1, a_i^2,a_i^3 \}$ and anticomplete to $A\setminus \{a_i^1, a_i^2,a_i^3 \}$;
    \item $V_1 \cup V_2\cup V_3$ is anticomplete to $V^1 \cup V^2 \cup V^3$;
    \item There is no triangle included in $V_1 \cup V_2\cup V_3$ or $V^1 \cup V^2 \cup V^3$.
  \end{itemize}

\vspace{2ex}



A graph $G$ is a \textit{ring of five }if $V(G)=A\cup V_0\cup V_1\cup V_2\cup V_3\cup V_4\cup V_5$ with $A=\{a_1,\dots, a_5,b_1,\dots,b_5\}$. 

Let adjacency be as follows (reading subscripts modulo 5, where $a_i=a_5$ and $b_i=b_5$ when $i=0$.):
\begin{itemize}[label=--]
    \item For $1\leq i\leq 5$, $\{a_i,a_{i+1},b_{i+3}$\} is a triangle and $a_i$ is adjacent to $b_i$. Observe that $G[A]$ is the graph drawn in Figure~\ref{f:corering5}, called a \textit{core ring of five}; 
    \item $V_0$ is complete to $\{b_1,\dots,b_5\}$ and anticomplete to $\{a_1,\dots,a_5\}$;
    \item $V_0,V_1,\dots,V_5$ are all stable sets; 
    \item For $1\leq i\leq 5$, $V_i$ is complete to $\{a_{i-1},b_i,a_{i+1}\}$ and anticomplete to  $A\setminus \{a_{i-1},b_i,a_{i+1}\}$;
    \item $V_0$ is complete to $V_1\cup \dots\cup V_5$; 
    \item For $1\leq i\leq 5$, $V_i$ is anticomplete to $V_{i+2}$, and the adjacency between $V_i$, $V_{i+1}$ is arbitrary. 
\end{itemize}

\section{Cycle and Path of Triangles Graphs}\label{s:Cycle_Path}

Path of triangles graphs and cycle of triangles graphs are key classes of graphs in the description of orientable prismatic graphs and this section aims to show that they have a hitting set of size at most $5$. 
Both classes are defined by Chudnovsky and Seymour in~\cite{ChudAndSey1}. We merge their definitions in one set of conditions by emphasizing the difference between the cycle version with the modulo subscripts and the path version with a specific way to handle the extremities.




Let us say that $G$ is a \textit{path of triangles graph} if for some integer $n\geq 1$ there is a partition of the vertex set into stable sets $X_1,\dots, X_{2n+1}$, satisfying the following conditions $\mathcal{P}1-\mathcal{P}7$.

Let us say that $G$ is a \textit{cycle of triangles graph} if for some integer $n\geq 5$ with $n=2$ modulo $3$, there is a partition of the vertex set into stable sets $X_1,\dots, X_{2n+1}$, satisfying the following conditions $\mathcal{P}1-\mathcal{P}6$, reading subscripts modulo $2n$.

\begin{itemize}
    \item[$\mathcal{P}1$] For $1\leq i \leq n$, there is a nonempty subset $\hat{X}_{2i} \subseteq  X_{2i}$ and at least one of $\hat{X}_{2i}$, $\hat{X}_{2i+2}$ has cardinality $1$.
    
    \item[$\mathcal{P}2$] For $1 \leq i < j \leq 2n + 1$,
    \begin{enumerate}
         \item if $j-i = 2$ modulo $3$ and there exist $u\in X_i$ and $v \in X_j$ non-adjacent, then either $i$, $j$ are odd and $j = i + 2$, or $i$, $j$ are even and $u \notin \hat{X}_i$ and $v \notin \hat{X}_j$ ;
         \item if $j - i \neq 2$ modulo $3$ then either $j = i + 1$ or $X_i$ is anticomplete to $X_j$.
        \end{enumerate}
    (Note that $k = 2 \mod 3$ if and only if $2n - k = 2 \mod 3$, so these statements are symmetric between $i$ and $j$.)
    
    \item[$\mathcal{P}3$] For $1\leq i \leq n + 1$, $X_{2i-1}$ is the union of three pairwise disjoint sets $L_{2i-1}$, $M_{2i-1}$, $R_{2i-1}$.
    
    \item[$\mathcal{P}4$] For $1\leq i \leq n$, $X_{2i}$ is anticomplete to $L_{2i-1} \cup R_{2i+1}$ ; $X_{2i} \setminus \hat{X}_{2i}$ is anticomplete to $M_{2i-1} \cup M_{2i+1}$ ; and every vertex in $X_{2i} \setminus \hat{X}_{2i}$ is adjacent to exactly one end of every edge between $R_{2i-1}$ and $L_{2i+1}$.
    
    \item[$\mathcal{P}5$] For $1\leq i \leq n$, if $|\hat{X}_{2i} | = 1$, then
    \begin{enumerate}
        \item $R_{2i-1}$, $L_{2i+1}$ are matched, and every edge between $M_{2i-1}\cup  R_{2i-1}$ and $L_{2i+1} \cup M_{2i+1}$ is between $R_{2i-1}$ and $L_{2i+1}$;
        \item the vertex in $\hat{X}_{2i}$ is complete to $R_{2i-1} \cup M_{2i-1} \cup L_{2i+1} \cup  M_{2i+1}$;
        \item $L_{2i-1}$ is complete to $X_{2i+1}$ and $X_{2i-1}$ is complete to $R_{2i+1}$;
        \item if $i>1$, $M_{2i-1}$, $\hat{X}_{2i-2}$ are matched, and if $i<n$, $M_{2i+1}$, $\hat{X}_{2i+2}$ are matched; if $M_1$ is not empty then $M_1$ is matched to $\hat{X}_2$ and $\hat{X}_{2n}$ (this is only true for cycle of triangles graphs thanks to $\PC 7.2$).
        \end{enumerate}
        
    \item[$\mathcal{P}6$] For $1 \leq i \leq n$, if $| \hat{X}_{2i} | > 1$, then 
    \begin{enumerate}
        \item $R_{2i-1} = L_{2i+1} = \emptyset$;
        \item if $u \in X_{2i-1}$ and $v \in X_{2i+1}$, then $u$, $v$ are non-adjacent if and only if they have the same neighbor in $\hat{X}_{2i}$.
        \end{enumerate}
        
    
    \item[$\mathcal{P}7$] For path of triangles graph only:
    \begin{enumerate}
        \item both $\hat{X}_2$ and $\hat{X}_{2n}$ have cardinality 1;
        \item $L_1=M_1=M_{2n+1}=R_{2n+1}=\emptyset$;
        \item if $R_1=\emptyset$ then $n\geq 2$ and $|\hat{X}_4|>1$, and if $L_{2n+1}=\emptyset$ then $n\geq 2$ and $|\hat{X}_{2n-2}|>1$. 
        \end{enumerate}

\end{itemize}

For a path of triangles graphs or cycle of triangles graphs, the partition of the vertices $X_1,\dots,X_{2n+1}$ as described by the previous conditions is called a \emph{good partition}.

The previous conditions on the subsets $X_i$ constrains a very specific structure on the graph. The three first ones, $\mathcal{P}1-\mathcal{P}3$, give a description of the general behavior of a set $X_i$ with any other set. 
The conditions $\mathcal{P}4-\mathcal{P}6$ give a description of the interactions between a set $X_i$ and its direct surrounding (i.\,e.\ $X_{i-1}\cup X_{i+1}$).
The condition $\mathcal{P}7$ gives a way to treat extremities of the path variations. This is not needed in the cycle variations as we read the subscripts modulo $2n$, therefore considering $X_{2n+1}= X_1$.

Moreover Chudnovsky and Seymour observe (\cite{ChudAndSey1}, remark before Lemma 3.1) that a vertex of $G$ is in no triangle of $G$ if and only if it belongs to one of the sets $X_{2i} \setminus \hat{X}_{2i}$. Hence, a vertex of $G$ is in some triangle if and only if it belongs to one of the sets $\hat{X}_{2i}$ or a set $X_{2i-1}$ for $1\leq i\leq n+1$. In the rest of the paper, we refer to this remark by \RT.

For a path of triangles graph or cycle of triangles graphs, the partition of the vertices $X_1,\dots,X_{2n+1}$ as described in the previous conditions is called a \emph{good partition}.

If $X_1,\dots, X_{2n+1}$ is a good partition of $G$, let $A_k = \bigcup(X_i : 1 \leq i \leq 2n + 1 \text{ and } i = k \mod 3) (k = 0, 1, 2)$.
It is shown in \cite{ChudAndSey1} that $(G, A_0, A_1, A_2)$ is a 3-colored graph. For any 3-colored graph $(G, A, B, C)$, if there is a good partition $X_1,\dots,X_{2n+1}$ of $G$ and sets $A_0, A_1, A_2$ as above, with ${A_0 , A_1 , A_2 }$ = ${A, B, C}$, we call $(G, A, B, C)$ a \textit{canonically-colored path of triangles graph}. This will be used in Section~\ref{s:3colo} when we discuss the case of 3-colorable prismatic graphs.

Before proving that co-bridge-free path of triangles graphs and co-bridge-free cycle of triangles graphs admit a hitting set of size at most 5, we need three useful structural lemmas based on the conditions P1-7. 


\begin{lemma}\label{l:Path_Cycle:postition_triangles_tout}
Let $G$ be a path of triangles graph or a cycle of triangles graph and let $T$ be a triangle in $G$. 

There exists $1\leq i\leq n$ such that either $T\subseteq R_{2i-1}\cup \hat{X}_{2i}\cup L_{2i+1}$ with $|\hat{X}_{2i}|=1$, or $T\subseteq \hat{X}_{2i-2}\cup M_{2i-1}\cup \hat{X}_{2i}$

(reading subscript modulo $2n$ when $G$ is a cycle of triangles graphs). 
\end{lemma}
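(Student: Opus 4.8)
The plan is to locate the three sets of the good partition that contain the vertices of $T$, show that their indices are three \emph{consecutive} integers, and then read off the two possible shapes according to the parity of the middle index. Throughout I work with representatives of the indices (reading them modulo $2n$ in the cycle case) and use only the conditions $\mathcal{P}1$--$\mathcal{P}7$, the remark \RT, and prismaticity. Since each $X_i$ is a stable set, the three pairwise-adjacent vertices of $T$ lie in three pairwise distinct sets $X_a,X_b,X_c$. Moreover, by \RT\ a vertex lying in a triangle belongs either to an odd set $X_{2i-1}$ or to some $\hat{X}_{2i}$; hence whenever a vertex of $T$ falls in an even set $X_{2i}$ it in fact lies in $\hat{X}_{2i}$.

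The first real step is to prove that $\{a,b,c\}$ is a set of three consecutive indices. Ordering $a<b<c$, every pair is an edge, so I can feed each pair into $\mathcal{P}2.2$: whenever a gap is $\not\equiv 2 \pmod 3$ the two indices must be consecutive (they cannot be anticomplete). Applied to the extreme pair $(a,c)$ this forces $c-a\equiv 2\pmod 3$, because $c=a+1$ would leave no room for $b$. Then, using $c-a\equiv 2$, a short residue check rules out $b-a\equiv 2$ and $c-b\equiv 2$ (either would make the complementary gap $\equiv 0$, which by $\mathcal{P}2.2$ would force a consecutive pair of the wrong residue); so $b-a\not\equiv 2$ and $c-b\not\equiv 2$, and $\mathcal{P}2.2$ yields $b=a+1$ and $c=a+2$. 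Thus $\{a,b,c\}=\{a,a+1,a+2\}$.

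It remains to split on the parity of the middle index $a+1$. If $a+1$ is even, write it $2i$, so the vertices are $u\in X_{2i-1}$, $w\in\hat{X}_{2i}$, $x\in X_{2i+1}$. Applying $\mathcal{P}4$ (anticompleteness of $X_{2i}$ to $L_{2i-1}\cup R_{2i+1}$) to the edges $wu$ and $wx$ places $u\in M_{2i-1}\cup R_{2i-1}$ and $x\in L_{2i+1}\cup M_{2i+1}$. I then claim $|\hat{X}_{2i}|=1$: if not, $\mathcal{P}6.1$ empties $R_{2i-1}$ and $L_{2i+1}$, forcing $u\in M_{2i-1}$ and $x\in M_{2i+1}$, and now $\mathcal{P}6.2$ is violated, since the adjacent pair $u,x$ shares the neighbour $w\in\hat{X}_{2i}$. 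Granted $|\hat{X}_{2i}|=1$, condition $\mathcal{P}5.1$ (every edge between $M_{2i-1}\cup R_{2i-1}$ and $L_{2i+1}\cup M_{2i+1}$ runs between $R_{2i-1}$ and $L_{2i+1}$) pushes $u$ into $R_{2i-1}$ and $x$ into $L_{2i+1}$, giving $T\subseteq R_{2i-1}\cup\hat{X}_{2i}\cup L_{2i+1}$, the first shape. If instead $a+1$ is odd, write it $2i-1$; the two even vertices lie in $\hat{X}_{2i-2}$ and $\hat{X}_{2i}$ (by \RT), and two applications of $\mathcal{P}4$ — anticompleteness of $X_{2i-2}$ to $R_{2i-1}$ and of $X_{2i}$ to $L_{2i-1}$ — squeeze the middle vertex into $M_{2i-1}$, yielding $T\subseteq\hat{X}_{2i-2}\cup M_{2i-1}\cup\hat{X}_{2i}$, the second shape.

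The step I expect to be the main obstacle is ruling out $|\hat{X}_{2i}|>1$ in the first subcase: this hinges on the exact reading of $\mathcal{P}6.2$. Under the interpretation ``$u,x$ non-adjacent iff they have a common neighbour in $\hat{X}_{2i}$'' the contradiction is immediate, but under the reading ``iff their \emph{unique} neighbours in $\hat{X}_{2i}$ coincide'' I will first need a short prismaticity argument, applied to the triangle $\{u,w,x\}$ and the surplus vertices of $\hat{X}_{2i}$, to see that $u$ and $x$ each have a single neighbour in $\hat{X}_{2i}$, namely $w$. A secondary technical point is the cycle case, where the residue condition ``$\bmod 3$'' in $\mathcal{P}2$ must be reconciled with subscripts read ``$\bmod 2n$''; since $2n\geq 10$, a triangle occupies a short arc and the consecutive-triple reduction carries over on representatives, but the bookkeeping between the two moduli needs to be handled carefully.
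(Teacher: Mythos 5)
Your proposal is correct and follows essentially the same route as the paper's proof: use $\mathcal{P}2.2$ to force the three indices into a consecutive triple, split on the parity of the middle index, and then pin down the precise subsets via \RT, $\mathcal{P}4$, $\mathcal{P}5.1$ and $\mathcal{P}6$ (your detour through $\mathcal{P}6.1$ before invoking $\mathcal{P}6.2$ is a negligible variation on the paper's direct appeal to $\mathcal{P}6.2$). The two caveats you flag at the end are handled no more carefully in the paper itself, so nothing further is needed.
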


\begin{proof}
Let $G$ be a path of triangles graph or a cycle of triangles graph with good partitions $X_1,\dots,X_\nu$ for $\nu=2n+1$ in the first case and $\nu=2n$ in the second case.  Let $T=\{a,b,c\}$ be a triangle in $G$.
 
Set $a\in X_i$, $b\in X_j$, $c\in X_k$ for some $i,j,k$. Recall that $X_i$, $X_j$ and $X_k$ are all stable sets and so $i\neq j\neq k\neq i$. Without loss of generality, assume that $1\leq i<j<k\leq \nu$ and so $i\neq k+1$. Since $ac\in E$, $k-i=2$ modulo $3$ by $\PC 2$.

Suppose now that $j-i=2$ modulo $3$. It follows that $k-j=(k-i)-(j-i)=2-2=0$ modulo $3$. Hence, by $\PC 2.2$ and as $bc\in E$, $k=j+1$ and $j-i=k-1-i=1$ modulo $3$, a contradiction. 
Therefore $j-i\neq 2$ modulo $3$ and by $\PC 2.2$ we have that $j=i+1$. It follows that $k-j=k-i-1=2-1=1$ modulo $3$ and by $\PC 2.2$, $k=j+1$. Therefore $a\in X_{j-1}$, $b\in X_{j}$, $c\in X_{j+1}$ with $1\leq j\leq 2n$. By \RT, if $j$ is even $T\subseteq X_{j-1}\cup \hat{X}_{j}\cup X_{j+1}$ and if $j$ is odd $T\subseteq \hat{X}_{j-1}\cup X_{j}\cup \hat{X}_{j+1}$.


Suppose that $j$ is odd and so $a\in \hat{X}_{j-1}$ and $c\in \hat{X}_{j+1}$. By $\PC 4$, $\hat{X}_{j-1}$ is anticomplete to $R_j$ and $\hat{X}_{j+1}$ is anticomplete to $L_j$. By $\PC 3$, $X_j=M_j\cup L_j\cup R_j$ and so $b\in M_j$. Hence $T\subseteq \hat{X}_{2i-2}\cup M_{2i-1}\cup \hat{X}_{2i}$ for some $i\in\{1,\dots,n\}$ and the lemma holds. 

Suppose that $j$ is even and so $a\in \hat{X}_{j-1}$, $b\in \hat{X}_j$ and $c\in \hat{X}_{j+1}$. By $\PC 4$, $\hat{X}_{j}$ is anticomplete to $L_{j-1}\cup R_{j+1}$. Therefore $a\in M_{j-1}\cup R_{j-1}$ and $c\in M_{j+1}\cup L_{j+1}$. Now by $\PC 6.2$ if $|\hat{X}_j|>1$ then $a$ and $c$ would not be adjacent or would not have the same neighbor in $X_j$ by $\PC 6.2$. Therefore we have that $|\hat{X}_j|=1$. By  $\PC 5.1$, $a\in R_{j-1}$ and $c\in L_{j+1}$. Hence $T\subseteq R_{2i-1}\cup \hat{X}_{2i}\cup L_{2i+1}$ with $|\hat{X}_{2i}|=1$ and the lemma holds.
\end{proof}

\begin{lemma}\label{l:Path_Cycle:ens_non_vide}
Let $G$ be a path of triangles graphs or a cycle of triangles graph. For $i\in \{1,\dots,n-1\} $, the following hold:
\begin{itemize}
    \item $M_{2i+1}\neq \emptyset$;
    \item $|M_{2i+1}|=1$ if and only if $|\hat{X}_{2i}|=|\hat{X}_{2i+2}|=1$;
    \item if $G$ is a cycle of triangles graph, then $M_1\neq \emptyset$.
\end{itemize}
\end{lemma}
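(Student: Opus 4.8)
The plan is to read the size of each odd block $M_{2i+1}$ directly off the matchings in $\PC 5.4$, and to treat the seam block $M_1$ of a cycle by a separate colouring argument. For the first two bullets, fix $1\le i\le n-1$ and argue by cases on the two nonempty neighbouring sets $\hat X_{2i},\hat X_{2i+2}$, at least one of which is a singleton by $\PC 1$. If $|\hat X_{2i}|=1$, then $\PC 5$ applies at index $i$ and, as $i<n$, clause $\PC 5.4$ matches $M_{2i+1}$ with $\hat X_{2i+2}$; if $|\hat X_{2i}|>1$, then $\PC 1$ forces $|\hat X_{2i+2}|=1$, so $\PC 5$ applies at index $i+1$ and, as $i+1>1$, clause $\PC 5.4$ matches $M_{2i+1}=M_{2(i+1)-1}$ with $\hat X_{2i}=\hat X_{2(i+1)-2}$. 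Either way $M_{2i+1}$ is matched to a nonempty set, so $M_{2i+1}\neq\emptyset$ and $|M_{2i+1}|=\max(|\hat X_{2i}|,|\hat X_{2i+2}|)$; since both sets are nonempty this maximum is $1$ exactly when both are singletons, giving the second bullet.

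The third bullet is genuinely different and is where I expect the real work to be. The case analysis above collapses at the seam: to name $M_1$ as some $M_{2i+1}$ forces $i=0$ (i.e.\ index $n$ cyclically), and to name it as some $M_{2i-1}$ forces $i=1$, and in each case the matching clause of $\PC 5.4$ one would want is switched off by its side condition ($i<n$, respectively $i>1$). The only part of $\PC 5.4$ that mentions $M_1$ is its final clause, which asserts the matching to $\hat X_2$ and $\hat X_{2n}$ only \emph{if} $M_1$ is nonempty, and so is useless for proving nonemptiness. A global argument is therefore unavoidable.

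To supply one, I would assume $M_1=\emptyset$ and derive that $G$ is $3$-colourable, contradicting the fact that a cycle of triangles graph is not $3$-colourable (this is exactly the role of the hypothesis $n\equiv 2$ modulo $3$, and is why such graphs occur in the non-$3$-colourable case of Lemma~\ref{l:CetS_non3col}). Colour each vertex of $X_i$ by $i\bmod 3$. By $\PC 2$ every edge joins blocks whose indices differ by $1$ or by $2$ modulo $3$, so this is a proper colouring on every edge except possibly those between $X_{2n}$ and $X_1$, whose colours agree because $2n\equiv 1$ modulo $3$. In the main case $|\hat X_{2n}|=1$, $\PC 5.2$ gives that $\hat X_{2n}$ is complete to $L_1\cup M_1$ while $\PC 4$ gives that $X_{2n}$ is anticomplete to $R_{2n+1}=R_1$; hence, once $M_1=\emptyset$, every edge between $X_{2n}$ and $X_1$ has its $X_1$-endpoint in $L_1$. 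Recolouring all of $L_1$ with colour $2$ then repairs every such edge.

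The delicate point I would have to nail down is that this recolouring creates no new conflict. The neighbours of $L_1$ lie in $X_{2n}$ (colour $1$), in $X_{2n-1}$ through the matching $R_{2n-1}\!\leftrightarrow\! L_1$ and through $\PC 5.3$ (colour $0$), and in $X_3$ (colour $0$), while $\PC 4$ makes $L_1$ anticomplete to $X_2$, so colour $2$ is indeed free for $L_1$; one must also treat the subcase $|\hat X_{2n}|>1$, where $\PC 6.1$ already forces $L_1=R_{2n-1}=\emptyset$ and there are no seam edges at all. This produces a proper $3$-colouring, the desired contradiction, so $M_1\neq\emptyset$. A more self-contained route, which I would try in parallel, is to prove directly that some edge between $\hat X_2$ and $\hat X_{2n}$ lies in a triangle: $\PC 2.1$ makes $\hat X_2$ complete to $\hat X_{2n}$, and by \RT together with Lemma~\ref{l:Path_Cycle:postition_triangles_tout} any triangle on such an edge must sit inside $\hat X_{2n}\cup M_1\cup\hat X_2$; the obstacle there is excluding the configuration in which all triangles meeting $\hat X_2\cup\hat X_{2n}$ are of the $R$--$\hat X$--$L$ type, which appears to need the same cyclic global input.
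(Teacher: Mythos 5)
Your treatment of the first two bullets coincides with the paper's: $\PC 1$ puts a singleton on at least one side, the corresponding clause of $\PC 5.4$ (applied at index $i$ or at index $i+1$, both admissible since $1\leq i\leq n-1$) matches $M_{2i+1}$ to the other side, and $|M_{2i+1}|=\max(|\hat{X}_{2i}|,|\hat{X}_{2i+2}|)$ follows; this is correct and is exactly what the paper does.

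For the third bullet you diverge, and the divergence introduces a genuine gap. The paper handles $M_1$ in one sentence: for a cycle of triangles graph all conditions are read modulo $2n$, so the seam index is not special --- $\PC 1$ makes one of $\hat{X}_{2n},\hat{X}_2$ a singleton and the matching clause of $\PC 5.4$ then matches $M_1$ to the other, exactly as at every other odd block. The guards ``$i>1$'' and ``$i<n$'' that you take as blocking this are artifacts of the path extremities (where $M_1=M_{2n+1}=\emptyset$ by $\PC 7.2$); in the cyclic reading, which is the one the paper's proof uses and which matches Chudnovsky and Seymour's original cycle definition, the matching conditions hold for every $i$. Your substitute --- assume $M_1=\emptyset$, produce a proper $3$-colouring by recolouring $L_1$, and contradict ``the fact that a cycle of triangles graph is not $3$-colourable'' --- rests on a fact that is established nowhere in this paper: Lemma~\ref{l:CetS_non3col} says only that a non-$3$-colourable orientable prismatic graph must be one of four types, not that every cycle of triangles graph is non-$3$-colourable. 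Worse, the appeal is dangerously close to circular: your own recolouring shows that a cycle of triangles graph with $M_1=\emptyset$ \emph{would} be $3$-colourable, so non-$3$-colourability cannot be invoked independently of the nonemptiness you are trying to prove unless you supply a proof of it that does not itself rely on a chain of triangles crossing the seam. Your fallback (forcing a triangle inside $\hat{X}_{2n}\cup M_1\cup\hat{X}_2$) hits the same wall, as you yourself observe. The repair is not a new global argument but simply the cyclic reading of $\PC 5.4$.
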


\begin{proof}
Let $G$ be a path of triangles graph or a cycle of triangles graph with good partitions $X_1,\dots,X_\nu$ for $\nu=2n+1$ in the first case and $\nu=2n$ in the second case. 

Let $i\in \{1,\dots,n-1\} $. By $\PC 1$, either $|\hat{X}_{2i}|=1$ or $|\hat{X}_{2i+2}|=1$. Hence,  by $\PC 5.4$, either $M_{2i+1}$ and $\hat{X}_{2i+2}$ are matched and so  $|M_{2i+1}|= |\hat{X}_{2i+2}|$, or $M_{2i+1}$ and $\hat{X}_{2i}$ are matched and so $|M_{2i+1}|= |\hat{X}_{2i}|$. In both cases $|M_{2i+1}|\geq 1$. Moreover $|M_{2i+1}|=1$ if and only if $|\hat{X}_{2i}|=|\hat{X}_{2i+2}|=1$. 
 
The case for $i=0$ is proved in the same way for a cycle of triangles graph by considering subscripts modulo $2n$.
\end{proof}

\begin{lemma}\label{l:X3X2n-2}
   If $G$ is a path of triangles graph with $n\geq2$, then $|M_3\cup L_3|\geq 2$ and $|M_{2n-1}\cup R_{2n-1}|\geq 2$. 
\end{lemma}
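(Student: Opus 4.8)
The plan is to prove the two inequalities by the same argument applied to the two ends of the path, so I would prove $|M_3\cup L_3|\geq 2$ in detail and then remark that $|M_{2n-1}\cup R_{2n-1}|\geq 2$ follows by the mirror-symmetric reasoning (swapping the roles of $L$ and $R$ and reading indices from the other extremity). The guiding idea is that $M_3$ is the ``middle'' piece that Lemma~\ref{l:Path_Cycle:ens_non_vide} already controls: since $n\geq 2$ we have $1\in\{1,\dots,n-1\}$, so that lemma gives $M_3=M_{2\cdot 1+1}\neq\emptyset$, hence $|M_3|\geq 1$. The whole difficulty is therefore to rule out the possibility that $M_3\cup L_3$ has exactly one element, i.e.\ that $|M_3|=1$ and $L_3=\emptyset$ simultaneously.

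The key steps I would carry out are as follows. First, split on $|M_3|$. If $|M_3|\geq 2$ we are immediately done since $M_3\subseteq M_3\cup L_3$. If $|M_3|=1$, I would use the second bullet of Lemma~\ref{l:Path_Cycle:ens_non_vide} (the ``if and only if'') to deduce $|\hat{X}_2|=|\hat{X}_4|=1$. The singleton $|\hat{X}_4|=1$ is exactly what lets me invoke the contrapositive of $\PC 7.3$: since $|\hat{X}_4|$ is \emph{not} $>1$, we must have $R_1\neq\emptyset$. Finally, since $|\hat{X}_2|=1$ (this also holds directly by $\PC 7.1$), condition $\PC 5.1$ applies with $i=1$, so $R_1$ and $L_3$ are matched; by the definition of ``matched'' this forces $|L_3|=|R_1|\geq 1$. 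Because $L_3$, $M_3$, $R_3$ are pairwise disjoint by $\PC 3$, I then conclude $|M_3\cup L_3|=|M_3|+|L_3|\geq 1+1=2$, as wanted.

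For the symmetric claim, the same template applies with $i=n-1$: Lemma~\ref{l:Path_Cycle:ens_non_vide} gives $M_{2n-1}\neq\emptyset$, and in the critical case $|M_{2n-1}|=1$ one obtains $|\hat{X}_{2n-2}|=|\hat{X}_{2n}|=1$, so the contrapositive of the second half of $\PC 7.3$ yields $L_{2n+1}\neq\emptyset$, and $\PC 5.1$ with $i=n$ makes $R_{2n-1}$ and $L_{2n+1}$ matched, giving $|R_{2n-1}|\geq 1$. The step I expect to be the main obstacle — really the only non-routine observation — is recognizing that the bad case $|M_3|=1$ simultaneously collapses \emph{both} $\hat{X}_2$ and $\hat{X}_4$ to singletons, since it is precisely this double collapse that unlocks the extremity condition $\PC 7.3$; the generic Lemma~\ref{l:Path_Cycle:ens_non_vide} alone only guarantees nonemptiness and cannot, by itself, produce the extra vertex in $L_3$. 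Everything else is bookkeeping with the index conventions and the disjointness from $\PC 3$.
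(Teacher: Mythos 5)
Your proof is correct and follows essentially the same route as the paper's: reduce to the case $|M_3|=1$ via Lemma~\ref{l:Path_Cycle:ens_non_vide}, force $|\hat{X}_2|=|\hat{X}_4|=1$, apply the contrapositive of $\PC 7.3$ to get $R_1\neq\emptyset$, and use the matching from $\PC 5.1$ to conclude $|L_3|\geq 1$. The only (immaterial) difference is that you obtain $|\hat{X}_4|=1$ from the ``if and only if'' clause of Lemma~\ref{l:Path_Cycle:ens_non_vide}, whereas the paper derives it from $\PC 7.1$ together with the matching of $M_3$ and $\hat{X}_4$ in $\PC 5.4$.
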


\begin{proof}
Both cases being symmetric, we only prove $|M_3\cup L_3|\geq 2$.

By Lemma~\ref{l:Path_Cycle:ens_non_vide} $|M_3|\geq 1$. If $|M_3|\geq 2$ then the result holds, hence, assume that $|M_3|=1$. By $\PC 7.1$, $|\hat{X}_2|=1$ and so, by $\PC 5.4$, $M_3$ is matched with $\hat{X}_4$. Therefore $|\hat{X}_4|=1$ and by $\PC 7.3$, $|R_1| \geq 1$. By $\PC 5.1$, $R_1$ and $L_3$ are matched. Therefore $|L_3|=|R_1|\geq 1$. Hence $|M_3\cup L_3|\geq 2$. 
\end{proof}

We now have all we need to prove the main results of this section. 

\begin{lemma}\label{l:path_cycle_hitting}
Let $G$ be a co-bridge-free prismatic graph. If $G$ is a path of triangles graph or a cycle of triangles graph, then $\Lambda(G)\leq 5$. 
\end{lemma}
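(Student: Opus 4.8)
The plan is to first reduce the problem to controlling a small family of ``core'' vertices, and then to bound the size of that family using co-bridge-freeness.

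For the reduction, I would combine Lemma~\ref{l:Path_Cycle:postition_triangles_tout} with $\PC 1$. Every triangle $T$ is, up to reading subscripts modulo $2n$, of one of the two types of Lemma~\ref{l:Path_Cycle:postition_triangles_tout}. A type-A triangle $T\subseteq R_{2i-1}\cup\hat X_{2i}\cup L_{2i+1}$ comes with $|\hat X_{2i}|=1$, so $T$ contains the unique vertex $\hat x_{2i}$ of $\hat X_{2i}$. A type-B triangle $T\subseteq\hat X_{2i-2}\cup M_{2i-1}\cup\hat X_{2i}$ contains a vertex of $\hat X_{2i-2}$ and a vertex of $\hat X_{2i}$, and by $\PC 1$ at least one of these two cores is a singleton, so again $T$ contains a vertex $\hat x_{2k}$ with $|\hat X_{2k}|=1$ (for paths the degenerate case $i=1$ is excluded by $\PC 7.2$, which forces $M_1=\emptyset$). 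Hence every triangle passes through a singleton core, and the set $S=\{\hat x_{2k}: |\hat X_{2k}|=1 \text{ and } \hat x_{2k} \text{ lies in a triangle}\}$ is a hitting set, so $\Lambda(G)\le |S|$. The whole problem therefore becomes: show $|S|\le 5$.

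The key step is to bound $|S|$ using the absence of an induced $C_4+2K_1$. In a prismatic graph any two vertex-disjoint triangles are matched and induce a prism, and inside a prism with triangles $\{s_1,s_2,s_3\}$, $\{t_1,t_2,t_3\}$ and matching $s_it_i$, the four vertices $\{s_1,s_2,t_1,t_2\}$ induce a $C_4$ (the only edges among them being $s_1s_2$, $t_1t_2$, $s_1t_1$, $s_2t_2$). So whenever $S$ contains two sufficiently separated singleton cores lying in vertex-disjoint triangles, I obtain an induced $C_4$. Assuming $|S|$ (equivalently, the number of triangle-bearing even columns) is large, I would then locate two further vertices $u,v$, non-adjacent to each other and anticomplete to this $C_4$, producing the forbidden $C_4+2K_1$. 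The candidates for $u,v$ come from columns far from the $C_4$: by $\PC 2$ two vertices in non-consecutive columns at distance $\not\equiv 2 \pmod 3$ are non-adjacent, and Lemma~\ref{l:X3X2n-2} guarantees at least two vertices in the relevant end-columns, hence a non-adjacent pair to draw from.

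The hard part is precisely the construction of the two anticomplete vertices. Prismaticity works against us: a vertex lying in a triangle disjoint from the $C_4$'s triangles is automatically adjacent to exactly one vertex of each of them, and one must arrange that this unique neighbour is always the ``third'' vertex $s_3$ (resp.\ $t_3$) left out of the $C_4$. This forces a careful choice, controlled by the orientation, of which two vertices of each triangle form the $C_4$, together with a mod-$3$ bookkeeping of column distances so that all four anticompleteness relations hold at once; the triangle-free vertices $X_{2i}\setminus\hat X_{2i}$, whose neighbourhoods are tightly restricted by $\PC 4$ and \RT, are the easiest source of such $u,v$ when they are present. Making this work is exactly what requires $n$ to be large, and I expect it to be the bulk of the proof. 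Once the analysis forces $|S|\le 5$, the hitting set $S$ settles the path case; for cycles $n\ge 5$, so the spread of triangles around the cycle is always large enough to carry out the construction, and I expect the cycle subcase to be vacuous (no co-bridge-free cycle of triangles graph exists) or covered by the same bound, with the extremities of the path handled separately via $\PC 7$ and Lemma~\ref{l:X3X2n-2}.
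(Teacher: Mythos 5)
Your opening reduction is the same as the paper's: the set $S=\{v\in\hat X_{2i}:|\hat X_{2i}|=1\}$ is a hitting set by Lemma~\ref{l:Path_Cycle:postition_triangles_tout} and $\PC 1$. But the step ``the whole problem therefore becomes: show $|S|\le 5$'' is a wrong turn, for two reasons. First, $|S|\le 5$ is simply not what the paper proves and is not true in the cases it must handle: in the cycle case with $n=8$ in which every $\hat X_{2i}$ is a singleton, $|S|=8$, yet the paper exhibits a \emph{different} hitting set $\hat X_2\cup\hat X_6\cup\hat X_{10}\cup\hat X_{14}$ of size $4$ (using that all $R_{2i-1}=\emptyset$ there); similarly, in the boundary path case $n=6$ the paper does not shrink $|S|$ below $6$ but instead proves that $S\setminus\hat X_6$ still hits every triangle, via a second application of the co-bridge argument (Claim~(1) of the paper's proof). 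Your plan contains no mechanism for replacing $S$ by a smaller hitting set in these boundary regimes, and your guess that the cycle subcase is vacuous is false --- the $n=8$ cycle survives the entire analysis and is dealt with constructively.

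Second, the central construction is left as a declared intention rather than carried out, and the specific route you propose makes it harder than necessary. You build the $C_4$ from two vertex-disjoint triangles (a prism), which forces you, exactly as you observe, to fight prismaticity: any further vertex in a triangle has exactly one neighbour in each of the two triangles, and you must arrange that this neighbour is always the omitted third vertex. The paper sidesteps this entirely by building its $C_4$'s \emph{not} from prisms: in its Claim~(2) the $C_4$ is a $K_{2,2}$ between two non-adjacent vertices of $M_{2i+1}\cup L_{2i+1}$ and the two singleton hats $\hat X_{2i}$, $\hat X_{2i+6}$ (adjacencies from $\PC 5.2$ and $\PC 2.1$), and the two isolated vertices are taken from columns $2i+10$ and $2i+13$, which are anticomplete to everything in sight by the unconditional distance-mod-$3$ rule $\PC 2.2$ --- no case analysis on which vertex of a triangle is hit. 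The resulting inequalities ($|X_{2i+10}\cup X_{2i+13}|\le 1$, etc.) are then what pins $n$ down to at most $6$ (paths) or exactly $8$ (cycles), after which the boundary analysis described above finishes the job. As written, your proposal establishes only the (correct, shared) first reduction; the quantitative heart of the argument and the boundary cases are missing, and the stated target $|S|\le 5$ cannot be reached.
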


\begin{proof}
Let $G$ be a co-bridge-free prismatic graph. Suppose that $G$ is a path of triangles graph or a cycle of triangles graph with good partitions $X_1,\dots,X_\nu$ for $\nu=2n+1$ in  the first case and $\nu=2n$ in the second case.  In the following, we read subscripts modulo $2n$ when $G$ is a cycle of triangles graph.

Observe that, by $\PC 1$, for all $i\in \{1,\dots,n\} $, $\hat{X}_{2i}\neq \emptyset$.

Set $S=\{v\in \hat{X}_{2i} \mid i=1,\dots, n\text{ and } |\hat{X}_{2i}|=1\}$. By $\PC 1$ at least one of  $\hat{X}_{2i}$ or $\hat{X}_{2i+2}$ has cardinality 1. Hence, by Lemma~\ref{l:Path_Cycle:postition_triangles_tout}, $S$ is a hitting set of $G$. Observe that $|S|\leq n$. Hence, we can assume that $n\geq 6$ for otherwise the lemma holds.

\begin{claim}\label{c:Path_Cycle:Pas_Deux_Pendants}
For $1\leq i< n$, if $L_{2i-1}\neq \emptyset$ and $R_{2i+1}\neq \emptyset$ then $|X_{2i+5}\cup X_{2i+8}|\leq 1$. 
\end{claim}

\begin{proofclaim}
First observe that when $G$ is a cycle of triangles graph $n=2$ modulo $3$. Therefore $n\geq 8$ and so the sets we consider do not intersect when reading subscripts modulo $2n$.

Suppose, by contradiction, that there exist $a\in L_{2i-1}$, $b\in R_{2i+1}$ and $c,d\in X_{2i+5}\cup X_{2i+8}$.
By $\PC 6.1$ we have $|\hat{X}_{2i+2}|=1$ and by Lemma~\ref{l:Path_Cycle:ens_non_vide}, $M_{2i+1}\neq \emptyset$. Set $e\in \hat{X}_{2i+2}$ and set $f\in M_{2i+1}$. By $\PC 5.2$, $be,fe\in E$.

We prove now that $ab, af\in E$. By $\PC 4$,  $X_{2i}$ is anticomplete to $L_{2i-1}$ and $R_{2i+1}$. Therefore $a$ has no neighbor in $\hat{X}_{2i}$. If $|\hat{X}_{2i}|>1$ then $af, ab\in E$ by $\PC 6.2$. If $|\hat{X}_{2i}|=1$ then $L_{2i-1}$ is complete to $X_{2i+1}$ by $\PC 5.3$, and $ab, af\in E$.

Since $X_{2i+1}$ is a stable set, $bf\notin E$. By $\PC 2.2$, $ae\notin E$. Hence, $\{a,b,e,f\}$ induces a $C_4$.

Now, by $\PC 2.2$, $cd\notin E$ and $\{c,d\}$ is anticomplete to $\{a,b,e,f\}$. Hence $\{a,b,c,d,e,f\}$ induces a co-bridge, a contradiction. 
\end{proofclaim}

\begin{claim}\label{c:Path_Cycle:Pas_gros_ou_pas_long}
For $1\leq i \leq n$, if $|\hat{X}_{2i}|=1$ and $|M_{2i+1}\cup L_{2i+1}|\geq 2$ then $|X_{2i+10}\cup X_{2i+13}|\leq 1$.

\end{claim}

\begin{proofclaim}
First observe that when $G$ is a cycle of triangles graph $n\geq 8$ and so the sets we consider do not intersect when reading subscripts modulo $2n$.

Suppose for the sake of a contradiction that there exist $a,b\in M_{2i+1}\cup L_{2i+1}$ and $c,d\in X_{2i+10}\cup X_{2i+13}$. By $\PC 2.2$ and since all sets $X_j$ are stable sets, $\{a,b,c,d\}$ is a stable set. 

Let $e$ be the only vertex in $\hat{X}_{2i}$. By $\PC 5.2$, $e$ is adjacent to $a$ and $b$. By $\PC 2.2$, $e$ is non-adjacent to $c$ and $d$. 

Let $f$ be a vertex in $\hat{X}_{2i+6}$. By $\PC 2.2$, $cf,df,ef\notin E$. By $\PC 2.1$, $af,bf\in E$. Hence, $\{a,b,c,d,e,f\}$ induces a co-bridge, a contradiction.
\end{proofclaim}

\begin{claim}\label{c:Path_Cycle:ML3}
If $|M_3\cup L_3|\geq 2$ then $\Lambda (G)\leq 5$.  
\end{claim}

\begin{proofclaim}
Suppose that $|M_3\cup L_3|\geq 2$. Observe first that we can assume that $|\hat{X}_2|=1$. By $\PC 7.1$ it is obvious when $G$ is a path of triangles graph. When $G$ is a cycle of triangles graph, if $|\hat{X}_2|>1$ then $|\hat{X}_0|=1$ by $\PC 1$ and $|M_1|\geq 2$ by Lemma~\ref{l:Path_Cycle:ens_non_vide}. Hence we can relabel all sets by adding $2$ to each label. Hence we can relabel all sets by updating indices $i$ such that $i\leftarrow i+2$ modulo ($2n+1$).

By (\ref{c:Path_Cycle:Pas_gros_ou_pas_long}) with $i=1$, $|X_{12}\cup X_{15}|\leq 1$. Hence $|X_{12}|\leq 1$,  $X_{15}=\emptyset$ and $n\leq 7$. 

If $G$ is a cycle of triangles graph, we already observed that it has a hitting set of size at most $5$ or $n\geq 8$. Hence we consider that $G$ is a path of triangles graph with $6\leq n\leq 7$.

Suppose that $n=7$. Since $X_{15}=\emptyset$, $L_{15}=\emptyset$ and by $\PC 7.3$ $|\hat{X}_{12}|\geq 2$, a contradiction is obtained. Hence $n=6$. 

Recall that $S$ is the hitting set of $G$ defined earlier. If there exits at least one $1\leq i\leq n$ such that $|\hat{X}_{2i}|>1$ then $|S|\leq n-1$ and the claim holds. Hence, suppose that for all $1\leq i\leq n$, $|\hat{X}_{2i}|=1$. Observe that, by Lemma~\ref{l:Path_Cycle:ens_non_vide}, $|M_{2i+1}|=1$ for all $1\leq i\leq n-1$, and in particular $|M_3|=1$. Therefore $L_3\neq \emptyset$. 

We prove that $S\setminus \hat{X}_6$ is a hitting set of $G$. 

Suppose, by contradiction, that there exists a triangle $T=\{a,b,c\}$ not covered by $S\setminus X_6$. Since $S$ is a hitting set of $G$, $T\cap \hat{X}_6 \neq \emptyset$.  Since $\hat{X}_4\cup \hat{X}_8\subset S\setminus X_6$, we have $T\cap (\hat{X}_4\cup \hat{X}_8)=\emptyset$. Hence, by Lemma~\ref{l:Path_Cycle:postition_triangles_tout}, we have  $T \subset R_5\cup \hat{X}_6\cup L_7$. Since all sets are stable sets, it follows that $R_5\neq \emptyset$. 

Hence $L_3\neq \emptyset$ and $R_5\neq \emptyset$. By (\ref{c:Path_Cycle:Pas_Deux_Pendants}) (with $i=2$),  $|X_9\cup X_{12}|\leq 1$ and so $X_{12}=\emptyset$, a contradiction to $\PC 1$. Hence, such $T$ does not exist and $S\setminus \hat{X}_6$ is a hitting set of $G$ of size at most $5$. 

Hence, such $T$ does not exist and $S\setminus \hat{X}_6$ is a hitting set of $G$ of size at most $5$. 
\end{proofclaim}

By Lemma~\ref{l:X3X2n-2}, if $G$ is a path of triangles graph, then $|M_3\cup L_3|\geq 2$. By (\ref{c:Path_Cycle:ML3}), it follows that $\Lambda (G)\leq 5$ and the lemma holds.

If $G$ is a cycle of triangles graph then either $n= 8$ or $n\geq 11$ as $n=2$ modulo $3$.

Suppose that $n\geq 11$. By Lemma~\ref{l:Path_Cycle:ens_non_vide} and $\PC 1$, for all $1\leq i\leq 22$, $X_i\neq \emptyset$. Let $a\in X_1$, $b\in X_4$, $c\in X_{15}$, $d\in X_{18}$, $e\in X_8$ and $f\in X_{11}$. By $\PC 2$ we have that $\{a,b,c,d,e,f\}$ is a co-bridge with $\{a,b,c,d\}$ being a $C_4$, a contradiction. Hence $n=8$. 
 
Observe that, if there exists $0\leq i \leq 7$ such that $|M_{2i+1}\cup L_{2i+1}|>1$ then, up to relabeling, we can assume that $|M_3\cup L_3|>1$, and the lemma holds by (\ref{c:Path_Cycle:ML3}). Therefore suppose that $|M_{2i+1}\cup L_{2i+1}|=1$ for all $0\leq i \leq 7$. By Lemma~\ref{l:Path_Cycle:ens_non_vide}, it follows that $|M_{2i+1}|=1$, $ L_{2i+1}=\emptyset$ and $|\hat{X}_{2i+2}|=1$ and then by $\PC 5.1$, $R_{2i-1}$ and $L_{2i+1}$ are matched, so $R_{2i-1}=\emptyset$ for all $0\leq i \leq 7$. Now, by Lemma~\ref{l:Path_Cycle:postition_triangles_tout}, $S'=\hat{X}_2\cup \hat{X}_6 \cup \hat{X}_{10} \cup \hat{X}_{14}$ is a hitting set of size $4$.
\end{proof}

The following lemma states a more specific result for the hitting set on path of triangles graphs with at most $9$ vertices.  It will be helpful when we discuss the $3$-colorable prismatic graphs in Section~\ref{s:3colo}.

\begin{lemma}\label{l:Path_V9_hitting2V2}
   If $G$ is a path of triangles graph with $\vert V(G)\vert\leq 9$ then $\Lambda (G)\leq 2$.
\end{lemma}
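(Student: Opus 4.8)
The plan is to reuse the hitting set $S=\{v\in\hat X_{2i}\mid 1\le i\le n,\ |\hat X_{2i}|=1\}$ from the proof of Lemma~\ref{l:path_cycle_hitting}. It has size at most $n$ and is a hitting set: by Lemma~\ref{l:Path_Cycle:postition_triangles_tout} every triangle is either of the form $R_{2i-1}\cup\hat X_{2i}\cup L_{2i+1}$ with $|\hat X_{2i}|=1$, and so meets $S$, or of the form $\hat X_{2i-2}\cup M_{2i-1}\cup\hat X_{2i}$, in which case $\PC 1$ guarantees that at least one of $\hat X_{2i-2},\hat X_{2i}$ is a singleton, so the triangle again meets $S$. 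Hence it suffices to control $n$: if $n\le 2$ then $|S|\le 2$ and we are done, so the whole difficulty lies in the remaining small values of $n$.

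First I would prove that $\vert V(G)\vert\le 9$ forces $n\le 3$. Counting vertices, the even sets contribute $\sum_{i=1}^n|X_{2i}|\ge n$ since each $\hat X_{2i}\neq\emptyset$ by $\PC 1$. Among the odd sets, Lemma~\ref{l:Path_Cycle:ens_non_vide} gives $|X_{2i+1}|\ge 1$ for $1\le i\le n-1$, and for $n\ge 2$ Lemma~\ref{l:X3X2n-2} sharpens this to $|X_3|\ge 2$ and $|X_{2n-1}|\ge 2$. For $n\ge 3$ this yields $\vert V(G)\vert\ge 2n+1$, so $n\ge 5$ is impossible and $n=4$ can occur only with exactly $9$ vertices. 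In that boundary case every inequality is tight: all $\hat X_{2i}$ are singletons (in particular $|\hat X_4|=1$) and $X_1=\emptyset$; since $L_1=M_1=\emptyset$ by $\PC 7.2$, this gives $R_1=\emptyset$, which contradicts $\PC 7.3$ (it would force $|\hat X_4|>1$). Hence $n\le 3$.

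It remains to treat $n=3$, which is the main obstacle: here $|S|$ can be $3$, and on more than $9$ vertices the three triangles $R_1\cup\hat X_2\cup L_3$, $R_3\cup\hat X_4\cup L_5$, $R_5\cup\hat X_6\cup L_7$ would be pairwise vertex-disjoint (consecutive ones meet $X_3$, resp.\ $X_5$, only in the disjoint parts $L$ and $R$), giving $\Lambda(G)\ge 3$. By Lemma~\ref{l:Path_Cycle:postition_triangles_tout} and~\RT, every triangle of $G$ lies in one of the five sets $R_1\cup\hat X_2\cup L_3$, $R_3\cup\hat X_4\cup L_5$, $R_5\cup\hat X_6\cup L_7$, $\hat X_2\cup M_3\cup\hat X_4$, $\hat X_4\cup M_5\cup\hat X_6$ (the middle type at $M_1$ vanishing by $\PC 7.2$). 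Writing $\hat X_2=\{e_2\}$ and $\hat X_6=\{e_6\}$ (singletons by $\PC 7.1$), I distinguish whether $|\hat X_4|=1$. If $|\hat X_4|>1$, then $R_3=L_5=\emptyset$ by $\PC 6.1$, so no triangle meets $R_3\cup\hat X_4\cup L_5$, and $\{e_2,e_6\}$ hits the remaining four families. If $|\hat X_4|=1$, write $\hat X_4=\{e_4\}$; the budget $\vert V(G)\vert\le 9$ forbids all three ``outer'' triangles $R_1\cup\hat X_2\cup L_3$, $R_3\cup\hat X_4\cup L_5$, $R_5\cup\hat X_6\cup L_7$ from being present simultaneously, since they would force $|X_3|,|X_5|\ge 3$ and $|X_1|,|X_7|\ge 1$, whence $\vert V(G)\vert\ge 11$. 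Depending on which outer triangle is absent, one checks that one of $\{e_4,e_6\}$, $\{e_2,e_6\}$, $\{e_2,e_4\}$ is a hitting set: $e_4$ meets the three families touching $\hat X_4$, while $e_2$ and $e_6$ meet the two surviving outer families together with the bridging families $\hat X_2\cup M_3\cup\hat X_4$ and $\hat X_4\cup M_5\cup\hat X_6$.

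The only genuinely delicate points are the $n=4$ equality case, where the bound $2n+1=9$ is met exactly and $\PC 7.3$ must be invoked to rule it out, and the $n=3$ bookkeeping, where the vertex budget must be converted into the statement that the three mutually disjoint outer triangles cannot coexist; everything else reduces to routine substitution into the conditions $\PC 1$--$\PC 7$.
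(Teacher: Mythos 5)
Your proposal is correct and follows essentially the same route as the paper: the same hitting set $S$, the same counting via Lemma~\ref{l:Path_Cycle:ens_non_vide}, Lemma~\ref{l:X3X2n-2} and $\PC 7.3$ to pin down $n=3$, and the same endgame of using the vertex budget to show that one family of triangles through $\hat{X}_4$ is absent so that $S$ can be shrunk to size $2$. The only cosmetic differences are that the paper folds $\PC 7.3$ into the initial count to exclude $n=4$ outright (reaching $|V(G)|\geq 2n+3$ rather than $2n+1$), and in the case $|\hat{X}_4|=1$ it shows specifically that the middle family $R_3\cup\hat{X}_4\cup L_5$ is triangle-free rather than arguing over which of the three outer families vanishes.
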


\begin{proof}
    Let $G$ be a path of triangles graph with $|V(G)|\leq 9$. Let $X_1,\dots X_{2n+1}$ be a good partition of $G$.
    
    First observe that $| X_{2i}| \geq \vert\hat{X}_{2i}\vert\geq 1$ for $1\leq i\leq n$ by $\PC 1$. 
    
    By Lemma~\ref{l:Path_Cycle:postition_triangles_tout}, $S=\{v\in \hat{X}_{2i} \mid i=1,\dots, n\text{ and } |\hat{X}_{2i}|=1\}$ is a hitting set for $G$. Hence we can assume $n\geq 3$ for otherwise the lemma holds.

   By Lemma~\ref{l:X3X2n-2} and $\PC 3$, we have  $|X_3|\geq |M_3\cup L_3|\geq 2$ and $|X_{2n-1}|\geq |M_{2n-1}\cup R_{2n-1}|\geq 2$. In addition, by $\PC 7.3$ either $|X_1|\geq 1$ or $|X_4|\geq 2$ and either $|X_{2n+1}|\geq 1$ or $|X_{2n-2}|\geq 2$. Therefore $ |V(G)|=\sum_{1\leq j\leq 2n+1}|X_j|\geq 2n+3$. Since $|V(G)|\leq 9$ it follows that $n=3$. 
    
    At this point, if $|\hat{X}_4|\geq 2$ then by definition of $S$, $|S|\leq 2$ and the lemma holds. Hence assume that $|\hat{X}_4|=1$ and so, by $\PC 7.3$, $|X_1|\geq 1$ and $|X_7|\geq 1$ (as $2n-2=4$). 
    
    We show that $S'=S\sm \hat{X_4}$ is a hitting set of $G$. Since $S$ is a hitting set of $G$, by Lemma~\ref{l:Path_Cycle:postition_triangles_tout}, if a triangle is not covered by $S'$ then its vertices belong to $R_3\cup \hat{X}_4 \cup L_5$. Hence, $|R_3|\geq 1$ and $|L_5|\geq 1$.  Using previous observation, $|X_3|\geq |R_3\cup M_3\cup L_3|\geq 3$ and $|X_5|\geq |L_5\cup M_5\cup R_5|\geq 3$. It follows that $9 \geq \sum_{1\leq j\leq 2n+1}|X_j|\geq 2n+5$ and so $n\leq 2$, a contradiction.
\end{proof}

\section{$3$-Colorable Prismatic Graphs}\label{s:3colo}
The aim of this section is to prove the following: 

\begin{restatable}{theorem}{tTcolorable}\label{t:3_colorable}
If $G$ is a 3-colorable co-bridge-free prismatic graph then
$\Lambda(G)\leq 5$ or $G$ is a Schl\"afli-prismatic graph. 
\end{restatable}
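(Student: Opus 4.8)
The plan is to use the Chudnovsky--Seymour structure theorem for 3-colored prismatic graphs (Theorem~\ref{tcs:prime}) as the engine: $G$ admits a worn chain decomposition whose terms all lie in $\mathcal{Q}_0\cup\mathcal{Q}_1\cup\mathcal{Q}_2$, and I would analyse this decomposition, using co-bridge-freeness to keep it short. First I would dispose of the prime case, where $G$ is a single term. If $G\in\mathcal{Q}_0$ then $G$ is triangle-free and $\Lambda(G)=0$. If $G\in\mathcal{Q}_1$ then $G=L(K_{3,3})$, and since the triangles of $L(K_{3,3})$ are exactly the edge-neighbourhoods of the six vertices of $K_{3,3}$, a hitting set corresponds to an edge cover of $K_{3,3}$; as $K_{3,3}$ has a perfect matching its minimum edge cover has size $3$, so $\Lambda(G)=3\le 5$. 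If $G\in\mathcal{Q}_2$ then $G$ is a path of triangles graph and Lemma~\ref{l:path_cycle_hitting} gives $\Lambda(G)\le 5$ directly. Since none of the prime classes, nor (I expect) their worn chain combinations, is a Schläfli-prismatic graph, I anticipate the genuine target in this section is the bound $\Lambda(G)\le 5$ outright, the Schläfli alternative being retained only to align with Theorem~\ref{t:la_total}.

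The substance is the non-prime case. Here I would unfold the worn chain decomposition into a linear arrangement of prime terms and exploit the feature — mirroring condition $\PC 2$ for path of triangles graphs — that terms lying far apart in the chain are anticomplete to one another. The central claim I would establish is that only a bounded number of terms can contain a triangle. The mechanism is precisely the one used throughout Lemma~\ref{l:path_cycle_hitting}: two triangles sitting in nearby terms are disjoint, hence induce a prism and in particular a $C_4$; adjoining two vertices drawn from a term far away in the chain, which are anticomplete to that $C_4$, yields an induced $C_4+2K_1$, i.e.\ a co-bridge, contradicting the hypothesis. Applying this at every scale should force the chain of triangle-bearing terms to be short and, moreover, should force every $\mathcal{Q}_2$ term occurring in a non-trivial decomposition to be small; for such a term, once its vertex count is shown to be at most $9$, I would invoke Lemma~\ref{l:Path_V9_hitting2V2} to obtain $\Lambda\le 2$.

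With the number of triangle-bearing terms bounded and each carrying a hitting set of size at most $2$ (a small $\mathcal{Q}_2$ term) or $3$ (a $\mathcal{Q}_1$ term), and with the $\mathcal{Q}_0$ terms contributing no triangles, I would assemble a global hitting set by unioning the local ones and trimming along the gluing interfaces, exactly as the claim-based analysis in the proof of Lemma~\ref{l:path_cycle_hitting} whittles its candidate set down to size $5$. Checking that no triangle straddles an interface in a way that escapes the count should follow from the decomposition being worn.

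The hard part will be this composition step. Everything above rests on translating the intuition — distant terms are anticomplete, nearby disjoint triangles form prisms — into precise statements about the worn chain decomposition, so that a co-bridge can always be exhibited when the chain is too long or a term too large. Controlling the behaviour at the gluing interfaces, and pinning the final constant at $5$ rather than a weaker bound, is the delicate accounting, and is where I expect the bulk of the work to lie, analogous to but more involved than the claims inside the proof of Lemma~\ref{l:path_cycle_hitting}.
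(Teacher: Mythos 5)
Your prime case is correct and matches the paper's Lemma~\ref{l:3-col_prime}. But there are two genuine problems with the non-prime plan.

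First, your expectation that the Schl\"afli alternative is vacuous here and that the real target is $\Lambda(G)\leq 5$ outright is false, and this sinks the composition step. Take $G=L(K_{3,3})\WC L(K_{3,3})$: this is a $3$-colorable, co-bridge-free, prismatic graph (it is an induced subgraph of the complement of the Schl\"afli graph, namely the subgraph induced by $R\cup S$ in the paper's notation), and it contains six pairwise vertex-disjoint triangles, so $\Lambda(G)=6>5$. No amount of trimming along interfaces can fix this, because the six triangles are disjoint. The paper handles exactly this configuration in Lemma~\ref{l:LK33wLK33schlafli}, which shows that a worn $2$-chain of two $\mathcal{Q}_1$ terms \emph{is} a Schl\"afli-prismatic graph; the second alternative in the theorem is genuinely needed in the $3$-colorable case, not just inherited from the non-orientable case. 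Your plan as stated would attempt to prove a false statement.

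Second, the mechanism you propose for bounding the chain --- ``terms lying far apart in the chain are anticomplete to one another,'' mirroring $\PC 2$ --- does not hold for worn chain decompositions. By $\mathcal{W}2$ and $\mathcal{W}3$, for \emph{every} pair $i<j$, regardless of $j-i$, each vertex of $A_i$ lying in a triangle is adjacent to each vertex of $B_j$ lying in a triangle (and cyclically for the other color pairs); adjacency between terms is governed by color, not by distance, and does not decay along the chain. So you cannot pick two vertices from a faraway term and expect them to be anticomplete to a $C_4$ sitting in nearby terms. The co-bridges the paper actually exhibits come from the color structure instead: Property~\ref{p:color}(1) guarantees that some \emph{color class} of another term is anticomplete to any two chosen colors, and the $C_4$'s used are \emph{bicolored} ones produced either inside a single path-of-triangles term (Lemma~\ref{l:canonicallycolored_hitting3ORcycle4}, which ties $\Lambda=4$ to a bicolored $C_4$ and $\Lambda\geq5$ to a bicolored $C_4+K_1$) or across a prism plus a triangle via Lemma~\ref{l:P3bicolored}. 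Reworking your argument around these color-based anticompleteness statements, together with the bound of Lemma~\ref{l:prism_worn_V9_NoTriangle} on the size of the other term, is what the paper's case analysis on $\Lambda(G^*)$ and $\Lambda(G_j)$ accomplishes; without that substitution your co-bridge constructions do not go through.
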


Note that every $3$-colorable prismatic graph is orientable (shown in \cite{ChudAndSey1}~4.1). Hence, all results in this section holds for all $3$-colorable prismatic graphs as the notion of orientable is implied.

In order to do so, we recall that $3$-colorable prismatic graphs can be described as a worn chain decomposition into prime graphs as presented by Chudnovsky and Seymour \cite{ChudAndSey1}. We first show that every prime co-bridge-free $3$-colorable prismatic graph admits a hitting set of size at most $5$. Then we define the worn chain decomposition for a graph and show some interesting results about it. Afterwards we handle the exception for a 3-colorable co-bridge-free prismatic graph to have a hitting set of size at most $5$ by introducing the Schl\"afli-prismatic graphs. Finally, we prove Theorem~\ref{t:3_colorable}.

\subsection{Prime graphs}\label{ss:defPrime}
First recall that a 3-colorable prismatic graph is said to be prime if it is in $\mathcal{Q}_0\cup\mathcal{Q}_1\cup\mathcal{Q}_2$. The class $\mathcal{Q}_0$ consists of all 3-colored graphs with no triangle, the class $\mathcal{Q}_1$ corresponds to all 3-colored graphs which are isomorphic to $L(K_{3,3})$, and the class $\mathcal{Q}_2$ contains all canonically-colored path of triangles graphs (see Section~\ref{s:Cycle_Path}).

We are now ready to prove the following lemma:    
    
\begin{restatable}{lemma}{primecolorable} \label{l:3-col_prime}
Every prime 3-colorable co-bridge-free prismatic graph has a hitting set of size at most $5$.
\end{restatable}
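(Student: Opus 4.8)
The plan is to prove Lemma~\ref{l:3-col_prime} by handling each of the three prime classes $\mathcal{Q}_0$, $\mathcal{Q}_1$, $\mathcal{Q}_2$ separately, since a prime 3-colorable prismatic graph is by definition a member of $\mathcal{Q}_0\cup\mathcal{Q}_1\cup\mathcal{Q}_2$. For each class I would exhibit a hitting set of size at most $5$, using the co-bridge-free hypothesis only where the unrestricted class would otherwise fail to have a bounded hitting set.

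First, the class $\mathcal{Q}_0$ consists of triangle-free graphs, so the empty set is a hitting set and $\Lambda(G)=0\leq 5$ trivially; no co-bridge-free assumption is needed. Second, for $\mathcal{Q}_1$ the graph $G$ is isomorphic to $L(K_{3,3})$, a fixed finite graph on $9$ vertices (see Figure~\ref{f:LK33}). Here I would simply point to the concrete structure: the three ``rows'' $\{a_j^i : 1\le j\le 3\}$ for fixed $i$ form three vertex-disjoint triangles, and one checks directly on this finite graph that a small set of vertices meets every triangle. I would verify by inspection (or a short counting argument using that every edge lies in a triangle) that some set of at most $5$ vertices is a hitting set; since $|V(G)|=9$ this is a finite check and again the co-bridge-free hypothesis is not really needed.

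The substantive case is $\mathcal{Q}_2$: here $G$ is a canonically-colored path of triangles graph. This is exactly where I can invoke the work already done in Section~\ref{s:Cycle_Path}. Since $G\in\mathcal{Q}_2$ is in particular a path of triangles graph, and $G$ is co-bridge-free by hypothesis, Lemma~\ref{l:path_cycle_hitting} applies directly and gives $\Lambda(G)\leq 5$. Thus the entire $\mathcal{Q}_2$ case reduces to a one-line appeal to the already-established Lemma~\ref{l:path_cycle_hitting}.

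The main obstacle, if any, is the $\mathcal{Q}_1$ case: one must be careful that $L(K_{3,3})$ genuinely admits a hitting set of size at most $5$ and not merely that this is plausible. Because every edge of $L(K_{3,3})$ lies in a triangle and the graph decomposes into three disjoint triangles (rows) and also three disjoint triangles (columns), the minimum hitting set must meet all such triangles; I expect the correct bound to follow from selecting one vertex per disjoint triangle together with a short verification that no triangle is missed. Since $\mathcal{Q}_0$ and $\mathcal{Q}_2$ are essentially immediate (the former trivially, the latter via Lemma~\ref{l:path_cycle_hitting}), the only real content is confirming the finite bound for $L(K_{3,3})$, after which the lemma follows by taking the union of the three cases.
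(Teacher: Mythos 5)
Your proposal is correct and follows essentially the same route as the paper: a three-way case split over $\mathcal{Q}_0$ (triangle-free, so $\Lambda(G)=0$), $\mathcal{Q}_1$ (a finite check on $L(K_{3,3})$, where the paper simply exhibits the explicit hitting set $\{a_i^1,a_i^2,a_i^3\}$ of size $3$), and $\mathcal{Q}_2$ (a direct appeal to Lemma~\ref{l:path_cycle_hitting}). The only quibble is that under the paper's adjacency convention the ``rows'' $\{a_j^i : 1\le j\le 3\}$ for fixed $i$ are stable sets rather than triangles (the triangles of $L(K_{3,3})$ correspond to choices of pairwise distinct upper and lower indices), but this mislabeling does not affect the finite verification you describe.
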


\begin{proof}\label{proof_primecolorable}

    Let $G$ be a prime 3-colorable co-bridge-free prismatic graph and $G\in \mathcal{Q}_0\cup \mathcal{Q}_1\cup \mathcal{Q}_2$. 

    If $G \in \mathcal{Q}_0$ then it has no triangle. Therefore $\Lambda(G)=0$.
    
    If $G \in \mathcal{Q}_1$, then it is isomorphic to $L(K_{3,3})$. Therefore $\Lambda (G)=3$, as $\{a_i^1, a_i ^2, a_i^3\}$ is a hitting set for any $i\in\{1,2,3\}$ (see Figure~\ref{f:LK33}). 

    If $G \in \mathcal{Q}_2$, then it is a canonically-colored path of triangles graph. Moreover it is co-bridge-free, therefore Lemma~\ref{l:path_cycle_hitting} ensures that $\Lambda(G)\leq 5$.
\end{proof}


Chudnovsky and Seymour \cite{ChudAndSey1} proved that prime graphs are the building blocks of any $3$-colorable orientable prismatic graphs. In particular, the canonically-colored path of triangles graphs is the only class of prime graphs that could have hitting set larger than 3. Hence, to use canonically-colored path of triangles graphs as building block to non-prime $3$-colorable orientable prismatic graphs we need the next lemma. 

\begin{lemma}\label{l:canonicallycolored_hitting3ORcycle4}
    Let $(G,A,B,C)$ be a canonically-colored path of triangles graph. Then one of the following holds:
    \begin{itemize}
        \item $\Lambda(G)\leq3$;
        \item $\Lambda(G)=4$ and $G$ contains a bicolored $C_4$;
        \item $\Lambda(G)\geq 5$ and $G$ contains a bicolored $C_4+K_1$.
    \end{itemize}
\end{lemma}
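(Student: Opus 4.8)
The plan is to read the trichotomy as a case split on the value of $\Lambda(G)$: the first bullet is the case $\Lambda(G)\le 3$, the second is $\Lambda(G)=4$, and the third is $\Lambda(G)\ge 5$. Since a bicolored $C_4+K_1$ contains a bicolored $C_4$, it suffices to prove the two implications ``$\Lambda(G)\ge 4\Rightarrow G$ has a bicolored $C_4$'' and ``$\Lambda(G)\ge 5\Rightarrow G$ has a bicolored $C_4+K_1$''. Throughout I fix a good partition $X_1,\dots,X_{2n+1}$ and the canonical coloring $A_k=\bigcup(X_i:i\equiv k \pmod 3)$; by Lemma~\ref{l:Path_Cycle:postition_triangles_tout} every triangle sits around three consecutive indices and is therefore rainbow, and by \RT\ a triangle vertex at an even index lies in the corresponding $\hat{X}$. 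The small cases are free: if $|V(G)|\le 9$ then $\Lambda(G)\le 2$ by Lemma~\ref{l:Path_V9_hitting2V2}, so the first bullet holds and I may assume the path is long. The single tool that does all the work is condition $\PC 2.1$: for $i<j$ with $j-i\equiv 2\pmod 3$ the sets $X_i$, $X_j$ are complete to each other except in two tightly controlled situations ($i,j$ both odd with $j=i+2$, or $i,j$ both even with the chosen vertices outside $\hat{X}_i$, $\hat{X}_j$).

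First I would isolate the construction that manufactures the witnesses. Pick two colors $\alpha,\beta$ with $\alpha-\beta\equiv 2\pmod 3$, two triangle vertices $w_1,w_2$ of color $\beta$ at positions $p_1<p_2$, and two triangle vertices $u_1,u_2$ of color $\alpha$ at positions $q_1<q_2$ with $p_2<q_1$. Every cross distance $q_a-p_b$ is then positive and $\equiv\alpha-\beta\equiv 2\pmod 3$, so by $\PC 2.1$ each pair $u_aw_b$ is an edge, while $u_1u_2$ and $w_1w_2$ are non-edges (same color); hence $\{u_1,u_2,w_1,w_2\}$ induces a $C_4$ using only the colors $\alpha,\beta$. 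To make the cross pairs genuine edges I must dodge the two $\PC 2.1$ exceptions: the even--even exception never fires because even-indexed triangle vertices automatically lie in $\hat X$ by \RT, and the odd--odd-at-distance-$2$ exception is avoided by spacing the chosen positions. For the $+K_1$ I add a fifth triangle vertex $c$ of color $\beta$ at a position larger than $q_2$ and non-consecutive to it: then $c$ is non-adjacent to $w_1,w_2$ (same color) and, lying at distance $\equiv\beta-\alpha\equiv 1\pmod 3$ and $>1$ from $u_1,u_2$, anticomplete to them by $\PC 2.2$. Thus $\{u_1,u_2,w_1,w_2\}\cup\{c\}$ is a bicolored $C_4+K_1$. (The local variant $\{a,b,e,f\}$ with $a,b\in M_{2i+1}\cup L_{2i+1}$, $e\in\hat X_{2i}$ when $|\hat X_{2i}|=1$, and $f\in\hat X_{2i+6}$, together with a vertex of $X_{2i+10}\cup X_{2i+13}$, is exactly the $C_4$ and the half of the co-bridge that already appear in the proof of Lemma~\ref{l:path_cycle_hitting}; it can serve as a shortcut when a cluster is thick.)

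The heart of the argument is then the quantitative bridge: turning the hitting-set lower bound into enough spread-out, color-separated triangle vertices to run the engine. The key fact I would establish is a K\"onig-type identity for path of triangles graphs, namely that $\Lambda(G)$ equals the maximum number of pairwise disjoint triangles. This should follow from the interval-like overlap structure of triangles given by Lemma~\ref{l:Path_Cycle:postition_triangles_tout}: width (large $\hat X$, $M$, $R$, or $L$) never increases either quantity because the triangles through a single cluster all share a vertex, while consecutive clusters overlap in single vertices, so a minimum transversal matches a maximum packing along the resulting path. Granting this, $\Lambda(G)\ge 4$ produces four pairwise disjoint triangles $T_1<T_2<T_3<T_4$ ordered along the path; taking the color-$\beta$ vertices of $T_1,T_2$ and the color-$\alpha$ vertices of $T_3,T_4$ gives $p_2<q_1$ for free, and the construction yields a bicolored $C_4$. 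Similarly $\Lambda(G)\ge 5$ gives a fifth disjoint triangle further to the right, supplying the anticomplete vertex $c$ and hence the bicolored $C_4+K_1$.

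The routine part is the adjacency and color bookkeeping, which is forced by $\PC 2$, $\PC 5$ and the definition of the canonical coloring. The main obstacle is everything in the previous paragraph: proving the K\"onig-type identity (or at least that $\Lambda\ge 4$ yields four disjoint triangles) uniformly across thin and thick configurations, choosing the colors $\alpha,\beta$ and the specific triangle vertices so that the odd--odd-at-distance-$2$ exception is always avoidable, and controlling the two ends of the path through $\PC 7$ and Lemma~\ref{l:X3X2n-2} when the four or five triangles crowd a boundary. This is the same delicate global counting that drives Lemma~\ref{l:path_cycle_hitting}, here refined twice over: once to pin the exact threshold separating $\Lambda=4$ from $\Lambda\ge 5$, and once to guarantee that the witnessing subgraph always comes out in exactly two colors.
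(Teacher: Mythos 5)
Your high-level framing is right (split on the value of $\Lambda(G)$, then manufacture a bicolored $C_4$ and $C_4+K_1$ from $\PC 2$), but the engine you propose to drive it is both unproved and, as stated, broken. The unproved part is the K\"onig-type identity ``$\Lambda(G)$ equals the maximum number of pairwise disjoint triangles,'' which you correctly flag as the main obstacle but never establish; the paper needs nothing of the sort. It uses the far cheaper observation that $S=\{v\in\hat X_{2i}: |\hat X_{2i}|=1\}$ is a hitting set of size at most $n$, so $\Lambda(G)\ge 4$ forces $n\ge 4$ and $\Lambda(G)\ge 5$ forces $n\ge 5$; large $n$ is all the ``spread'' that is required. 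The broken part is the extraction step: even granting four pairwise disjoint triangles, they can be crowded into consecutive clusters (e.g.\ spanning positions $\{1,2,3\},\{3,4,5\},\{5,6,7\},\{7,8,9\}$), in which case two of your chosen same-color vertices can land in the \emph{same} odd set $X_i$ (collapsing the intended $C_4$), and a cross pair can sit at odd positions exactly $2$ apart, which is precisely the $\PC 2.1$ exception you claim to ``avoid by spacing'' --- with only four disjoint triangles there may be no room to space them. So neither implication actually closes as written.

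For contrast, the paper's construction does not take one vertex from each of four triangles. It takes \emph{two} vertices $x_3,x_3'$ from the single set $M_3\cup L_3$ (which has size $\ge 2$ by Lemma~\ref{l:X3X2n-2}), the unique vertex $x_2$ of $\hat X_2$ (using $\PC 7.1$ and $\PC 5.2$ for the edges $x_2x_3,x_2x_3'$ --- note these edges come from $\PC 5$, not from your $\PC 2$ engine), and any vertex $x_8\in X_8$ (nonempty since $n\ge4$), giving the bicolored $C_4$ on colors $A_0,A_2$. The isolated vertex is then found in $X_{12}$ when $n\ge 6$, and when $n=5$ a short case analysis on $M_9$ versus $R_9$ (via $\PC 5.4$ and $\PC 4$) produces $y\in X_9$ nonadjacent to $x_8$ and anticomplete to the rest. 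If you want to salvage your approach, you should drop the disjoint-triangle reduction entirely and replace it with the bound $\Lambda(G)\le|S|\le n$ plus an explicit choice of clusters as above.
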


\begin{proof}
    Set $S=\{v\in \hat{X}_{2i} \mid i=1,\dots, n\text{ and } |\hat{X}_{2i}|=1\}$. By $\PC 1$ at least one of  $\hat{X}_{2i}$ or $\hat{X}_{2i+2}$ has cardinality $1$. Hence, by Lemma~\ref{l:Path_Cycle:postition_triangles_tout}, $S$ is a hitting set of $G$. Observe that $|S|\leq n$. Hence, we can assume that $n\geq 4$ for otherwise the lemma holds.

    By Lemma~\ref{l:X3X2n-2}, $|M_3\cup L_3|\geq 2$. Set $x_2\in X_2$, $x_3, x_3'\in M_3\cup L_3$ and $x_8\in X_8$ ($X_8\neq \emptyset$ as $n\geq 4$). By $\PC 2$, $x_2x_8\notin E$ and $x_3x_8,x_3'x_8\in E$. Recall that $X_3$ is a stable set and so $x_3x_3'\notin E$. By $\PC 7.1$ $|\hat{X}_2|=1$ and so, by $\PC 5.2$, $x_2x_3,x_2x_3'\in E$. Hence ${\cal C}=\{x_2,x_3,x_3',x_8\}$ induces a bicolored $C_4$ with colors $A=A_0$ and $C=A_2$ (as in the definition of canonically-colored path of triangles graph in Section~\ref{s:Struct}).
    
    If $\Lambda(G)=4$, then the lemma holds. Suppose now that $\Lambda(G)\geq 5$ and so $n\geq 5$.
    
    By $\PC 2$, $X_{12}$ is anticomplete to $X_2$, $X_3$ and $X_8$. Hence if there exists $y\in X_{12}$ then $y$ has no neighbor in $\cal C$. By definition $y\in A_2=C$. Therefore $\{y\}\cup \cal C$ induces a bicolored $C_4+K_1$ and the lemma holds. 
    
    Suppose now that $X_{12}=\emptyset$. Then $n=5$ and we prove that there exists $y\in X_9$ such that $yx_8\notin E$. By Lemma~\ref{l:X3X2n-2}, $|M_9\cup R_9|\geq 2$ and by $\PC 7.1$, $|\hat{X}_{10}|=1$. If $|M_9|\geq 2$ then, by $\PC 5.4$, $\hat{X}_{8}$ is matched to $M_9$ and so, there exists $y\in M_9$ such that $yx_8\notin E$. If $|M_9|<2$ then $|R_9|\geq 1$.  By $\PC 4$, $X_{8}$ is anticomplete to $R_9$ and so there exists $y\in R_9$ such that $yx_8\notin E$. In both cases $y\in X_9$ with $yx_8\notin E$. Moreover $y$ has no neighbor in $\cal C$ by $\PC 2$ so ${\cal C}\cup \{y\}$ induces a $C_4+K_1$. 
    
    By the definition of canonically-colored path of triangles graph, $y\in X_9$ implies that $y\in A_0=A$ and so ${\cal C}\cup \{y\}$ is a bicolored $C_4+K_1$.
\end{proof}

\subsection{Worn chain decomposition}
By Theorem~\ref{tcs:prime} (11.1~\cite{ChudAndSey1}), we can restrict ourselves to the study of prime 3-colorable prismatic graphs and the worn chain decomposition. 

Let $n \geq 1$, and for $1 \leq i \leq n$, let
$(G_i , A_i , B_i , C_i )$ be a 3-colored prismatic graph, where $V (G_1), \dots, V (G_n)$ are all nonempty and pairwise vertex-disjoint. Let $A = A_1 \cup \dots\cup A_n$ , $B = B_1 \cup\dots\cup B_n$, and $C = C_1\cup \dots\cup C_n$ and let $G$ be the graph with vertex set $V (G_1)\cup\dots\cup V(G_n)$ and with adjacency as follows:
\begin{itemize}
    \item[$\mathcal{W}1$] For $1\leq i\leq n$, $G[ V(G_i)]=G_i$;
    \item[$\mathcal{W}2$] For $1 \leq i < j \leq n$, $A_i$ is anticomplete to $V(G_j)\setminus B_j$, $B_i$ is anticomplete to $V(G_j)\setminus C_j$, and $C_i$ is anticomplete to $V(G_j)\setminus A_j$; 
    \item[$\mathcal{W}3$] For $1\leq i <j\leq n$, if $u\in A_i$ and $v\in B_j$ are non-adjacent then $u$ and $v$ are both in no triangles, and the same applies if $u\in B_i$ and $v \in C_j$ and if $u \in C_i$ and $v\in A_j$.
\end{itemize}

In particular, $A$, $B$, $C$ are stable sets, and so $(G, A, B, C)$ is a 3-colored graph. We call the sequence
$(G_i , A_i , B_i , C_i )$ $(i = 1, . . . , n)$ a \textit{worn chain decomposition} or worn $n$-chain for $(G, A, B, C)$.

Note also that every triangle of $G$ is a triangle of one of $G_1,\dots,G_n$, and $G$ is prismatic. This is a remark from~\cite{ChudAndSey1} but we give here the proof.

\begin{property}\label{p:triangles in worn chain}
   Let $(G_i,A_i,B_i,C_i)$, with $1 \leq i \leq n$, be a worn $n$-chain of a 3-colored graph $(G,A,B,C)$.
   
   Every triangle of $G$ is a triangle of one of $G_1,\dots,G_n$, and $G$ is prismatic. 
\end{property}
\begin{proof}
In order to show the first assertion by contradiction, suppose that there exists a triangle $T=\{a,b,c\}$ such that $a\in V(G_i)$ and $b\in V(G_j)$ with $1\leq i<j\leq n$. Without loss of generality, assume that that $a\in A_i$. By $\mathcal{W}2$ and $\mathcal{W}3$, $b\in B_j$ as $ab\in E$. Since $A = A_1 \cup \dots\cup A_n$, $B = B_1 \cup\dots\cup B_n$ induce stable sets, $c\notin A\cup B$ and so $c\in C$. Let $k$ be the integer such that $c\in V(G_k)$. Since $ac\in E$, by $\mathcal{W}2$ and $\mathcal{W}3$, $k\leq i$ and so $k<j$. Since $bc\in E$, by $\mathcal{W}2$ and $\mathcal{W}3$, $k\geq j$, a contradiction.

To prove the second assertion, we need to show that for any triangle $T=\{a,b,c\}$ of $G$ and any vertex $v\in V(G)\setminus T$, $v$ has a unique neighbor in $T$. If there exits $1\leq i\leq n$ such that $T\cup \{v\}\in G_i$, then the result holds by $\mathcal{W}1$. Suppose that $T\in G_i$ and $v\in V(G_j)$ for some $1\leq i<j\leq n$ (the other case is symmetric). Since all colors are stable sets, we set $a\in A_i$, $b\in B_i $ and $c\in C_i $ without loss of generality. If $v\in A_j$ then $va,vc\notin E$ by $\mathcal{W}2$ and $vb\in E$ by $\mathcal{W}3$. If $v\in B_j$ then $va,vb\notin E$ by $\mathcal{W}2$ and $vc\in E$ by $\mathcal{W}3$. If $v\in C_j$ then $vb,vc\notin E$ by $\mathcal{W}2$ and $va\in E$ by $\mathcal{W}3$. In any case, $v$ has a unique neighbour in $T$ and the second assertion it proved.
\end{proof}

Chudnovsky and Seymour define a given 3-colorable prismatic graph $(G,A,B,C)$ with $V(G) \neq \emptyset$ to be prime if it cannot be expressed as a worn 2-chain. Moreover they show (~\cite{ChudAndSey1}, Lemma 14.1) that a prime graph has to be in $\mathcal{Q}_0\cup\mathcal{Q}_1\cup\mathcal{Q}_2$ as claimed in Subsection~\ref{ss:defPrime}.

In the following, we denote by $G=G_1\WC  G_2$ when $G_1$ and $G_2$ form a worn 2-chain of $G$ and we denote by $G=\WC_{i=1}^n G_i$ when the sequence $(G_i , A_i , B_i , C_i )$ $(i = 1, . . . , n)$ is a worn chain decomposition for $G$.


For the sake of clarity, we exhibit the two following direct consequences of $\mathcal{W}2$ that we use frequently in the rest of this article.

Given $(A,B,C)$ a coloring of $G$ such that $G=\WC_{i=1}^nG_i$ then for any $X\in \{A,B,C\}$, $X_i=X\cap V(G_i)$.

\begin{property}\label{p:color}
   Let $(G_i,A_i,B_i,C_i)$, with $i=1,2$, be a worn 2-chain of a 3-colored graph $(G,A,B,C)$. Let $j,k\in\{1,2\}$ with $j\neq k$. For any two distinct colors $X,Y$ both properties hold:
   
   \begin{enumerate}
        \item [(1)]\label{p:color_anticomp} One of $X_k,Y_k$ is anticomplete to $X_j\cup Y_j$. 
        \item [(2)]\label{p:color_comp} Either all vertices of $X_k$ in a triangle are complete to $Y_j$ or all vertices of $Y_k$ in a triangle are complete to $X_j$.

       
   \end{enumerate}
\end{property}

\begin{remark}\label{r:color_triangles_disjoints}
    If $(G,A,B,C)$ is a 3-colored prismatic graph with at least $k$ vertex-disjoint triangles, then $|A|,|B|,|C|\geq k$ as each triangle contains a vertex of each color. 
\end{remark}




Towards showing that every $3$-colorable co-bridge-free prismatic graph has a hitting set of bounded size, we need some intermediate results. 


\begin{lemma}\label{l:P3bicolored}
    Let $G$ be a 3-colorable prismatic graph with at least two disjoint triangles. For any couple of colors $\{X,Y\}$ in a 3-coloring of $G$, there exist $3$ vertices $x_1, x_2, x_3$ which induce a $P_3$ and $x_1,x_3\in X$, $x_2\in Y$.
\end{lemma}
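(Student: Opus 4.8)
The plan is to exploit the prism structure that the two disjoint triangles force. Let $\{X,Y,Z\}$ be the given $3$-coloring (so $Z$ is the third color), and fix two vertex-disjoint triangles $T_1$ and $T_2$. As observed in the introduction, any two vertex-disjoint triangles in a prismatic graph are matched and together induce a prism; in particular there is a perfect matching between $T_1$ and $T_2$. Since a triangle of a $3$-colored graph meets each color class exactly once, I would write $T_1=\{x_1,y_1,z_1\}$ and $T_2=\{x_3,y_2,z_2\}$ with $x_1,x_3\in X$, $y_1,y_2\in Y$, and $z_1,z_2\in Z$. Because the coloring is proper, each edge of the matching joins two vertices of distinct colors.

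The core of the argument is a short case analysis on the neighbor of $x_1$ inside $T_2$. Applying the prismatic property to the triangle $T_2$ (and $x_1\notin T_2$), the vertex $x_1$ has a \emph{unique} neighbor in $T_2$; this neighbor cannot be $x_3$, since $x_1,x_3\in X$ are non-adjacent. Hence either $x_1y_2\in E$ or $x_1z_2\in E$.

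In the case $x_1y_2\in E$, I note that $y_2x_3\in E$ (both lie in $T_2$) while $x_1x_3\notin E$, so $\{x_1,y_2,x_3\}$ induces a $P_3$ with endpoints $x_1,x_3\in X$ and middle $y_2\in Y$, which is exactly the desired configuration. In the remaining case $x_1z_2\in E$, I turn to $y_1$: applying the prismatic property to $T_2$ again, $y_1$ has a unique neighbor in $T_2$, which cannot be $y_2$ (same color $Y$) and cannot be $z_2$ (its matching partner is $x_1$, and the prism matching is perfect). Therefore $y_1x_3\in E$, and since also $y_1x_1\in E$ (both in $T_1$) and $x_1x_3\notin E$, the set $\{x_1,y_1,x_3\}$ induces a $P_3$ with endpoints in $X$ and middle $y_1\in Y$.

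The only point requiring care is the step where $x_1z_2\in E$ is used to rule out $y_1z_2\in E$: this relies on the \emph{perfectness} of the prism matching (each vertex of $T_2$ is the unique $T_2$-neighbor of exactly one vertex of $T_1$), not merely on the prismatic property applied vertex by vertex. I expect keeping the color labels and the matching consistent to be the only delicate bookkeeping; there is no genuine obstacle, and the hypothesis of two disjoint triangles is precisely what supplies the prism that drives the whole argument.
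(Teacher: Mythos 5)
Your proof is correct and follows essentially the same route as the paper's: both arguments case-split on which vertex of the second triangle is the unique neighbor of an $X$-colored (or $Y$-colored) vertex of the first, using the prismatic property and the induced perfect matching between the two disjoint triangles to produce the bicolored $P_3$. The only difference is cosmetic (which edge you test first), so there is nothing to add.
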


\begin{proof}
    Let $T_1, T_2$ be two disjoint triangles, let $(G, A, B, C)$ be a 3-colored prismatic graph and let $\{X,Y\}$ be any two different colors in $\{A,B,C\}$. 
    Let $x_i$ (resp. $y_i$) be the vertex in $T_i$ with color $X$ (resp. $Y$). If $x_1y_1x_2$ is an induced path, we are done. If it is not, then $x_2y_1$ is not an edge and as the graph is prismatic and 3-colored, $x_1y_2$ is an edge which yields $x_1y_2x_2$ to be the sought induced path.
\end{proof}

\begin{lemma}\label{l:prism_worn_V9_NoTriangle}
    Let $G$ be a $3$-colorable co-bridge-free prismatic graph. If it admits a worn 2-chain decomposition where one of the graphs has two disjoints triangles, then the other graph has either at most $9$ vertices or it has no triangle.
\end{lemma}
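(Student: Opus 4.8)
The plan is to argue by contradiction: assuming the conclusion fails, I will exhibit an induced co-bridge ($C_4+2K_1$) in $G$. Write the worn $2$-chain as $G=G_1\WC G_2$. The hypothesis is symmetric in the two terms, so assume the graph with two vertex-disjoint triangles $T_1,T_2$ is $G_1$ and that $G_2$ has a triangle $T$; the goal is then $|V(G_2)|\le 9$. Since $T_1,T_2$ each meet every color class, $|A_1|,|B_1|,|C_1|\ge 2$, and by $\mathcal{W}2$ each color class of $G_1$ is anticomplete to two color classes of $G_2$: $A_1$ to $A_2\cup C_2$, $B_1$ to $A_2\cup B_2$, and $C_1$ to $B_2\cup C_2$. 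I will also use the consequence of $\mathcal{W}3$ that any vertex of $A_1$ lying in a triangle is complete to $B_2$ (and cyclically for $B_1,C_1$).

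The first step shows $G_2$ has no bicolored $C_4$. Suppose $G_2$ contained a $C_4$ on $\{a,a',c,c'\}$ with $a,a'\in A_2$ and $c,c'\in C_2$. Since $|A_1|\ge 2$, pick distinct $x,y\in A_1$; they are non-adjacent (same color) and, by $\mathcal{W}2$, anticomplete to $A_2\cup C_2\supseteq\{a,a',c,c'\}$, so $\{a,a',c,c'\}\cup\{x,y\}$ induces a co-bridge. Running the same argument in the color pairs $\{A,B\}$ and $\{B,C\}$ (using two vertices of $B_1$, resp.\ $C_1$) shows $G_2$ has no bicolored $C_4$ in any pair of colors; equivalently, any two vertices of $G_2$ of one color have at most one common neighbor in any single other color class.

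The second step forces a domination property inside $G_2$. The prism $T_1\cup T_2$ contains an $A$-colored vertex $\alpha$ matched to a $B$-colored vertex $\beta$ across the two triangles (this exists for either orientation of the matching); let $\alpha'$ be the $A$-vertex of the triangle containing $\beta$. Then for every $b\in B_2$ the set $\{\alpha,\beta,\alpha',b\}$ induces a $C_4$ (here $\alpha,\alpha'$ are complete to $B_2$, $\alpha\beta$ is the matching edge, $\alpha'\beta$ is a triangle edge, and $\alpha\alpha'$, $\beta b$ are non-edges). If some $b\in B_2$ had two non-neighbors $a,a'$ in $A_2$, then $a,a'$ are non-adjacent and, by $\mathcal{W}2$, anticomplete to $\{\alpha,\alpha',\beta\}$ and to $b$ by choice, giving a co-bridge. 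Hence every vertex of $B_2$ has at most one non-neighbor in $A_2$, and cyclically every vertex of $C_2$ (resp.\ $A_2$) has at most one non-neighbor in $B_2$ (resp.\ $C_2$).

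Finally I combine the two conclusions. Writing $\alpha=|A_2|,\beta=|B_2|,\gamma=|C_2|$ (all $\ge1$ as $G_2$ has a triangle), suppose $\alpha\ge2$ and take $a,a'\in A_2$: by the domination step each is adjacent to at least $\gamma-1$ vertices of $C_2$, while by the no-bicolored-$C_4$ step they have at most one common neighbor there, so inclusion–exclusion gives $2(\gamma-1)-1\le\gamma$, i.e.\ $\gamma\le 3$. Cyclically, $\beta\ge2\Rightarrow\alpha\le3$ and $\gamma\ge2\Rightarrow\beta\le3$; thus if every color class of $G_2$ has at least two vertices then $|V(G_2)|\le 9$. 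The main obstacle, and where the bulk of the technical work lies, is the remaining case in which some color class of $G_2$ is a singleton: there the inclusion–exclusion bound degenerates, and one must rule out a large class directly from $G_2$ being prismatic (hence diamond- and $K_4$-free) together with the unique-neighbor property of its triangle $T$, showing that too many vertices with the same neighbor pattern would create either a bicolored $C_4$ or a diamond. The extremal example $L(K_{3,3})$, which has exactly $9$ vertices, contains a triangle, and has no bicolored $C_4$, shows the bound $9$ is tight. The case where $G_2$ is the graph carrying the two disjoint triangles is handled identically after exchanging the roles of the two terms and cyclically relabeling which color pairs are controlled by $\mathcal{W}2$.
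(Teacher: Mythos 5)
Your strategy is genuinely different from the paper's: you bound $|V(G_2)|$ by an extremal count, first excluding bicolored $C_4$'s in $G_2$, then establishing that each vertex of one color class of $G_2$ has at most one non-neighbor in the ``next'' class, and finally combining the two by inclusion--exclusion. The steps you actually carry out are sound: the anticompleteness consequences of $\mathcal{W}2$, the construction of the $C_4$ on $\{\alpha,\beta,\alpha',b\}$ from a matching edge of the prism together with $\mathcal{W}3$ (note that $\mathcal{W}3$ makes any triangle vertex of $A_1$ complete to all of $B_2$, so your four-cycle does exist for every $b\in B_2$), and the count $2(\gamma-1)-1\le\gamma$. But the proof is not complete. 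Your inclusion--exclusion only yields $|V(G_2)|\le 9$ when every color class of $G_2$ has at least two vertices, and you explicitly defer the singleton case, calling it the place ``where the bulk of the technical work lies'' without carrying it out. This is a genuine gap rather than a routine verification: if, say, $|A_2|=1$, none of your three implications bounds $|C_2|$, and the configurations one then has to exclude (many vertices of $C_2$ attached to the unique vertex of $A_2$, with $|B_2|\le 2$) produce four-cycles that meet three color classes, so your Step~1 does not apply to them; ruling them out needs a further argument from the prismatic structure of $G_2$ that you only gesture at.

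For comparison, the paper avoids the case split entirely with a short direct argument: if $|V(G_j)|\ge 10$, the pigeonhole principle gives four vertices of one color in $G_j$; one of them, $x$, lies in the triangle $T$, and since each of the other three has exactly one neighbor in the two-element set $T\setminus\{x\}$, two of them, $u$ and $v$, share a common non-neighbor $y\in T$. A bicolored $P_3$ extracted from the two disjoint triangles of $G_i$ (Lemma~\ref{l:P3bicolored}) together with $y$ forms a $C_4$ to which $u$ and $v$ are anticomplete, yielding a co-bridge. You should either adopt such a construction or supply the missing singleton analysis; as written, your argument does not prove the lemma.
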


\begin{proof}
    
    
    

Let $(A,B,C)$ be a 3-coloring for $G$ and let $G_1, G_2$ be two 3-colored graphs such that $G=G_1\WC G_2$.
    Set $i\neq j \in \{1,2\}$ such that $G_i$ is a graph with two disjoint triangles. For the sake of a contradiction, assume that $G_j$ has at least 10 vertices and a triangle $T$.
    
    Since $G_j$ admits a 3-coloring, the pigeonhole principle ensures that there exist at least $4$ vertices with the same color, without loss of generality, let $A_j$ be that color.
    
  There is exactly one vertex $x\in A_j$ which is also in the triangle $T$. As $A_j$ is a stable set and $G_j$ is prismatic, every other vertex in $A_j$ has exactly one neighbor in $T\setminus \{x\}$. By the pigeonhole principle, there are two vertices $u,v\in A_j$ which are not adjacent to one of the vertices of $T\setminus \{x\}$. Let $y$ be the vertex in $T$ such that $uy,vy\notin E(G)$, and let $\Sigma_j$ be the color of $y$ with $\Sigma \in \{B,C\}$.

     By $\mathcal{WC} 2$ and $\mathcal{WC} 3$ there is a unique $X\in \{A,B,C\}$ such that $\Sigma$ is complete to all vertices in $X_i$ that are in a triangle and $A_j$ is anticomplete to $X_i$. In addition, $A_j$ is anticomplete to $X_i$. Again by $\mathcal{WC} 2$ and $\mathcal{WC} 3$, there is a unique $Y\in \{A,B,C\}$ such that $Y_i$ is anticomplete to both $A_j$ and $\Sigma_j$.  Moreover, as there are two disjoint triangles in $G_i$, Lemma~\ref{l:P3bicolored} ensure the existence of a path $\mathcal{P}=abc$ in $G_i$ with $\{a,c\}\in X_i$ and $b\in Y_i$. Hence, since $y\in \Sigma$, $\mathcal{P}\cup y$ induces a $C_4$. Also $X_i\cup Y_i$ is anticomplete to $A_j$ and $A$ is a stable set. Hence $\mathcal{P}\cup \{y\}\cup \{u,v\}$ induces a co-bridge, a contradiction. 
\end{proof}

\subsection{Schl\"afli-prismatic graphs}

Before proving Theorem~\ref{t:3_colorable} we need to study a special case of non-prime 3-colorable prismatic graphs. To do so we introduce the Schl\"afli-prismatic graphs which were introduced for the non-orientable prismatic graphs in~\cite{ChudAndSey2}.

The \emph{complement of the Schl\"afli graph (see figure~\ref{f:schafli}})
has 27 vertices $r^i_j,s^i_j,t^i_j$, $1\leq i,j \leq 3$ with
adjacencies as follows. For $1 \leq i, i', j, j' \leq 3$,
\begin{itemize}[label=--]
\item if $i\neq i'$ and $j\neq j'$, then $r^i_j$ is adjacent to $r^{i'}_{j'}$, $s^i_j$ is adjacent to $s^{i'}_{j'}$ and $t^i_j$ is adjacent to $t^{i'}_{j'}$;
\item if $j=i'$, then $r^i_j$ is adjacent to $s^{i'}_{j'}$, $s^i_j$ is adjacent to $t^{i'}_{j'}$ and $t^i_j$ is adjacent to $r^{i'}_{j'}$;
\item there are no other edges.
\end{itemize}

\begin{figure}[H]
 	\centering	
 	\begin{tikzpicture}
 \node[-] (r11) at (0,2) {$r_1^1$};
 \node[-] (r12) at (0,1) {$r_1^2$};
 \node[-] (r13) at (0,0) {$r_1^3$};
 \node[-] (r21) at (1,2) {$r_2^1$};
 \node[-] (r22) at (1,1) {$r_2^2$};
 \node[-] (r23) at (1,0) {$r_2^3$};
 \node[-] (r31) at (2,2) {$r_3^1$};
 \node[-] (r32) at (2,1) {$r_3^2$};
 \node[-] (r33) at (2,0) {$r_3^3$};
 
 \node[-] (s11) at (3,-1) {$s_1^1$};
 \node[-] (s12) at (3,-2) {$s_1^2$};
 \node[-] (s13) at (3,-3) {$s_1^3$};
 \node[-] (s21) at (4,-1) {$s_2^1$};
 \node[-] (s22) at (4,-2) {$s_2^2$};
 \node[-] (s23) at (4,-3) {$s_2^3$};
 \node[-] (s31) at (5,-1) {$s_3^1$};
 \node[-] (s32) at (5,-2) {$s_3^2$};
 \node[-] (s33) at (5,-3) {$s_3^3$};
 
 \node[-] (t11) at (-3,-2) {$t_1^1$};
 \node[-] (t12) at (-3,-3) {$t_1^2$};
 \node[-] (t13) at (-3,-4) {$t_1^3$};
 \node[-] (t21) at (-2,-2) {$t_2^1$};
 \node[-] (t22) at (-2,-3) {$t_2^2$};
 \node[-] (t23) at (-2,-4) {$t_2^3$};
 \node[-] (t31) at (-1,-2) {$t_3^1$};
 \node[-] (t32) at (-1,-3) {$t_3^2$};
 \node[-] (t33) at (-1,-4) {$t_3^3$};

 \draw[-] (r33) -- (r22); 
 \draw[-] (r33) -- (r21); 
 \draw[-] (r11) to[bend left=78] (r33); 
 
 \draw[-] (r11) -- (r22); 
 \draw[-] (r12) -- (r21); 
 \draw[-] (r12) -- (r33);

  \draw[-] (r33) to[bend left=80](s13); 
  \draw[-] (r33) to[bend left=80] (s23); `
  \draw[-] (r33) to[bend left=80] (s33); 
  
   \draw[-] (r33) -- (t31); 
   \draw[-] (r33) -- (t32); 
   \draw[-] (r33) -- (t33); 
   
   \draw[-] (s13) to[bend left=50] (t31); 
   \draw[-] (s23) to[bend left=50] (t32); 
   \draw[-] (s33) to[bend left=50] (t33); 
 \end{tikzpicture}
 \caption{ The complement of the Schläfli graph where only the 10 edges incident to $r_3^3$  and the 5 triangles that contain $r_3^3$ are represented.\label{f:schafli}}
 \end{figure}	

The class of \textit{Schläfli-prismatic} is the class of all induced subgraphs of the complement of the Schl\"afli graph and they are all prismatic.
A smallest hitting set of the complement of the Schl\"afli graph has cardinality $10$ (\cite{preissmann_complexity_2021}~Lemma 3.6) but this graph non-orientable.
Hence, by considering only orientable prismatic graphs we restrict ourselves to subgraphs of the complement of the Schl\"afli graph (with less vertices and smaller hitting set).

\begin{lemma}\label{l:LK33wLK33schlafli}
    Let $G=G_1\WC G_2$ be a $3$-colorable  co-bridge-free prismatic graph. If both $G_1$ and $G_2$ contain an $L(K_{3,3})$ graph then $G$ is a Schl\"afli-prismatic graph and has a hitting set of size at most $6$.
\end{lemma}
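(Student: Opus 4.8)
The plan is to pin down $G_1$ and $G_2$ exactly, then read the adjacency between them off the worn-chain axioms, and finally exhibit an explicit isomorphism onto an induced subgraph of the complement of the Schl\"afli graph.

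First I would show that $G_1$ and $G_2$ are each \emph{equal} to $L(K_{3,3})$. Since $L(K_{3,3})$ contains at least two disjoint triangles (its permutation triangles), $G_1$ has two disjoint triangles, so Lemma~\ref{l:prism_worn_V9_NoTriangle} forces $G_2$ to have at most $9$ vertices or no triangle. But $G_2$ contains $L(K_{3,3})$, which has $9$ vertices and triangles; hence $|V(G_2)|=9$ and $G_2=L(K_{3,3})$. The symmetric argument gives $G_1=L(K_{3,3})$.

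Next I would determine the colouring and the cross-adjacency. Recall that every vertex of $L(K_{3,3})$ lies in a triangle. A short analysis of its stable sets shows that the only stable sets of size $3$ are the three ``rows'' $\{a^i_1,a^i_2,a^i_3\}$ and three ``columns'' $\{a^1_j,a^2_j,a^3_j\}$, and that any partition of the $9$ vertices into three stable sets consists of all rows or all columns. Applying the transposition automorphism $a^i_j\mapsto a^j_i$ to one side if needed, I may assume $(A_1,B_1,C_1)$ are the columns of $G_1$ and $(A_2,B_2,C_2)$ are the rows of $G_2$. By $\mathcal{W}2$ the only possible cross-edges join $A_1$ to $B_2$, $B_1$ to $C_2$, and $C_1$ to $A_2$; since every vertex of $G_1$ and $G_2$ lies in a triangle, $\mathcal{W}3$ forbids any non-adjacent pair across these, so each of the three bipartitions is in fact complete. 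Thus the whole adjacency of $G$ is determined.

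The main work, and the step I expect to be the real obstacle (it is essentially bookkeeping against the Schl\"afli definition), is to match this determined structure to the complement of the Schl\"afli graph. I would identify $G_1$ with the $r$-copy via $a^i_j\mapsto r^i_j$ and $G_2$ with the $s$-copy via its vertices $\mapsto s^i_j$, noting that $\{r^i_j\}$ and $\{s^i_j\}$ each already induce an $L(K_{3,3})$. The Schl\"afli rule ``$r^i_j\sim s^{i'}_{j'}$ iff $j=i'$'' makes column $c$ of $r$ complete to row $c$ of $s$ and anticomplete to the other rows. Sending $A_1,B_1,C_1$ to columns $1,2,3$ of $r$ and $B_2,C_2,A_2$ to rows $1,2,3$ of $s$ then makes $A_1$ complete to $B_2$, $B_1$ to $C_2$, $C_1$ to $A_2$, and anticomplete otherwise, exactly the pattern found above. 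Hence $G$ is isomorphic to the subgraph of the complement of the Schl\"afli graph induced by the $r$- and $s$-vertices, so $G$ is Schl\"afli-prismatic. Finally, by Property~\ref{p:triangles in worn chain} every triangle of $G$ is a triangle of $G_1$ or of $G_2$, and each copy of $L(K_{3,3})$ is hit by any one of its columns (a set of size $3$, as in the proof of Lemma~\ref{l:3-col_prime}); the union of one such column from $G_1$ and one from $G_2$ is a hitting set of $G$ of size at most $6$.
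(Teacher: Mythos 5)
Your proposal is correct and follows essentially the same route as the paper: use Lemma~\ref{l:prism_worn_V9_NoTriangle} to force $G_1=G_2=L(K_{3,3})$, observe that the $3$-colorings of $L(K_{3,3})$ are the rows or the columns, match the worn-chain cross-adjacency to the second Schl\"afli adjacency rule on $R\cup S$, and hit all triangles via Property~\ref{p:triangles in worn chain} with one size-$3$ hitting set per factor. Your treatment of the cross-adjacency (explicitly invoking $\mathcal{W}3$ and the fact that every vertex of $L(K_{3,3})$ lies in a triangle to get completeness) is in fact slightly more detailed than the paper's.
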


\begin{proof}
   Let $G=G_1\WC G_2$ be a $3$-colorable co-bridge-free prismatic graph such that both $G_1$ and $G_2$ contain a $L(K_{3,3})$ graph. 
    
    By Lemma~\ref{l:prism_worn_V9_NoTriangle}, both $G_1$ and $G_2$ have at most $9$ vertices so they are both exactly isomorphic to $L(K_{3,3})$. 
    
   It is easy to check that the subgraphs induced by $R=\{r^i_j : 1 \leq i, i' , j, j'\leq3\}$, by  $S=\{s^i_j : 1 \leq i, j\leq3\}$ or by $T=\{t^i_j : 1 \leq i, j\leq3\}$ in the definition of the complement of the Schl\"afli graph are all $L(K_{3,3})$.
    
    Using same notations as in Figure~\ref{f:LK33}, one can see that, in $L(K_{3,3})$, the colors of a 3-coloring are either $\{a^1_1,a^1_2,a^1_3\},\{a^2_1,a^2_2,a^2_3\},\{a^3_1,a^3_2,a^3_3\}$ or $\{a^1_1,a^2_1,a^3_1\},\{a^1_2,a^2_2,a^3_2\},\{a^1_3,a^2_3,a^3_3\}$. In addition, both are symmetric. 
    
    With $1\leq i,j \leq 3$, denote by $u^i_j$ the vertices of $G_1$ and $v^i_j$ the vertices of $G_2$ with subscript as in the definition of $L(K_{3,3})$. 
    
    By the previous remark, assume that $A_1=\{u^1_1,u^2_1,u^3_1\}$, $B_1=\{u^1_2,u^2_2,u^3_2\}$ and $C_1=\{u^1_3,u^2_3,u^3_3\}$. 
    
    Up to relabeling, we can assume that $A_2=\{v^1_1,v^1_2,v^1_3\}$, $B_2=\{v^2_1,v^2_2,v^2_3\}$ and $C_2=\{v^3_1,v^3_2,v^3_3\}$.
    
    Observe that $G$ is isomorphic to the complement of the Schl\"afli graph induced by $R\cup S$.  Indeed, $G_1$ is isomorphic to the  complement of the Schl\"afli graph induced by $R$ and $G_2$ is isomorphic to the  complement of the Schl\"afli graph induced by $S$. In addition, all edges given by $\mathcal{W}2$ in the definition of a worn chain decomposition are exactly the one given by the second adjacency rule of the complement of the Schl\"afli graph.
  
    Therefore $G$ is a Schl\"afli-prismatic graph. Moreover, it has a hitting set of size $6$. Indeed, by Property~\ref{p:triangles in worn chain}, all the triangles are in $G_1$ and $G_2$ which are isomorphic to $L(K_{3,3})$ graphs and admit a hitting set of size $3$ each.
\end{proof}

\subsection{Proof of Theorem~\ref{t:3_colorable}}
We now have the main lemmas needed to prove Theorem~\ref{t:3_colorable}.
\tTcolorable*

\begin{proof}

    Let $G$ be a 3-colorable co-bridge-free prismatic graph. 

    If $G$ is prime, then the result holds by Lemma~\ref{l:3-col_prime}. Hence we may assume that  $G$ is non prime.

    We prove that either $G$ has a hitting set of size $5$ or it admits a worn 2-chain with both graphs in $\mathcal{Q}_1$.
    To do so, we prove some claims and use them to discuss the worn chain decomposition for $G$ in prime graphs, focusing on graphs in $\mathcal Q _1\cup \mathcal Q _2$. 
    
For the next 4 claims, let $H$ be a $3$-colorable co-bridge-free prismatic graph with $H=H_1\WC H_2$.

    \begin{claim}\label{c:wChain_BicolC4}
    For $i,j\in\{1,2\}$, $i\neq j$, if $H_i$ contains two disjoint triangles and $H_j$ contains at least one triangle, then $H$ contains a bicolored $C_4$ for any choice of colors.
\end{claim}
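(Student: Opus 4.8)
The plan is to build the desired $C_4$ out of one bicolored $P_3$ living entirely in $H_i$, completed by a single triangle-vertex taken from $H_j$. Since $H_i$ is a $3$-colorable prismatic graph (with coloring $(A_i,B_i,C_i)$) containing two disjoint triangles, Lemma~\ref{l:P3bicolored} applies inside $H_i$: for the chosen pair of colors it produces an induced path $p_1-q-p_3$ with the two endpoints $p_1,p_3$ of one color, say $P$, and the midpoint $q$ of the other color $Q$, where $\{P,Q\}$ is the prescribed pair. A key point, visible already in the proof of Lemma~\ref{l:P3bicolored}, is that all three of $p_1,q,p_3$ are vertices of the two disjoint triangles, hence each lies in a triangle of $H_i$, and therefore of $H$ (Property~\ref{p:triangles in worn chain}). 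I would then look for a fourth vertex $z$ in $H_j$, of color $Q$ (the same color as the midpoint), adjacent to both $p_1$ and $p_3$ but not to $q$. As $H_j$ contains a triangle, it has a vertex of every color, so a candidate $z\in Q_j$ lying in a triangle is available.

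The two required non-adjacencies are free: $p_1\not\sim p_3$ because $p_1-q-p_3$ is induced, and $z\not\sim q$ because $z$ and $q$ share the color $Q$, which is a stable set of the $3$-coloring $(A,B,C)$ of $H$. The two required adjacencies $z\sim p_1$ and $z\sim p_3$ are exactly what $\mathcal{W}3$ delivers: across a worn $2$-chain, if a vertex of the earlier part and a vertex of the later part lie in the ``forward'' color pair and are non-adjacent, then both are in no triangle; since $z$, $p_1$, $p_3$ are all triangle-vertices, non-adjacency is impossible. Thus $p_1,q,p_3,z$ would induce a $C_4$ colored $P,Q,P,Q$, that is a bicolored $C_4$ on the pair $\{X,Y\}=\{P,Q\}$, and the four vertices are distinct (the endpoints come from two different triangles, $q$ has a different color, and $z$ lies in the other part).

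The step that needs care, and where I expect the only real obstacle, is matching the orientation of the $P_3$ to the direction of the worn chain, because by $\mathcal{W}2$ the only edges between the two parts join a color class of the earlier part to the cyclic successor color $A\to B\to C\to A$ of the later part. So the color $Q$ of $z$ and the color $P$ of the endpoints must form a forward pair in the correct direction: if $H_i$ is the earlier part I need $(P,Q)$ forward (the endpoints in $H_i$ reach a $Q$-vertex of $H_j$), whereas if $H_i$ is the later part I need $(Q,P)$ forward (the $Q$-vertex $z$ of the earlier part $H_j$ reaches the $P$-endpoints of $H_i$). For any unordered pair $\{X,Y\}$ exactly one of its two orderings is a forward pair, so in each of the two cases I simply invoke Lemma~\ref{l:P3bicolored} with $P$ and $Q$ chosen to realize that forward ordering, after which $\mathcal{W}3$ applies to precisely the color pair on which the cross edges can exist. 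Handling both part-orderings and checking that the chosen forward pair is the one to which $\mathcal{W}3$ refers is the main bookkeeping; everything else is immediate.
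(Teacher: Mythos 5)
Your proof is correct and follows essentially the same route as the paper: both arguments build a bicolored $P_3$ inside $H_i$ via Lemma~\ref{l:P3bicolored} and complete it to a $C_4$ with a single triangle vertex of $H_j$, using the worn-chain adjacency rules to force the two cross edges (non-adjacency would put a triangle vertex in no triangle) and stability of the color classes for the missing edge. The only difference is bookkeeping: the paper constructs both orientations of the $P_3$ and lets Property~\ref{p:color} decide which one extends, whereas you pick the single orientation compatible with the direction of the chain and apply $\mathcal{W}2$/$\mathcal{W}3$ directly.
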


\begin{proofclaim}
    Let $(A,B,C)$ be a 3-coloring for $H$ and let $X,Y$ be any two distinct colors in $\{A,B,C\}$.
    
    By Lemma~\ref{l:P3bicolored}, there exists $\mathcal{P}=x_1x_2x_3$, a $P_3$ in $G_i$ with $x_1,x_3\in X_i$ and $x_2\in Y_i$ and $\mathcal{P'}=x'_1x'_2x'_3$, a $P_3$ in $G_i$ with $x'_1,x'_3\in Y_i$ and $x'_2\in X_i$
    
    Let $u\in X_j$ and $v\in Y_j$ be two vertices in a triangle of $H_j$. By Property~\ref{p:color_anticomp}, either $u$ is complete to $X_i$ and $\mathcal{P}\cup \{u\}$ is a bicolored $C_4$ or $v$ is complete to $Y_i$ and $\mathcal{P}'\cup \{v\}$ is a bicolored $C_4$. In both cases the bicolored $C_4$ has vertices in $X\cup Y$. 
\end{proofclaim}

    \begin{claim}\label{c:C4HS1}
    For $i,j\in\{1,2\}$, $i\neq j$, if $H_i$ contains $\mathcal{C}$, a bicolored $C_4$, then $\Lambda(H_j)\leq 1$. 
    \end{claim}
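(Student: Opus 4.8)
The plan is to argue by contradiction, using Property~\ref{p:HS1} to turn the bound $\Lambda(H_j)\le 1$ into a statement about disjoint triangles. Since $H_j$ is prismatic (being a term of the worn $2$-chain $H=H_1\WC H_2$), having $\Lambda(H_j)\le 1$ is equivalent, by Property~\ref{p:HS1}, to $H_j$ having no two vertex-disjoint triangles. So I would assume that $H_j$ contains two vertex-disjoint triangles $T_1$ and $T_2$ and produce an induced co-bridge $C_4+2K_1$, contradicting the hypothesis that $H$ is co-bridge-free.

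The heart of the argument is a colour observation read off from $\mathcal{W}2$. Let $\mathcal{C}$ be the bicolored $C_4$ in $H_i$; since $C_4$ is properly coloured by the $3$-colouring, its two colour classes coincide with its bipartition, so $\mathcal{C}$ uses exactly two colours $X,Y$ and misses the third colour $Z$. Now $\mathcal{W}2$ forces the cross-edges between $H_1$ and $H_2$ to follow the cyclic pattern $A\to B\to C\to A$, i.e.\ (for the earlier term before the later term) the only admissible cross-edges are $A_\text{earlier}B_\text{later}$, $B_\text{earlier}C_\text{later}$, $C_\text{earlier}A_\text{later}$. From this I would deduce that, in either chain-direction, there is exactly one colour class of $H_j$ every vertex of which is anticomplete to $\mathcal{C}$: namely the class whose unique admissible neighbouring colour inside $H_i$ is precisely $Z$, hence a colour absent from $\mathcal{C}$. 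Call this colour $W$; explicitly, $W$ is the successor of $Z$ in the cyclic order when $\mathcal{C}$ lies in the earlier term of $H=H_1\WC H_2$, and the predecessor of $Z$ when $\mathcal{C}$ lies in the later term. (By the rotational symmetry of the pattern one may fix ideas by taking $\{X,Y\}=\{A,B\}$ and $Z=C$.)

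With $W$ identified the contradiction is immediate. Each triangle $T_1,T_2$ of $H_j$ contains exactly one vertex of colour $W$; call them $w_1$ and $w_2$. They lie in $V(H_j)$, hence are disjoint from $\mathcal{C}\subseteq V(H_i)$ since $i\neq j$; they are distinct because $T_1,T_2$ are vertex-disjoint; and they are non-adjacent because $W$ is a stable set. Finally, by the colour observation both $w_1,w_2$ are anticomplete to $\mathcal{C}$. Hence $\mathcal{C}\cup\{w_1,w_2\}$ induces a $C_4+2K_1$, i.e.\ a co-bridge, a contradiction.

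The only genuinely delicate point is the colour observation, and within it keeping track of the direction of the chain: $\mathcal{W}2$ is not symmetric in the two terms, so the anticomplete colour class $W$ of $H_j$ differs according to whether $\mathcal{C}$ sits in $H_1$ or in $H_2$. I expect this bookkeeping—checking that in both directions some colour class of $H_j$ is indeed anticomplete to $\mathcal{C}$—to be the main thing to get right. Once it is settled, the proof needs neither $\mathcal{W}3$, nor primality, nor any further structural information beyond Property~\ref{p:HS1}.
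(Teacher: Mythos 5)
Your proposal is correct and follows essentially the same route as the paper: identify a colour class of $H_j$ anticomplete to both colours of $\mathcal{C}$, then use two vertex-disjoint triangles (via Property~\ref{p:HS1}) to extract two non-adjacent vertices of that colour and form a co-bridge. The only cosmetic difference is that you re-derive the anticomplete colour class directly from $\mathcal{W}2$, whereas the paper simply invokes Property~\ref{p:color}~(1), which packages exactly that consequence of $\mathcal{W}2$.
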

    
    \begin{proofclaim}
    By Property~\ref{p:color}~(1), there exists a color in $H_j$ that is anticomplete to both colors of $\mathcal{C}$. For the sake of a contradiction, assume that $\Lambda (H_j) \geq 2$. By Property~\ref{p:HS1} and Remark~\ref{r:color_triangles_disjoints} all colors in $H_j$ contain at least two vertices $u,v$. Since all colors are stable sets, $\mathcal{C}\cup \{u,v\}$ yields a co-bridge, a contradiction as $H$ is co-bridge-free. 
    \end{proofclaim}
    
    \begin{claim}\label{c:boundedLambda}
        For $i,j\in\{1,2\}$, $i\neq j$, if $H_i$ is prime, then the two following hold:
        \begin{enumerate}
            \item If $\Lambda (H_j) \geq 1$, then $\Lambda(H_i)\leq 4$; 
            \item If $\Lambda (H_j) \geq 2$, then $\Lambda(H_i)\leq 3$.
        \end{enumerate}
    \end{claim}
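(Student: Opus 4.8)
The plan is to reduce to a single case and then combine the two preceding claims with Lemma~\ref{l:canonicallycolored_hitting3ORcycle4}. Since $H_i$ is prime, $H_i\in\mathcal{Q}_0\cup\mathcal{Q}_1\cup\mathcal{Q}_2$. If $H_i\in\mathcal{Q}_0$ then $H_i$ has no triangle, so $\Lambda(H_i)=0$; if $H_i\in\mathcal{Q}_1$ then $H_i$ is isomorphic to $L(K_{3,3})$ and $\Lambda(H_i)=3$ (as noted in Lemma~\ref{l:3-col_prime}). In both cases $\Lambda(H_i)\leq 3$ and the two conclusions hold trivially, so I would assume from now on that $H_i\in\mathcal{Q}_2$, i.e.\ $(H_i,A_i,B_i,C_i)$ is a canonically-colored path of triangles graph. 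This is precisely the setting in which Lemma~\ref{l:canonicallycolored_hitting3ORcycle4} applies to $H_i$.

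For the second statement I would argue by contradiction. Assuming $\Lambda(H_i)\geq 4$, Lemma~\ref{l:canonicallycolored_hitting3ORcycle4} provides a bicolored $C_4$ inside $H_i$; such a $C_4$ is present in the $\Lambda(H_i)=4$ case directly, and in the $\Lambda(H_i)\geq 5$ case as the $C_4$ of the bicolored $C_4+K_1$. Claim~(\ref{c:C4HS1}) then forces $\Lambda(H_j)\leq 1$, contradicting the hypothesis $\Lambda(H_j)\geq 2$. Hence $\Lambda(H_i)\leq 3$. Once the reduction is in place, this step is immediate.

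For the first statement I would run the same scheme but with one extra vertex, which is the part I expect to demand the most care. Assuming $\Lambda(H_i)\geq 5$, Lemma~\ref{l:canonicallycolored_hitting3ORcycle4} now yields a bicolored $C_4+K_1$ in $H_i$ in two colors $X,Y$; I write $\mathcal{C}$ for the $C_4$ and $y$ for the isolated vertex, so that all five vertices lie in $X_i\cup Y_i$ and $y$ is anticomplete to $\mathcal{C}$. By Property~\ref{p:color}~(1), one of $X_j,Y_j$, say $X_j$, is anticomplete to $X_i\cup Y_i$. Since $\Lambda(H_j)\geq 1$, the graph $H_j$ has a triangle, and a triangle meets each (stable) color class exactly once, so it contains a vertex $w\in X_j$. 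The decisive verification is that $w$ is anticomplete to all five vertices of the $C_4+K_1$: these vertices lie in $X_i\cup Y_i$, and $w$ is anticomplete to $X_i\cup Y_i$, so $w$ is non-adjacent even to $y$. Therefore $\mathcal{C}\cup\{y,w\}$ induces a $C_4+2K_1$, i.e.\ a co-bridge, contradicting that $H$ is co-bridge-free, and so $\Lambda(H_i)\leq 4$.

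The main obstacle is exactly the difference between the two parts: a bicolored $C_4$ in $H_i$ only yields $\Lambda(H_j)\leq 1$ through Claim~(\ref{c:C4HS1}), which is compatible with $\Lambda(H_j)\geq 1$, so for the first statement the stronger structure $C_4+K_1$ produced by $\Lambda(H_i)\geq 5$ is essential, and one must confirm that the vertex $w$ drawn from a triangle of $H_j$ avoids the already-present isolated vertex $y$ as well as the $C_4$. The worn-chain anticompleteness recorded in Property~\ref{p:color}~(1) is what guarantees this, so no additional case analysis is needed.
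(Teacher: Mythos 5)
Your proposal is correct and follows essentially the same route as the paper: reduce to $H_i\in\mathcal{Q}_2$, invoke Lemma~\ref{l:canonicallycolored_hitting3ORcycle4} to extract a bicolored $C_4$ (resp.\ $C_4+K_1$), and derive a co-bridge via Property~\ref{p:color}~(1) and Claim~(\ref{c:C4HS1}). Your write-up is in fact slightly more explicit than the paper's in verifying that the vertex taken from a triangle of $H_j$ is non-adjacent to the isolated vertex $y$, and in arguing directly from $\Lambda(H_i)\geq 5$ rather than first pinning $\Lambda(H_i)=5$, but the substance is identical.
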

    
    \begin{proofclaim}
        First note that Lemma~\ref{l:3-col_prime} ensures that $\Lambda(H_i)\leq 5$ as $H_i$ is prime. 
        
        We prove the first implication by contradiction. Therefore we suppose that $\Lambda (H_j)\geq 1$ and $\Lambda(H_i)= 5$. Then $H_i\in\mathcal{Q}_2$ and, by Lemma~\ref{l:canonicallycolored_hitting3ORcycle4}, there exists $\mathcal{C}^+$, a bicolored $C_4+K_1$ in $H_i$. 
        
        By Property~\ref{p:color}~(1), there exists a color in $H_j$ that is anticomplete to both colors of $\mathcal{C^+}$. Moreover, as $\Lambda (H_j) \geq 1$, Property~\ref{p:HS1} and Remark~\ref{r:color_triangles_disjoints} ensure that every color in $H_j$ contains at least one vertex that yields a co-bridge, a contradiction. 
        Therefore $\Lambda (H_i)\leq 4$ and the first implication holds.
        
        We now prove the second implication, also by contradiction. Suppose that $\Lambda (H_j)\geq 2$ and that $\Lambda(H_i)= 4$. Then $H_i\in\mathcal{Q}_2$ and, by Lemma~\ref{l:canonicallycolored_hitting3ORcycle4}, there exists a bicolored $C_4$ in $H_i$. Then, the result follows directly from (\ref{c:C4HS1}).
    \end{proofclaim}

    \begin{claim}\label{c:V9HS3}
        If $|V(H)|\leq 9$, then $\Lambda(H)\leq 3$.
    \end{claim}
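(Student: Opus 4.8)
The plan is to discard almost all of the hypotheses and rely only on the fact that $H$ is $3$-colorable; neither the worn-$2$-chain decomposition $H=H_1\WC H_2$ nor co-bridge-freeness plays any role. So first I would fix a $3$-coloring $(A,B,C)$ of $H$, which exists by assumption. The key observation is that \emph{each single color class is already a hitting set of the triangles of $H$}: a triangle is a clique on three vertices, so a proper coloring must assign its vertices three pairwise distinct colors, and hence every triangle meets each of $A$, $B$, $C$ in exactly one vertex. This is precisely the content of Remark~\ref{r:color_triangles_disjoints}, so each of $A$, $B$, $C$ intersects every triangle of $H$.

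With that in hand the rest is a one-line counting argument. Since $(A,B,C)$ partitions $V(H)$, we have $|A|+|B|+|C|=|V(H)|\leq 9$, so at least one color class, say $X\in\{A,B,C\}$, satisfies $|X|\leq 3$ (were all three classes of size at least $4$, they would together contribute at least $12$ vertices). That class $X$ is then a hitting set of size at most $3$, and therefore $\Lambda(H)\leq 3$.

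I do not expect any real obstacle here: the only thing that needs checking is that a color class genuinely is a triangle-hitting set, and that follows immediately from properness of the coloring together with the clique structure of triangles. The bound $9$ enters only through the pigeonhole step $\lfloor 9/3\rfloor=3$; indeed the same reasoning yields the stronger statement $\Lambda(H)\leq\lfloor |V(H)|/3\rfloor$ for every $3$-colorable graph, so the claim as stated is a comfortable special case.
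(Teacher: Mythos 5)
Your proof is correct, and it takes a genuinely different and substantially simpler route than the paper. The paper proves this claim by passing to the subgraph $H'$ induced by the vertices lying in triangles, then running a case analysis on whether $H'$ is prime (invoking the classification into $\mathcal{Q}_0\cup\mathcal{Q}_1\cup\mathcal{Q}_2$ and Lemma~\ref{l:Path_V9_hitting2V2} for path of triangles graphs) or decomposes as a worn $2$-chain $H_1\WC H_2$ (then counting vertices in $H_1$, $H_2$ and arguing about prisms). Your argument bypasses all of this structural machinery: since a triangle is a $3$-clique, a proper $3$-coloring puts exactly one vertex of each triangle in each color class, so every color class is a triangle-hitting set, and by pigeonhole the smallest class has at most $\lfloor 9/3\rfloor=3$ vertices. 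This uses only the $3$-colorability hypothesis (which is indeed in force here, since the claim is stated for a $3$-colorable $H$), not co-bridge-freeness, primality, or the worn chain decomposition, and it yields the stronger and more general bound $\Lambda(H)\leq\lfloor |V(H)|/3\rfloor$ for any $3$-colorable graph. What the paper's longer argument buys is essentially nothing extra for this particular claim; your observation is the same one the authors themselves record in Remark~\ref{r:color_triangles_disjoints}, just applied in the contrapositive direction.
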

    
    \begin{proofclaim}
        Let us consider $H'$ the subgraph induced from $H$ by all vertices that are in at least one triangle in $H$. It is easy to see that $\Lambda(H)=\Lambda(H')$ and that $H'$ is a $3$-colorable co-bridge-free prismatic graph with at most $9$ vertices. Hence, we prove the claim for $H'$. 
        
        If $H'$ is prime then the result follows from the definition if $H'\in \mathcal{Q}_1$ or it arises from Lemma~\ref{l:Path_V9_hitting2V2} if $H'\in\mathcal{Q}_2$.
        
        If $H'$ is not prime, then $H'=H_1\WC H_2$. Since every vertex in  $H'$ belongs to at least one triangle, and since there is no triangle with vertices both in $H_1$ and $H_2$, we know that both $H_1$ and $H_2$ contain at least one triangle. As $9\geq |V(H')|=|V(H_1)|+|V(H_2)|$, we have that $6\geq |V(H_1)|,|V(H_2)|\geq 3$.  
        Observe then that a prismatic graph with $6$ vertices either is a prism or admits a hitting set of size at most $1$. Therefore, if one of $H_1$, $H_2$ has $6$ vertices, since the other has at most $3$ vertices, $\Lambda(H')\leq 3$. If both $H_1$ and $H_2$ have at most $5$ vertices then none of them is a prism and $\Lambda(H')\leq 2$. In both cases the claim holds. 
    \end{proofclaim}

    We now have the intermediate results needed to show the result.
    
    By Theorem~\ref{tcs:prime} (11.1~\cite{ChudAndSey1}), $G$ admits a worn chain decomposition in prime graphs. Set $G=\WC_{i=1}^n G_i$ with $G_i$ being prime $\forall i\in \{1,\dots,n\}$. Observe that $\Lambda (G) = \sum_{i=1}^{n}\Lambda(G_i)$. Moreover, if $G_i\in \mathcal{Q}_0$, its contribution is null. Therefore $\Lambda (G) = \sum_{G_i\in \mathcal{Q}_1\cup\mathcal{Q}_2}\Lambda(G_i)$.

    We argue on the number of graphs in the worn chain decomposition which are in $\mathcal{Q}_1\cup\mathcal{Q}_2$. Observe that all graphs in $\mathcal{Q}_1\cup\mathcal{Q}_2$ contain at least one triangle. 
    
    If there is only one graph of the worn chain decomposition which is in $\mathcal{Q}_1\cup\mathcal{Q}_2$, then the Theorem holds by Lemma~\ref{l:3-col_prime}.
    
    Suppose that the worn chain decomposition for $G$ into prime graphs contains at least two graphs in $\mathcal{Q}_1\cup \mathcal{Q}_2$. Let $i<j$ be the two minimum indexes such that $G_i$ and $G_j$ are in $\mathcal{Q}_1\cup \mathcal{Q}_2$ from a worn chain decomposition for $G$. Set $G^*=\WC_{k=j+1}^n G_k$, which is possibly empty. As previously observed, $G'= G_i\WC G_j \WC G^*$ is an induced subgraph of $G$ and $\Lambda (G)= \Lambda (G') = \Lambda (G_i\WC G_j\WC G^*)=\Lambda (G_i)+\Lambda(G_j)+ \Lambda(G^*)$. As noted in the definition of worn chain decomposition, every triangle in $G'$ belongs to one of the graphs of the decomposition. Therefore it is enough to prove the Theorem for $G'$.

    \vspace{2ex}
      
    First suppose that $\Lambda(G^*)=0$. Hence $\Lambda (G')= \Lambda (G_i)+\Lambda (G_j)$.
    
    We can assume that $\Lambda(G_i),\Lambda(G_j)\leq 3$ for otherwise the Theorem holds by (\ref{c:boundedLambda}). 
    If $\Lambda(G_i)\leq 2$ or $\Lambda(G_j)\leq 2$ then the Theorem holds. Hence we can assume that  $\Lambda(G_i)=\Lambda(G_j)=3$. By Property~\ref{p:HS1} both $G_i$ and $G_j$ contain a prism. 
    By Lemma~\ref{l:prism_worn_V9_NoTriangle}, $|V(G_i)|\leq 9$ and $|V(G_j)|\leq 9$ and by Lemma~\ref{l:Path_V9_hitting2V2}, since $\Lambda(G_i)=\Lambda(G_j)=3$, both $G_i$ and $G_j$ are in $\mathcal{Q}_1$. Therefore, $\WC_{k=1}^{i}G_k$ and $\WC_{k=i+1}^n G_k$ is a worn 2-chain for $G'$ with both graphs containing a $L(K_{3,3})$ and the theorem holds by Lemma~\ref{l:LK33wLK33schlafli}.
        
    \vspace{2ex}

    At this point, we may assume that $\Lambda(G^*)\geq 1$.
     
    Suppose that $\Lambda(G_j)\geq 2$. By (\ref{c:wChain_BicolC4}), both $G_i \WC G_j$ and $G_j\WC G^*$ contain a bicolored $C_4$. Therefore by (\ref{c:C4HS1}), $\Lambda (G_i)\leq 1$ and $\Lambda (G^*)\leq 1$. Hence, we may assume that $\Lambda (G_j)\geq 4$ for otherwise, the Theorem holds. Since $G_j$ is prime, it follows that $G_j$ is a path of triangles graph for otherwise $\Lambda (G_j)\leq 3$. Hence, by Lemma~\ref{l:canonicallycolored_hitting3ORcycle4}, $G_j$ contains a bicolored $C_4$ denoted by $\cal C$. By Property~\ref{p:color} and Remark~\ref{r:color_triangles_disjoints}, there exits a vertex in $G_i$ and a vertex in $G^*$ that are non-adjacent and anticomplete to $\cal C$. This yields a co-bridge, a contradiction.
     
    Suppose now that $\Lambda(G_j)=1$. Observe that both  $G_i \WC G_j$ and $G_j\WC G^*$ contain a prism. By Lemma~\ref{l:prism_worn_V9_NoTriangle}, $|V(G_i)|\leq 9$ and $|V(G^*)|\leq 9$. By (\ref{c:V9HS3}), $|\Lambda (G_i)|\leq 3$ and $|\Lambda (G^*)|\leq 3$. Hence, we may assume that $|\Lambda (G_i)|\geq 2$ and $|\Lambda (G^*)|\geq 2$ for otherwise the Theorem holds. By Property~\ref{p:HS1}, both $G_i$ and $G^*$ contain a prism. Again, by Lemma~\ref{l:prism_worn_V9_NoTriangle} and (\ref{c:V9HS3}) it follows that $|\Lambda (G_i \WC G_j)|\leq 3$ and $|\Lambda (G_j\WC G^*)|\leq 3$. Therefore $|\Lambda (G_i)|= 2$, $|\Lambda (G^*)|=2$ and the theorem holds.

We proved that either $G$ has a hitting set of size at most $5$ or admits a worn $2$-chain with both graphs in $\mathcal{Q}_1$ and a hitting set of size $6$. In the first case, the theorem holds. If $G$ admits a worn $2$-chain with both graphs in $\mathcal{Q}_1$ then, by Lemma~\ref{l:LK33wLK33schlafli}, $G$ is a Schl\"afli-prismatic graph and the Theorem holds.  
\end{proof}

\section{Not $3$-Colorable Orientable Prismatic Graph}\label{s:non3colo}

In the previous section we handled the case of co-bridge-free 3-colorable orientable prismatic graphs and showed that they have a hitting set of size at most $5$ or they have at most 27 vertices. This section is devoted to the analog result for not $3$-colorable orientable prismatic graphs. The proof is based on the structure of the subclass exposed in Section~\ref{s:Struct}. 

\begin{restatable}{theorem}{NonTcol}\label{t:non3col}
   Every co-bridge-free orientable prismatic graph that is not $3$-colorable admits a hitting set of size at most $5$.
\end{restatable}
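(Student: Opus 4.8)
My plan is to apply the Chudnovsky--Seymour classification of Lemma~\ref{l:CetS_non3col}, which says that a not-$3$-colorable orientable prismatic graph is not $3$-substantial, or a cycle of triangles graph, or a ring of five, or a mantled $L(K_{3,3})$, and to bound $\Lambda(G)$ in each case. It is worth noting in advance that the only case that genuinely needs the co-bridge-free hypothesis is the cycle of triangles graph; the other three already have bounded $\Lambda$ among all prismatic graphs.

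The two quick cases are immediate. If $G$ is not $3$-substantial then, by definition, some set of size at most $2$ meets every triangle, so $\Lambda(G)\leq 2$. If $G$ is a cycle of triangles graph then Lemma~\ref{l:path_cycle_hitting} gives $\Lambda(G)\leq 5$. For a mantled $L(K_{3,3})$ I would first show that no mantle vertex lies in a triangle: each $V^i$ is complete to the stable set $\{a^i_1,a^i_2,a^i_3\}$ and each $V_i$ to the stable set $\{a^1_i,a^2_i,a^3_i\}$, so a triangle through a mantle vertex would force an edge between two mantle vertices lying in a common triangle, i.e.\ a triangle inside $V^1\cup V^2\cup V^3$ or $V_1\cup V_2\cup V_3$, which the definition forbids. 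Hence every triangle lies in $G[A]\cong L(K_{3,3})$, where the triangles are exactly the transversals; the row $\{a^1_1,a^1_2,a^1_3\}$ meets each of them, so $\Lambda(G)\leq 3$.

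The ring of five is the main obstacle, and I would handle it by a full classification of triangles followed by a case split on $|V_0|$. A direct check shows that the only triangles inside $G[A]$ are the five \emph{core} triangles $\{a_i,a_{i+1},b_{i+3}\}$, and that $\{a_1,a_3,a_5\}$ meets all of them. For a triangle using a mantle vertex $m\in V_i$ with $i\geq 1$, the three neighbours of $m$ in $A$, namely $\{a_{i-1},b_i,a_{i+1}\}$, are pairwise non-adjacent, and the only mantle--mantle edges available are $V_0$--$V_i$ and $V_i$--$V_{i+1}$; tracing the possibilities shows that such a triangle must contain a vertex of $V_0$, and that both of its non-$V_0$ vertices lie in $\{b_1,\dots,b_5\}\cup V_1\cup\dots\cup V_5$, hence are complete to $V_0$. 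This is the crux: if $|V_0|\geq 2$, a second vertex of $V_0$ would have two neighbours in such a triangle, contradicting prismaticity, so no non-core triangle exists and $\{a_1,a_3,a_5\}$ is a hitting set; if $V_0=\emptyset$ the same conclusion holds trivially; and if $V_0=\{v_0\}$, every non-core triangle contains $v_0$, so $\{v_0,a_1,a_3,a_5\}$ is a hitting set of size $4$. In all three subcases $\Lambda(G)\leq 4$.

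Collecting the four cases gives $\Lambda(G)\leq 5$ throughout, which proves the theorem. The heart of the argument is the ring of five bookkeeping---establishing that every non-core triangle meets $V_0$ and that its other two vertices are complete to $V_0$---since this is exactly what lets prismaticity collapse the apparently unbounded freedom in the sizes of $V_0,V_1,\dots,V_5$.
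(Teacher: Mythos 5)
Your proposal is correct and follows the same top-level route as the paper: both apply Lemma~\ref{l:CetS_non3col} and then bound $\Lambda(G)$ case by case, with identical treatments of the not-$3$-substantial case, the cycle of triangles case (via Lemma~\ref{l:path_cycle_hitting}), and the mantled $L(K_{3,3})$ case (every triangle lies in $G[A]$, hit by a size-$3$ set). The genuine divergence is in the ring of five. The paper argues that no triangle meets $V_0\cup V_1\cup\dots\cup V_5$ at all, asserting in particular that $N(V_0)\subseteq\{b_1,\dots,b_5\}$; but the definition as stated also makes $V_0$ complete to $V_1\cup\dots\cup V_5$, so whenever $V_0$ and some $V_i$ ($i\geq 1$) are both nonempty, $\{v_0,m,b_i\}$ with $v_0\in V_0$, $m\in V_i$ is a triangle, and the paper's argument does not account for it. Your classification of the non-core triangles --- showing each one contains a $V_0$-vertex whose two companions are all complete to $V_0$, so that prismaticity kills them when $|V_0|\geq 2$ and a single extra vertex $v_0$ covers them when $|V_0|=1$ --- is exactly the bookkeeping needed to close that gap, at the cost of a slightly weaker bound ($\Lambda(G)\leq 4$ rather than the paper's claimed $\Lambda(G)=3$ for this case), which is still well within the budget of $5$. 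Two small remarks: your observation that not-$3$-substantial gives $\Lambda(G)\leq 2$ is a free tightening of the paper's $\leq 3$; and when you say a triangle through a mantle vertex of the mantled $L(K_{3,3})$ ``forces a triangle inside the mantle,'' you should also dispose of the mixed case (one $a$-vertex plus two mantle vertices), which fails because the common neighbour of $v\in V_i$ and $a^k_i$ in the mantle must again lie in the stable set $V_i$ --- a one-line addition.
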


\begin{proof}
Let $G$ be a co-bridge-free orientable prismatic graph that is not $3$-colorable. By Lemma~\ref{l:CetS_non3col}, either $G$ is not $3$-substantial, or $G$ is a cycle of triangles graph, or $G$ is a ring of five, or $G$ is a mantled $L(K_{3,3})$. 

By definition, if $G$ is not $3$-substantial then $G$ admits a hitting set of size at most $3$. By Lemma~\ref{l:path_cycle_hitting}, if $G$ is a cycle of triangles graph then $G$ has a hitting set of size at most $5$. Hence the remaining cases to handle are when $G$ is a mantled $L(K_{3,3})$ and when $G$ is a ring of five. It will be the aim of the two following claims. 

\begin{claim}
    If $G$ is a mantled $L(K_{3,3})$ then $\Lambda(G)=3$. 
  \end{claim}
  
  \begin{proofclaim}
    Assume that $G$ is a mantled $L(K_{3,3})$ with vertices partitioned into $A, V_1,V_2,V_3,V^1,V^2,V^3$ as in the definition exposed in Section~\ref{s:Struct}.
   
    We first show that every triangle in $G$ is a triangle in $G[A]$. Let $T$ be a triangle in $G$. There is no triangle included in $V_1 \cup V_2\cup V_3$ or $V^1 \cup V^2 \cup V^3$. Since $V_1 \cup V_2\cup V_3$ and $V^1 \cup V^2 \cup V^3$ are anticomplete, we have that $T\cap A\neq \emptyset$. Let $a_i^j\in T$ with $i,j\in \{1,2,3\}$. By definition, $N(a^j_i)\subseteq \{a^{i'}_{j'} : i\neq i', j\neq j'\}\cup V_i\cup V^j$. Observe that $V_i\cup V^j$ is a stable set. Hence there is some $a^{i'}_{j'}$ with $i\neq i', j\neq j'$ such that $a^{i'}_{j'}\in T$. Again, by definition, $a^{i'}_{j'}$ has no neighbor in $V_i\cup V^j$. Hence $T\subset A$. 

    Since $G[A]$ is a $L(K_{3,3})$ and since $\{a_1^1,a_1^2,a_1^3\}$ is a hitting set of $G[A]$ (see Figure~\ref{f:LK33}), $\{a_1^1,a_1^2,a_1^3\}$ is a hitting set of $G$. 
  \end{proofclaim}
  
  \begin{claim}
    If $G$ is a ring of five graph then $\Lambda(G)=3$.
  \end{claim}
  
  \begin{proofclaim}
    Assume that $G$ is a ring of five graph with vertices partitioned into $A,V_0, V_1,V_2,V_3,V_4,V_5$ as in the definition.
   
    We first show that every triangle in $G$ is a triangle in $G[A]$. Let $T$ be a triangle in $G$. Suppose, by contradiction, that there exists $i\in \{0,\dots,5\}$ such that $T\cap V_i\neq \emptyset$. Since $V_i$ is a stable set, $|T\cap V_i|=1$. Since $N(V_0)\subseteq \{b_1,\dots, b_5\}$ and since $\{b_1,\dots, b_5\}$ is a stable set, $i\neq 0$. Hence $N(V_i)\subseteq \{a_{i-1},b_i,a_{i+1}\}\cup V_{i-1}\cup V_{i+1}$. By definition, $\{a_{i-1},b_i,a_{i+1}\}\cup V_{i-1}\cup V_{i+1}$ is a stable set, a contradiction to the existence of $T$. Therefore every triangle in $G$ is a triangle in $G[A]$.

    Since $G[A]$ is a core ring of five and since $\{a_1^1,a_1^2,a_1^3\}$ is a hitting set of $G[A]$ (see Figure~\ref{f:corering5}), $\{a_1^1,a_1^2,a_1^3\}$ is a hitting set of $G$. 
  \end{proofclaim}
  
 Hence, we proved that if $G$ is not 3-colorable then either $G$ is not 3-substantial, or a ring of five graph, or a mantled $L(K_{3,3})$ and $G$ has a hitting set of size 3, or $G$ is or a cycle of triangles graph and has a hitting set of size $5$. 
\end{proof}

\section{Main Results}\label{s:Lafin}
In this section, we proveTheorem~\ref{t:la_total} and explain how to use a polynomial algorithm from the literature to solve the clique covering problem for co-bridge-free prismatic graphs.

Preissmann et al.~\cite{preissmann_complexity_2021} proved the following: 

 \begin{theorem}[Theorem~4.1 in \cite{preissmann_complexity_2021}]\label{l:27_ou_5}
    If $G$ is a non-orientable prismatic graph then $G$ admits a hitting set of cardinality at most $5$ or $G$ is a Schl\"afli-prismatic graph. 
 \end{theorem}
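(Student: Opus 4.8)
The plan is to prove the equivalent statement that every non-orientable prismatic graph that is \emph{not} Schl\"afli-prismatic admits a hitting set of size at most $5$, organising the argument around the notion of substantiality introduced in Section~\ref{s:Struct}. The engine is the following clean dichotomy: if $G$ is not $6$-substantial, then by definition there is a set $S$ with $|S|<6$ meeting every triangle of $G$, so $S$ is a hitting set with $|S|\le 5$ and $\Lambda(G)\le 5$, as required. Hence the whole difficulty concentrates in the complementary case, and the target reduces to a single implication: \emph{every $6$-substantial non-orientable prismatic graph is Schl\"afli-prismatic}, i.e.\ is isomorphic to an induced subgraph of the complement of the Schl\"afli graph of Figure~\ref{f:schafli}.

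To establish this implication I would appeal to the structural description of non-orientable prismatic graphs of Chudnovsky and Seymour in~\cite{ChudAndSey2}, in the same spirit as the orientable analysis of Sections~\ref{s:Cycle_Path}--\ref{s:non3colo}. The key point is that non-orientability is a rigid global constraint: whereas orientable prismatic graphs admit the unbounded ``stacking'' of triangles exhibited by path and cycle of triangles graphs, the triangle structure of a non-orientable prismatic graph must close up consistently, and this is exactly the combinatorics realised by the three coordinate families $r^i_j,s^i_j,t^i_j$ of the Schl\"afli complement. Concretely, $6$-substantiality guarantees a rich supply of triangles, and hence of prisms via the remark following Figure~\ref{f:prism}; I would use the defining prismatic property---that every vertex off a triangle has a \emph{unique} neighbour on it---to erect a coordinate system on these triangles. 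One then checks the three adjacency rules defining the Schl\"afli complement, the twisted rule ``$r^i_j$ adjacent to $s^{i'}_{j'}$ when $j=i'$'' being precisely the pattern forced by the absence of a global orientation. This is where non-orientability is used decisively.

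The main obstacle, and the bulk of the work in~\cite{preissmann_complexity_2021}, is twofold. First, one must rule out any vertex-multiplication (mantling): since the Schl\"afli-prismatic class consists only of induced subgraphs of a fixed $27$-vertex graph, a $6$-substantial non-orientable prismatic graph cannot contain two distinct vertices sharing a coordinate, for such a pair---stable, since each coordinate class is a stable set---would, together with a suitable triangle or $C_4$ drawn from the other coordinate families, either violate the unique-neighbour condition or break the consistency of the orientation. Controlling these degenerate configurations is what pins $|V(G)|\le 27$ and forces an honest embedding into the Schl\"afli complement rather than a blown-up version of it; I expect this to be the hardest and most case-heavy part. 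Second comes the finite but delicate verification that the constructed map respects every adjacency, carried out explicitly with the labels $r^i_j,s^i_j,t^i_j$. Once the embedding is established, $G$ is Schl\"afli-prismatic by definition and the dichotomy follows. As a sanity check, the smallest hitting set of the whole Schl\"afli complement has size $10$ (\cite{preissmann_complexity_2021}, Lemma~3.6), comfortably above the threshold $5$, so the two alternatives are genuinely distinct and the substantiality cut-off at $6$ is the natural place to separate them.
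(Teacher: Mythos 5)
First, a point of reference: the paper does not prove this statement at all --- it is imported verbatim as Theorem~4.1 of \cite{preissmann_complexity_2021} and used as a black box --- so there is no internal proof to compare yours against; what follows judges your sketch on its own terms and against the cited proof.

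Your opening reduction is correct and clean: having a hitting set of cardinality at most $5$ is exactly the negation of $6$-substantiality, so the theorem is equivalent to the single implication that every $6$-substantial non-orientable prismatic graph is Schl\"afli-prismatic. The gap is everything after that. You propose to ``erect a coordinate system'' on the triangles and verify the three adjacency rules of the complement of the Schl\"afli graph directly; this amounts to re-deriving the structure theory of non-orientable prismatic graphs of \cite{ChudAndSey2} from first principles, and none of it is actually carried out. The two steps you yourself flag as ``the bulk of the work'' and ``the hardest and most case-heavy part'' --- excluding multiplied vertices sharing a coordinate, and verifying every adjacency of the claimed embedding --- are described, not performed, and the assertion that the twisted adjacency rule ``is precisely the pattern forced by the absence of a global orientation'' is offered without any argument. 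So the only nontrivial implication of the theorem is left unproved. For the record, the proof in \cite{preissmann_complexity_2021} does not build coordinates from scratch: it takes Chudnovsky and Seymour's classification of non-orientable prismatic graphs as given --- a finite list of basic classes together with multiplication-type operations --- and checks class by class that each member either is an induced subgraph of the complement of the Schl\"afli graph or admits a hitting set of size at most $5$. To make your plan into a proof you would have to either reproduce that case analysis or supply the missing coordinatization in full. (A small further quibble: the two alternatives of the theorem are not mutually exclusive --- a small Schl\"afli-prismatic graph certainly has a hitting set of size at most $5$ --- so your closing ``sanity check'' only shows that the full $27$-vertex graph lies on one side, not that $6$-substantiality is the exact dividing line.)
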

 
 Theorem~\ref{t:la_total} follows directly from theorems formerly exhibited in this article and Theorem~\ref{l:27_ou_5}. 
\Main*

\begin{proof}
By the definition, every co-bridge-free prismatic graph is either orientable or non-orientable. 

If $G$ is non-orientable, then the result holds by Theorem~\ref{l:27_ou_5}.

If $G$ is orientable then either $G$ is 3-colorable and the results holds by Theorem~\ref{t:3_colorable}, or $G$ is not 3-colorable and the results holds by Theorem~\ref{t:non3col}.
\end{proof}

Again, in \cite{preissmann_complexity_2021}, Preissmann et al. showed how the presence of a hitting set with cardinality bounded by a constant can be used to solve the clique cover problem for prismatic graphs. We restate here the main idea and invite the reader to directly rely on Section $4$ in~\cite{preissmann_complexity_2021} for more details. 

Solving the clique cover problem in triangle-free graph consists of computing a matching of maximum cardinality that can be done using Edmonds' algorithm. Preissmann et al. proved the following: 

 \begin{lemma}[Lemma~4.2 in \cite{preissmann_complexity_2021}]\label{supersuper}
 	Let $G$ be a $\{\mbox{diamond}, K_4\}$-free graph.
 	There is an algorithm finding a hitting set of $G$ of cardinality at most $5$ if such a set exists and answering “no" otherwise. 
 	This algorithm has complexity ${\cal O}(n^7)$.
\end{lemma}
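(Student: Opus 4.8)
The plan is to use the two forbidden subgraphs to bound the number of triangles, and then to search exhaustively over all small candidate hitting sets. First I would establish the key structural fact: in a $\{\mbox{diamond}, K_4\}$-free graph, \emph{every edge lies in at most one triangle}. Indeed, if an edge $uv$ were contained in two distinct triangles $uvw$ and $uvx$, then $\{u,v,w,x\}$ would induce a $K_4$ when $wx\in E(G)$ and a diamond otherwise; both are excluded. Equivalently, any two distinct triangles meet in at most one vertex, so the triangles are pairwise edge-disjoint. Hence, if $G$ has $t$ triangles, these use $3t$ distinct edges and $3t\le |E(G)|$, giving $t\le |E(G)|/3 = O(n^2)$.

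The second step is to enumerate all triangles explicitly. For each of the $O(n^2)$ edges $uv$ I would scan the $O(n)$ remaining vertices for a common neighbor; by the structural fact there is at most one such vertex, which I record as the unique triangle on $uv$ (if any). This yields the complete list $\mathcal{T}$ of triangles, with $|\mathcal{T}| = O(n^2)$, in time $O(n^3)$.

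The third step is the search itself. If $\mathcal{T}=\emptyset$ then $G$ is triangle-free and I return $\emptyset$. Otherwise I enumerate every subset $S\subseteq V(G)$ with $|S|\le 5$; there are $\sum_{i=0}^{5}\binom{n}{i} = O(n^5)$ of them. For each such $S$ I store its elements in a structure allowing $O(1)$ membership tests, then sweep through the $O(n^2)$ triangles of $\mathcal{T}$, checking in constant time whether each is met by $S$. A set $S$ is a hitting set of the triangles if and only if it meets every member of $\mathcal{T}$, so the test is correct; I return the first $S$ that passes and answer ``no'' if none does. Each test costs $O(n^2)$, so the search costs $O(n^5\cdot n^2)=O(n^7)$, which dominates the $O(n^3)$ preprocessing and gives the claimed bound.

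The crux of the argument, and the only place the hypothesis is used, is the first step: the $\{\mbox{diamond}, K_4\}$-free condition is exactly what collapses the number of triangles from the generic $\Theta(n^3)$ down to $O(n^2)$. This is what keeps each candidate test at $O(n^2)$ rather than $O(n^3)$, hence the total at $O(n^7)$ instead of $O(n^8)$, and it is also what makes the triangle enumeration immediate since each edge determines at most one triangle. A bounded search tree that branches on the three vertices of a triangle (depth $5$, branching factor $3$) would actually run faster, but the exhaustive search above already meets the stated complexity and is the most transparent to verify, so I would present it as the main proof.
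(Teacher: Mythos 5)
Your proof is correct. Note that the paper itself does not prove this statement: it is imported verbatim as Lemma~4.2 of \cite{preissmann_complexity_2021}, so there is no in-paper argument to compare against. Your reasoning is nonetheless sound and self-contained: the observation that in a $\{\mbox{diamond}, K_4\}$-free graph every edge lies in at most one triangle (so triangles are pairwise edge-disjoint and there are ${\cal O}(n^2)$ of them) is exactly the structural fact the source relies on --- it is the same fact encoded by the matrix $T(G)$ that this paper recalls in Section~\ref{s:Lafin}, where each adjacent pair stores its unique common neighbor. Your enumeration of all $\sum_{i=0}^{5}\binom{n}{i}={\cal O}(n^5)$ candidate sets, each tested against the ${\cal O}(n^2)$ triangles, gives ${\cal O}(n^7)$ and generalizes to the ${\cal O}(n^{k+2})$ bound for hitting sets of size $k$ that the paper mentions after the lemma, so your route is essentially the intended one.
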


A careful reader can note that Preissmann et al. proved in \cite{preissmann_complexity_2021} that, given an integer $k\geq 3$, there is an algorithm in ${\cal O}(n^{k+2})$ that finds a hitting set of $G$ of size at most $k$ if and only if such hitting set exists. If $G$ is a prismatic graph, then it is in particular a graph in the family of $\{\mbox{diamond}, K_4\}$-free graphs, therefore this algorithm applies.  

Let $T(G)$ be the following variant of the adjacency matrix of $G$: for $v, w \in V(G)$, the entry $(v, w)$ of $T(G)$ is $0$ if $v$ and $w$ are not adjacent, $1$ if they are adjacent but without common neighbor, and $x$ if
they are adjacent and have $x$ as a common neighbor. Note that in the last case, if $G$ is diamond-free and $K_4$-free, then $x$ is unique. For any prismatic graphs $G$, $T(G)$ can be computed in ${\cal O}(n^3)$.

Preissmann et al. defined the following method and proved that it provides a minimum clique cover in ${\cal O}(n^{7.5})$: 

\vspace{2ex}

\textbf{Method from \cite{preissmann_complexity_2021}}
\begin{enumerate}
    \item Compute the matrix $T(G)$ as previously defined. This can be
     done in time ${\cal O}(n^3)$.

    \item Use the method from Lemma~\ref{supersuper} in time ${\cal O} (n^{7})$. If the algorithm outputs a hitting set of $G$ of size at most $5$ denoted by $S=\{s_1,\dots,s_{i^*}\}$ ($i^*\leq 5$) then go to Step~\ref{algo:oui}. Else by Theorem \ref{l:27_ou_5}, $G$ is a Schl\"afli-prismatic graph and go to Step~\ref{algo:no}.
     
    \item\label{algo:no} Since there is a bounded number of vertices in $G$ compute all possible clique covers of $G$ in constant time. Go to Step~\ref{algo:fin}.
     
    \item\label{algo:oui} Enumerate all sets of at most $5$ disjoint triangles of $G$. This can trivially be done in time ${\cal O} (n^{15})$ but we can do it in time ${\cal O} (n^{5})$ as follows:

    Compute the set $\mathcal T_i$ of triangles containing $s_i$ for each $1\le i \le i^*$. This can be done in ${\cal O} (n)$ by reading the line of $T(G)$ corresponding to $s_i$. Notice that there are at most $n/2$ triangles in each $\mathcal T_i$. Then compute all subsets $\mathcal T$ of triangles of $G$ obtained by choosing at most one triangle in each $\mathcal T_i$.
    
    For each such $\mathcal T$ which contains only pairwise vertex-disjoint triangles, compute by some classical algorithm a maximum matching $\mathcal M_{\mathcal T}$ of $G\sm (\cup_{T \in \mathcal T} T)$ and let $\mathcal R_{\mathcal T}$ be the vertices of $G$ that are neither in $\mathcal T$ nor in $\mathcal M_{\mathcal T}$. Notice that $\mathcal T\cup \mathcal M_{\mathcal T}\cup \mathcal R_{\mathcal T}$ is a clique cover of $G$. Go to Step~\ref{algo:fin}.

    \item\label{algo:fin} Among all the clique covers generated by the previous steps, let $\mathcal C^*$ be one of the smallest size. Return $\mathcal C^*$.
\end{enumerate}

We leave the reader to check that the proof used in \cite{preissmann_complexity_2021} for non-orientable prismatic graph, only relies on the arguments that the graph is prismatic and either has a hitting set of size at most $5$ or has at most 27 vertices (and is a Schl\"afli-prismatic graph). Therefore the same algorithm can be used for the co-bridge-free prismatic graphs. 

Using the previous remark, a careful reader of \cite{preissmann_complexity_2021} can note that, when restricted to prismatic graphs that admit a hitting set of size at most $k\geq 3$, the clique covering problem can be solved in ${\cal O}(n^{k+2.5})$. 

By bounding the size of the hitting set, we directly obtain that the clique covering problem is polynomial time solvable for co-bridge-free prismatic graphs. 
Therefore the vertex coloring problem for bridge-free antiprismatic graphs can be solved in ${\cal O}(n^{7.5})$.

\section{Conclusion}\label{s:conclusion}
In this article we show that there exists an algorithm that solves the vertex coloring problem for bridge-free antiprismatic graphs in ${\cal O}(n^{7.5})$. This is done by studying the equivalent problem: the clique-covering problem for co-bridge-free prismatic graphs.

We prove that every co-bridge-free prismatic graph admits a hitting set of bounded size or has a bounded number of vertices. Hence the algorithm presented by Preissmann, Robin and Trotignon applies.

The key of this algorithm is the existence of a hitting set of triangles of size bounded by a constant. Preissmann et al. showed that every non-orientable prismatic graphs admits a hitting set of triangles of size at most $5$ or it has at most $27$ vertices. The class of orientable prismatic graphs remains to be discussed. 

By restricting our study to co-bridge-free orientable prismatic graphs and using  Chudnovsky and Seymour's results, we prove that co-bridge-free prismatic graphs also admit a hitting set of size at most $5$ or have at most $27$ vertices. 

Carefully reading Chudnovsky and Seymour's description of this class shows that the difficulty is twofold: it lies in bounding both the size of a worn chain decomposition for 3-colorable orientable prismatic graphs and the size of a hitting set for the families of path and cycle of triangles graphs.

We see two main leads for future work on the complexity of the clique covering problem for prismatic graphs. One direction consists in forbidding other subgraphs for orientable prismatic graphs and show that it also yields bounds on the number of vertex-disjoint triangles. It would extend the subclasses of prismatic graphs for which there is a polynomial algorithm solving the clique covering problem. 
Another approach is to investigate fixed-parameter tractable algorithms with the size of a hitting set of triangles as a parameter. This lead is supported by the fact that Preissmann, Robin and Trotignon proved that the maximum number of disjoint triangles can be found in polynomial time.

\section*{Acknowledgement}
A special thanks to Taite LaGrange for her input on the bound for a hitting set of a path triangles graph during her internship at the Department of Computer Science at Wilfrid Laurier University in 2023. 

This project was partly funded by the GDR-IFM “doctoral student visits” program.
\bibliography{bibli}

\end{document}